\documentclass[aps, pra, 10pt, superscriptaddress, twocolumn, longbibliography, floatfix]{revtex4-1}

\usepackage[nointlimits]{amsmath}
\usepackage{amsthm}
\usepackage{graphicx,nicefrac}
\usepackage{subfigure}
\usepackage{color}
\usepackage{txfonts}
\usepackage{mathtools}
\usepackage{hyperref}
\usepackage{tabularx}
\hypersetup{colorlinks=true, citecolor=blue, linkcolor=blue}
 
\usepackage{physics}

\usepackage{amsfonts,amsmath,amssymb}
\usepackage{array,bm,color}
\usepackage{epsfig,graphicx,nomencl,revsymb4-1,upgreek,url}
\usepackage{hyperref}
\usepackage{algpseudocode}
\usepackage{xspace}
\usepackage{tabularx}
\usepackage[normalem]{ulem}
\usepackage{changes}
\usepackage{enumitem}
\usepackage{booktabs}
\usepackage{mathtools}

\usepackage{tikz}
\usepackage{quantikz}

\usepackage{algorithm}
\usepackage{algpseudocode}

\makeatletter
\DeclareFontFamily{OMX}{MnSymbolE}{}
\DeclareSymbolFont{MnLargeSymbols}{OMX}{MnSymbolE}{m}{n}
\SetSymbolFont{MnLargeSymbols}{bold}{OMX}{MnSymbolE}{b}{n}
\DeclareFontShape{OMX}{MnSymbolE}{m}{n}{
    <-6>  MnSymbolE5
   <6-7>  MnSymbolE6
   <7-8>  MnSymbolE7
   <8-9>  MnSymbolE8
   <9-10> MnSymbolE9
  <10-12> MnSymbolE10
  <12->   MnSymbolE12
}{}
\DeclareFontShape{OMX}{MnSymbolE}{b}{n}{
    <-6>  MnSymbolE-Bold5
   <6-7>  MnSymbolE-Bold6
   <7-8>  MnSymbolE-Bold7
   <8-9>  MnSymbolE-Bold8
   <9-10> MnSymbolE-Bold9
  <10-12> MnSymbolE-Bold10
  <12->   MnSymbolE-Bold12
}{}

\let\llangle\@undefined
\let\rrangle\@undefined
\DeclareMathDelimiter{\llangle}{\mathopen}%
                     {MnLargeSymbols}{'164}{MnLargeSymbols}{'164}
\DeclareMathDelimiter{\rrangle}{\mathclose}%
                     {MnLargeSymbols}{'171}{MnLargeSymbols}{'171}
\makeatother

\usepackage{tikz}
\usetikzlibrary{arrows}

\def\bra#1{{\langle #1 |}}
\def\ket#1{{| #1 \rangle}}

\newcommand{\eg}[1]{#1}

\newcommand{\pauli}{\mathbb{P}}

\newcommand{\toevent}{\Delta}
\newcommand{\eegev}{\mathcal{D}}
\newtheorem{lemma}{Lemma}
\newtheorem{definition}{Definition}

\newtheorem{theorem}[lemma]{Theorem}

\tikzset{
    vertex/.style={circle,draw,minimum size=1.5em},
    edge/.style={->,> = latex'}
}

\begin{document}
 %TC:ignore
 \title{Simulating Quantum Error Correction beyond Pauli Stochastic Errors}
\author{Jordan Hines}
\thanks{jhhines@sandia.gov}
\affiliation{Quantum Performance Laboratory, Sandia National Laboratories, Albuquerque, NM 87185}
\author{Corey Ostrove}
\affiliation{Quantum Performance Laboratory, Sandia National Laboratories, Albuquerque, NM 87185}
\author{Kenneth Rudinger}
\affiliation{Quantum Performance Laboratory, Sandia National Laboratories, Albuquerque, NM 87185}
\author{Stefan Seritan}
\affiliation{Quantum Performance Laboratory, Sandia National Laboratories, Livermore, CA 94550}
\author{Kevin Young}
\affiliation{Quantum Performance Laboratory, Sandia National Laboratories, Livermore, CA 94550}
\author{Robin Blume-Kohout}
\affiliation{Quantum Performance Laboratory, Sandia National Laboratories, Albuquerque, NM 87185}
\author{Timothy Proctor}
\affiliation{Quantum Performance Laboratory, Sandia National Laboratories, Livermore, CA 94550}

\begin{abstract} 
Quantum error correction (QEC), the lynchpin of fault-tolerant quantum computing (FTQC), is designed and validated against well-behaved \textit{Pauli stochastic} error models.  But in real-world deployment, QEC protocols encounter a vast array of other errors---coherent and non-Pauli errors---whose impacts on quantum circuits are vastly different than those of stochastic Pauli errors. The impacts of these errors on QEC and FTQC protocols have been largely unpredictable to date due to exponential classical simulation cost. 
Here, we show how to accurately and efficiently model the effects of coherent and non-Pauli errors on FTQC, and we study the effects of such errors on syndrome extraction for surface and bivariate bicycle codes, and on magic state cultivation. Our analysis suggests that coherent error can shift fault-tolerance thresholds, increase the space-time cost of magic state cultivation, and can increase logical error rates by an order of magnitude compared to equivalent stochastic errors.  
These analyses are enabled by a new technique for mapping any Markovian circuit-level error model with sufficiently small error rates onto a detector error model (DEM) for an FTQC  circuit. The resulting DEM enables Monte Carlo estimation of logical error rates and noise-adapted decoding, and its parameters can be analytically related to the underlying physical noise parameters to enable approximate strong simulation. 
\end{abstract}
 %TC:endignore
\maketitle
\section*{Introduction}

Achieving the computational speedups that quantum algorithms promise for classically-intractable problems \cite{gidney2025factor, baczewski2024stopping, troyer2017reaction, low2025simulations} will require extraordinarily high-fidelity logic operations \cite{campbell2017faulttolerant, fowler2012surface, Proctor2025} achievable only in fault-tolerant quantum computing (FTQC) enabled by quantum error correction (QEC) \cite{shor, steane1996, knill2000errorcorrection, aharonov1997faulttolerant}, which has now been demonstrated at the smallest scales on a growing variety of quantum processors \cite{Acharya2025, Bluvstein2026, rosenfeld2025cultivation, ryananderson2024teleportation, lacroix2024color, SalesRodriguez2025}.  Fault-tolerant logic operations on encoded logical qubits are carefully designed to catch and mitigate the effects of combinations of errors that are \textit{expected} to occur during the operation (typically, errors on a small number of qubits), so that logical operations have higher fidelity than physical operations. However, QEC and fault-tolerant operations are designed and assessed based on their performance against \textit{simple} errors---specifically \textit{Pauli stochastic} errors---whose impacts on QEC can be simulated efficiently. In contrast, errors on real-world qubits and control operations  are \emph{not} Pauli stochastic \cite{Mądzik2022, Blume-Kohout2017, Carignan_Dugas_2024, rudinger2025heisenberg, gross2024coherent} (unless they are forced into Pauli-stochastic form at the cost of control complexity \cite{wallman2016tailoring}). How well QEC and fault-tolerant operations perform against realistic circuit-level errors, especially \textit{coherent} errors produced by miscalibrated Hamiltonians, remains poorly understood, despite results indicating that non-Pauli-stochastic errors can impact QEC very differently from standard Pauli stochastic models \cite{miller2025simulation, leblond2025logical, myers2025simulating, takou2025coherent, tuloup2026computing}, because simulating arbitrary unitary evolution on many qubits is famously hard.

Here, we remove this barrier and determine the effects of general circuit-level error models on key FTQC primitives including syndrome extraction (for surface codes and bivariate bicycle codes) and magic state cultivation.  We find that the impact of coherent errors varies dramatically with context.  In the worst case, coherent errors add up constructively within syndrome extraction to cause logical failure rates more than $8\times$ higher than equivalent stochastic Pauli errors and shift code thresholds. In other contexts, such as surface code decoding, the coherence of errors appears to have little impact on decoder performance, although decoders customized to coherent errors using our simulation and modeling tools do achieve better logical error rates than decoders based on Pauli-twirled models in the low-distance, high-infidelity regime. We find that certain coherent two-qubit gate errors increase the spacetime cost of magic state cultivation \cite{gidney2024magic, rosenfeld2025cultivation} relative to stochastic Pauli error of analogous strength, demonstrating that FTQC resource requirements depend not just on the strength of errors, but on their precise form.

\begin{figure*}[t!]
    \centering
    \includegraphics[width=\linewidth]{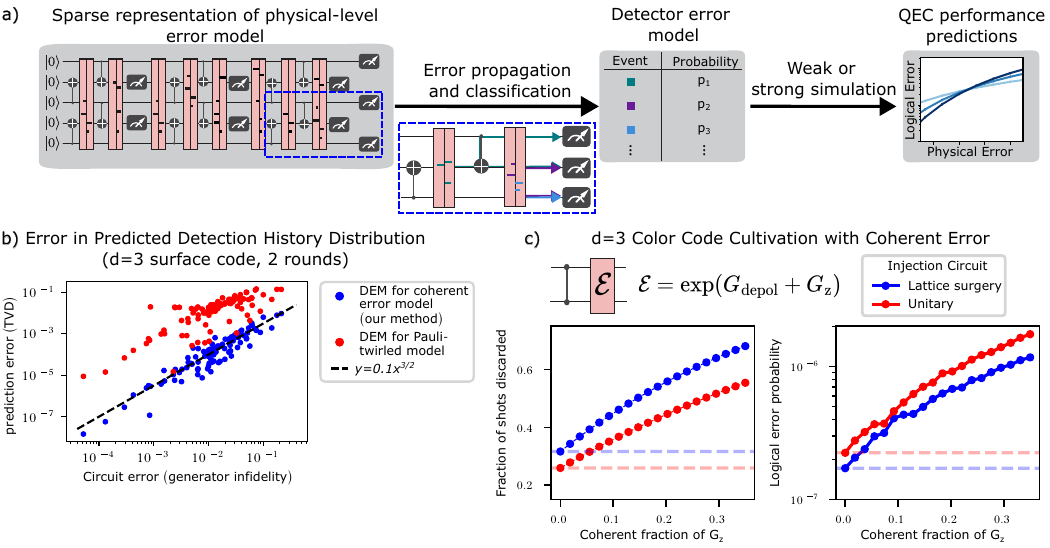}
    \caption{\textbf{Efficient prediction of quantum error correction circuits with non-Pauli errors.} (a) We use an efficient method for propagating errors through Clifford circuits to determine how non-Pauli errors impact a QEC circuit. This allows us to (approximately) map errors onto events in a detector error model, which can be efficiently sampled from to predict logical performance. (b) Our method outperforms DEMs generated from Pauli-twirled error models in predicting detection history distributions by roughly 100x (quantified by TVD between the predicted and true distributions). (c) We simulated the injection of an $S$ state (as a proxy for a $T$ state) followed by a double logical $H$ check in a $d=3$ color code with error models with the same gate infidelities, but different relative contributions of stochastic and coherent errors. We include depolarizing error plus additional error on each gate, and the dominant source of non-depolarizing error in these simulations was $ZZ$ error on $CZ$ gates. As the contribution of coherent error to the gate error increases, the discard rate and logical error probability both increase relative to a Pauli stochastic model (dashed lines).}
    \label{fig:fig1_new}
\end{figure*}

These analyses are enabled by a new, highly-flexible simulation toolchain that produces \emph{detector error models} (DEMs) \cite{blumekohout2025detector, arms2025estimating, derks2024designing, gidney2021stim} for FTQC gadgets experiencing general Markovian errors, including coherent errors.  DEMs enable rapid simulation of QEC and noise-informed decoding, both for logical memory \cite{Terhal_2015} and for key elements of FTQC like logical gates or magic state production \cite{gidney2024cultivation, bravyi2005distillation}. They compress circuit-level error models into a compact probabilistic model of QEC syndrome measurement and logical observable bits, which models exactly (and only) the effects needed to simulate and assess QEC performance. DEMs can be efficiently \textit{inferred} (sometimes with large sampling requirements) from experiments involving coherent errors \cite{blumekohout2025detector, takou2025estimating, arms2025estimating}, but existing methods for \textit{constructing} DEMs from physical-level models are limited to Pauli stochastic error models \cite{gidney2021stim, derks2024designing} or brute-force simulation \cite{takou2025coherent}, which severely limits their ability to describe realistic errors' impact on large-scale QEC.  

To break this barrier and study the effects of general errors on QEC, we built on recent work in perturbative simulation of error generator models \cite{miller2025simulation, blumekohout2022taxonomy} to create an efficient perturbative algorithm for computing DEMs that model arbitrary small Markovian errors in non-adaptive circuits. The algorithm is computationally efficient and scalable to high-distance QEC codes for realistic error models. This enables accurate, interpretable modeling and simulation of QEC with hardware-tailored error models. 

\section*{Results}
\subsection*{Efficient Construction of Models for QEC Circuits}
Not every parameter in a physical-level error model is needed to predict the outcome distribution of QEC circuits. The key insight we use in our approach is that most parameters in these models can be removed systematically by considering the \emph{perturbative} impacts of errors on measurement results. We compute these impacts and use them to construct a \emph{detector error model (DEM)}, which models only the outcomes of \emph{detectors}, Pauli observable measurements that have a definite outcome $(\pm 1)$ in the absence of error. A DEM is a probabilistic model consisting of independent \emph{DEM events}, each of which is set of detectors that flip simultaneously with a specified probability. 

Existing approaches to constructing DEMs build them from Pauli stochastic models for physical-qubit operations. Our method starts with errors on physical qubits represented as fully general completely positive trace-preserving (CPTP) maps, covering a much larger range of errors. We represent such errors as $\mathcal{E}=\exp(G)$, where the error's \emph{generator} $G$ is represented in the \emph{elementary error generator (EEG)} basis. The EEG basis has two properties critical to our method. First, errors in quantum processors are typically described by generators that are \emph{sparse} in the EEG basis---i.e., the number of EEGs needed to represent the generator is much less than $16^n$ for a system of $n$ physical qubits---making errors efficient to represent. Second, EEGs can be efficiently propagated through Clifford circuits  \cite{miller2025simulation} with mid-circuit measurements (see Methods). These properties allow us to efficiently propagate the errors in our model [Fig~\ref{fig:fig1_new}(a)] like Pauli errors. We then approximately combine all errors in the circuit into a single \emph{circuit error channel} $\mathcal{E}_c=\exp(G_{c})$, which captures the effects of accumulation of error (e.g.,  coherent addition/cancellation of errors). 

The circuit's error generator can be decomposed as a linear combination of components, each of which causes a \emph{single} DEM event, i.e., $G_{c} = G_1 + G_2 + \cdots + G_k$ where the effects of $\exp(G_i)$ on the circuit are exactly modeled by a DEM with a single event $D_i$. Identifying these components is efficient in the EEG basis---it is done by sorting EEGs into \emph{DEM event classes}, which are determined by commutation relations between an EEG's Pauli indices and the circuit's detectors (see Methods). We can then approximate $\mathcal{E}_c$ as a product of channels with the generators $G_1, G_2, \dots, G_k$ using the Zassenhaus expansion \cite{Wang_2019} to $j$th order: 
\begin{align}
    \mathcal{E}_{c} \approx \exp(G_1)\exp(G_2)\cdots \exp(G_k)\prod_{i=2}^j W_i, \label{eq:event_zassenhaus2}
\end{align}
where $W_i$ denotes the $i$th order Zassenhaus term, and each $W_i$ for $i>1$ can also be decomposed into components only impacting a single DEM event. We then estimate the rate of each DEM event by estimating detector flip probabilities from the channels $\exp(G_i)$ (see Methods).

\subsection*{Surface Code Syndrome Extraction with Coherent Error}
QEC is typically simulated using Pauli error models---frequently constructed via Pauli twirling \cite{silva2008correctable}---as proxies for more general error models.
We studied the consequences of using Pauli-twirled models to predict syndrome and logical observable measurements using our method, which, in contrast to Pauli-twirled models, accounts for coherent accumulation of error. We simulated $2$ rounds of $d=3$ rotated surface code syndrome extraction with sparse coherent error models. 
This example is small enough that a statevector simulation can compute the true probabilities of detection histories. We found that our method predicts the probability distribution over syndromes and logical observables much more accurately than Pauli-twirled models in the presence of coherent errors [Fig.~\ref{fig:fig1_new}(b)]. The total variation distance (TVD) between the distribution predicted by our method and the true distribution is typically 1-2 orders of magnitude smaller than the TVD between the distribution predicted from the Pauli-twirled model and the true distribution.

At leading order, our DEMs typically estimate the probabilities of the dominant detection histories to within $1\%$ relative precision, and many probabilities are estimated to within $0.1\%$ relative precision (see Appendix~\ref{app:sims}). The error in our DEMs' predictions of the probability distribution (as measured by TVD) scales as $\epsilon_{\textrm{gen}}^{1.5}= O(\epsilon_{h_P}^3)$ [Fig.~\ref{fig:fig1_new}], where $\epsilon_{\textrm{gen}}$ is the \emph{generator fidelity} of the circuit error, which is defined as 
\begin{equation}
\label{eq:epsilon_ham}
    \epsilon_{\textrm{gen}} = \sum_{P \in \mathbb{G}_{H}} \epsilon_{h_P}^2,
\end{equation}  
where the circuit error generator is
\begin{equation}
\label{eq:eeg_ham}
    G_{c} = \sum_{P \in \mathbb{G}_H} \epsilon_{h_P} H_P,
\end{equation}  
where $\mathbb{G}_{H}$ is the set of Hamiltonian EEGs in the circuit error generator. Note that Eqs.~\ref{eq:epsilon_ham} and~\ref{eq:eeg_ham} are specific to purely coherent error models. In Appendix~\ref{app:sims}, we further validate the accuracy of our DEMs for error models with more general sparse Markovian error models.

\subsection*{Determining the Scaling of Logical Memory Error Rates}

The Difference in detection probabilities between coherent and stochastic error models suggests that these physical error models may lead to different logical qubit performance.
We the impact of coherent error and other non-Pauli errors on logical memory error rates in simulations of rotated surface code syndrome extraction, using minimum-weight perfect matching (MWPM) decoding. Across only $10$ randomly-sampled sparse CPTP error models (see Appendix~\ref{app:sims} for details), we saw a variation of over $0.002$ in the threshold CNOT infidelity [Fig.~\ref{fig:surface_thresholds}(a)]. We obtained these results for families of error models in which all stochastic, active, and Pauli-correlation error rates are uniformly scaled by a factor of $p$ and all Hamiltonian rates are uniformly scaled by a factor of $\sqrt{p}$. We also found that the threshold can vary drastically across models that have different amounts of coherent error, but map to the same Pauli-twirled model. Fig.~\ref{fig:surface_thresholds}(b) shows the logical error probability of surface codes memory circuits for distances $d=3,5,7$ for two error models where each gate has fixed generator infidelity,
\begin{equation}
\label{eq:epsilon_gen}
    \epsilon_{\textrm{gen}} = \sum_{P \in \mathbb{G}_{H}} h_P^2 + \sum_{P \in \mathbb{G}_{S}} s_P,
\end{equation}  
where $\mathbb{G}_{H}$ and $\mathbb{G}_S$ are the set of Hamiltonian and stochastic Pauli EEGs in the models, respectively, but the models have different relative contributions of coherent and stochastic error to the generator infidelity (see Appendix~\ref{app:sims} for details). We found that models with the same CNOT generator infidelity, but higher contribution from coherent error have a higher logical error rate, particularly in the in the near-threshold regime.
The threshold occurs at a CNOT infidelity of approximately $0.006$ in the error model with a large contribution to generator infidelity from coherent error ($75\%$ of the CNOT generator infidelity being from coherent errors) and a CNOT infidelity of approximately $0.012$ in a purely stochastic version of the error model---indicating that the specifics of the error is important for quantifying how far below threshold a system is.  

Logical error rates are decoder-dependent, and the decoder itself can be tailored to the errors on hardware using a DEM. We explore the impact of using an error-tailored DEM on surface code decoding by comparing MWPM decoders with two different weightings----one given by a DEM generated via our method, and the other given by a DEM produced from a Pauli-twirled error model for the gates [Fig~\ref{fig:surface_thresholds}(d)]. The DEM generated via our method results in slightly lower logical error rates for high CNOT infidelities (particularly those above threshold) and $d=3$, but it led to no significant logical error improvement over decoding based on the Pauli-twirled DEM at lower CNOT infidelities or higher code distances---suggesting that differences in DEM event rates captured by our DEMs may not lead to significant changes in surface code decoding in the high-distance, sub-threshold regime required for large-scale fault-tolerant quantum computations.

\begin{figure}
    \centering
    \includegraphics{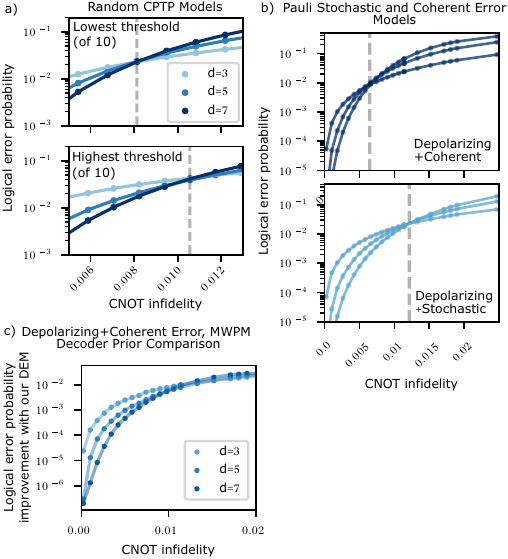}
    \caption{\textbf{Surface Code logical performance with coherent and Pauli stochastic error.} (a) We simulated surface code syndrome extraction with 10 random sparse CPTP error models. Here, we show the results from the models with the lowest and highest thresholds. (b) We simulated models with strong intrinsic ($ZX$, $IX$, $ZI$) CNOT error, but with different amounts of stochastic ($S_P$) and coherent ($H_P$) error. Error models with larger contributions from coherent error exhibit a lower threshold CNOT infidelity than the purely stochastic error model.  (c) We compared two decoder priors for MWPM in our simulation with coherent error plus depolarizing error---priors from a DEM derived using our method and priors from a DEM derived from the Pauli-twirled version of the error model. We see a significant decrease in logical error probability with the DEM derived from our method at CNOT infidelities above threshold, but this difference rapidly drops off below threshold.}
    \label{fig:surface_thresholds}
\end{figure}

The above threshold studies utilize single-parameter families of error models, where the strength of all errors is determined by a single scaling parameter. However, sources of physical-level error do not all change uniformly in practice, and can often be varied independently or in small groups.
We explore the interaction of independent changes in two coherent error parameters on different physical-level gates in syndrome extraction for the  $[[144,12,12]]$ gross code \cite{bravyi2024memory} with 2 rounds of syndrome extraction and the Beam decoder, a belief-propagation-based decoder \cite{ye2025beam}. In our error model, we include coherent $H_X$ errors on every idling qubit during mid-circuit measurements, and we include $H_{XX}$ error after every CNOT gate. Fig~\ref{fig:bb} shows the scaling of the logical error rate with these two coherent error parameters. 
Coherent errors in this error model can cancel within syndrome extraction, resulting in fewer detection events [Fig~\ref{fig:bb}(b)] than predicted by the Pauli-twirled model---we observe that when the coherent idle error rate is $h_x \approx 0.0032$, a lower logical error rate is achieved when the CNOT $H_{XX}$ EEG rate is a small, negative value $h_{xx} \approx -0.013$ than when there is no CNOT coherent error---and in this parameter regime, the Pauli-twirled model \emph{overestimates} the logical error rate (for the values we simulated, we observe underestimates up to $24\%$). We also find that using a decoder with priors determined by the Pauli-twirled model does not significantly impact logical error rates for these error models (see Appendix~\ref{app:sims}).

\begin{figure}
    \centering
    \includegraphics{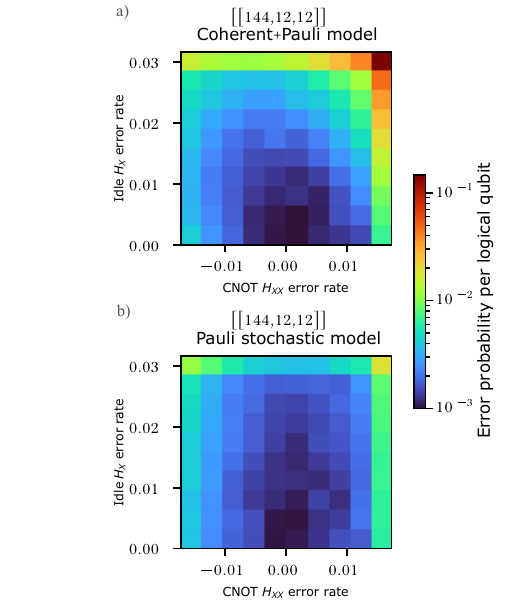}
    \caption{\textbf{Scaling of gross code logical error probability with coherent error.} We simulate the [[144,12,12]] gross code (decoding using the Beam Decoder \cite{ye2025beam}), changing the strengths of two coherent errors. (a) The logical error rate per qubit is minimized at a nonzero $H_{XX}$ error rate for the CNOT gates, with the location of the minimum being dependent on the rate of the idle $H_X$ error. (b) In contrast, in a Pauli-twirled analogue of the model, the $H_{XX}$ error rate minimizing the logical error rate does not shift with the $H_X$ idle error rate. In the regime where physical-level errors cancel, the Pauli-twirled model overestimates the logical error rate.}
    \label{fig:bb}
\end{figure}

\subsection*{Magic State Cultivation with Coherent Error}

\begin{figure*}
    \centering
    \includegraphics{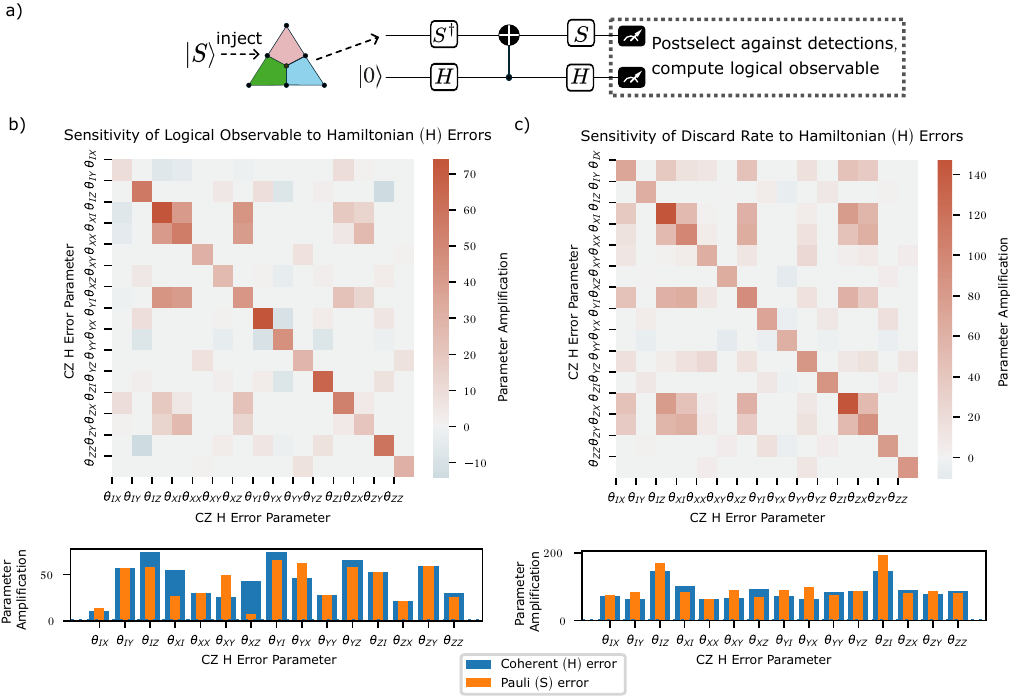}
    \caption{\textbf{Magic State Cultivation with Coherent Error.} (a) High-level diagram of the cultivation protocol we simulated. We simulated the injection of an $S$ state (as a proxy for a $T$ state) followed by a double logical $H$ check in a $d=3$ color code. We model each $CZ$ gate as having a fixed infidelity of 0.00165, and we vary the relative contributions of stochastic and coherent $ZI$, $IZ$, and $ZZ$ error on CZ gates and Z error on single-qubit gates gates. (b) The sensitivity of the rate of postselection to coherent error at leading order is quadratic in the $H$ error parameters described by its sensitivity matrix. (c) The end-of-circuit logical observable is is particularly sensitive to $H_{IY}$ and $H_{YI}$ errors. Furthermore, it is more sensitive to $H_{YI}$ and $H_{ZZ}$ error than their $S$ analogues---suggesting that these $H$ errors lead to a higher logical error probability.}
    \label{fig:cultiv}
\end{figure*}

We analyzed the impact of coherent errors on magic state cultivation \cite{gidney2024cultivation, itogawa2025zero, Chamberland2020}, an efficient method for generating high-fidelity $T$ states, and found that that coherent errors can be significantly more detrimental than Pauli stochastic errors to the initial stages of cultivation. Specifically, we considered cultivation without escape into a high-distance code (i.e., injection followed by a double-check of the logical observable) [Fig.~\ref{fig:cultiv}a] using two different methods of state injection---unitary state injection \cite{itogawa2025zero} and lattice-surgery-based injection \cite{Chamberland2020}. 
Due to the substantial added complexity of simulating $T$ gates, here we analyze circuits $S$ state cultivation, a Clifford proxy for $T$ state cultivation with the same circuit structure. 
We simulated cultivation with depolarizing error on both single- and two-qubit gates, plus additional Z-type ($IZ$, $ZI$, and $ZZ$) errors (with total generator infidelity of 0.0012) on the two-qubit gates ($CZ$) and additional $Z$ errors on the single-qubit gates (with total generator infidelity of 0.00024) [Fig.~\ref{fig:fig1_new}c], and we varied the relative contribution of coherent and stochastic error to the $IZ$, $ZI$ and $ZZ$ errors, while keeping the infidelity of the $CZ$ gates approximately fixed (at $0.000165$). The fraction of shots kept after postselection decreases for circuits using both styles of state injection as the rate of coherent error increases [Fig~\ref{fig:fig1_new}c], indicating detrimental coherent error amplification throughout the circuit. Similarly, the logical error probability increases by approximately a factor of $10$ between purely stochastic error and $35\%$ of the additional Z-type error being coherent error. Additionally, the two injection circuits we studied each exhibit different benefits. Unitary injection results in a lower discard rate but higher logical error probability than the lattice-surgery-based circuit. Additionally, the difference in discard fraction between the two circuits becomes larger with higher amounts of coherent error.

To better understand the sensitivity of cultivation to CZ gate coherent errors, we computed explicit functional relationships between the coherent error rates of the gates in our physical-level models and the detector Pauli observables used in our DEMs. At leading order, $\langle P \rangle = 1 - 2\bar{p}_{P}$, where $\bar{p}_{P}$ denotes the probability of a DEM event that flips detector $P$. Let $\theta$ be a vector of all of the parameters in our error generator model---i.e., the coherent error rates for each EEG used in the error model for each physical-qubit gate ($H$ and $CZ$).  At leading order, the detector expectation value is quadratic in the physical coherent error rates, i.e., $\langle P \rangle  = 1 + \boldsymbol{\theta}^T S_P \boldsymbol{\theta}$ (see Methods). $S_P$ is a \emph{sensitivity matrix} encoding how $\langle P \rangle$ depends on the coherent error parameters. Restricting our attention to cultivation with unitary injection, in Figure~\ref{fig:cultiv}(b), we show this sensitivity matrix for the logical observable in a $d=3$ color code cultivation circuit with no escape stage. The logical observable is especially sensitive to the $H_{XZ}$ and $H_{IZ}$ errors on the CZ gate, it has relatively low sensitivity to $H_{IX}$ and $H_{ZX}$ errors, and it has low sensitivity to cross-terms involving distinct $H$-type EEGs. We compare the sensitivity of our circuit to an error generator term $\epsilon_{H_P} H_P$ to the approximate Pauli-twirled equivalent $\theta^2_{CZ, H_P} S_P$, where  $\theta_{CZ, H_P}$ denotes the rate of the $H_P$ error on the $CZ$ gate [Figure~\ref{fig:cultiv}(b), bottom] and see that the detector is $15\%$ more sensitive to $H_{ZZ}$ error on $CZ$ gates than $S_{ZZ}$ error on $CZ$ gates, and it is $28\%$ more sensitive to $H_{IZ}$ error than $S_{IZ}$ error---suggesting that it contributes to the increase in logical error probability we observed. 

We also compute the relationship between the rate of postselection in cultivation and physical coherent error parameters [Fig.~\ref{fig:cultiv}(c)], finding that it is $4.8\%$ more slightly sensitive to coherent $ZZ$ error than stochastic Pauli $ZZ$ error.
In contrast, this detector is undersensitive to coherent $IZ$ and $ZI$ errors relative to the analogous Pauli stochastic errors (by $14\%$ and $25\%$, respectively)---which counteracts the amplification of coherent $ZZ$ errors, but is a subdominant effect in our simulations since these errors are significantly smaller in out models (see Appendix~\ref{app:sims}). Our analysis yields insights about errors not included in our simulations as well---$H_{XI}$ and $H_{XZ}$ are more strongly amplified relative to stochastic errors in our circuits than $H_{ZZ}$, but their absolute level of impact of the circuit is much smaller. Similarly, $Y$-type Hamiltonian errors ($YI$, $IY$, and $YY$) all have a lower impact than their Pauli stochastic analogues on the rate of postselection---though, interestingly, the logical observable is more sensitive to Hamiltonian $YI$ error than Pauli stochastic $YI$ error.

\section*{Discussion}

Precise predictions of FTQC performance require compact, scalable models that accurately capture the dynamics that emerge from errors on the physical gates. We have shown how to construct such models for physical gates with arbitrary small Markovian errors, and demonstrated that non-Pauli-stochastic errors can significantly change logical memory error rates, fault-tolerance thresholds, and the performance of magic state cultivation. Our approach enables accurate, hardware-specific performance predictions for fault-tolerant circuits using error models from detailed physical-level characterization \cite{Nielsen2020-rd, Blume-Kohout2017}. Furthermore, we are able to efficiently compute approximate functional relationships between errors event probabilities and physical-gate error parameters, providing a new, analytic approach for systematic tailoring of QEC to device-specific errors. We anticipate that these methods will provide an avenue for low-cost improvements to QEC that leverage the results of low-level physics modeling and physical-qubit characterization. 

We have only explored the impacts of a few types of error, leaving open the question of how other hardware non-idealities---including crosstalk error, spatially inhomogeneous error, and drift---impact FTQC primitives. Our method is readily capable of answering this question. We also anticipate our approach being extended to simulate more complex FTQC primitives than what we have studied here---in particular, protocols with active feedback---enabling assessment of many protocols for performing non-Clifford gates as well as real-time calibration. Furthremore, our method can be extended to include full simulation of physical non-Clifford gates, leveraging existing work reducing the cost of simulating circuits with few $T$ gates \cite{wan2026simulating} to mitigate the increased computational cost, addressing the question of how these gates change the implications of physical-level errors on FTQC relative to Clifford gates. 

Our results show that DEMs are capable of capturing the dominant (leading-order) dynamics in FTQC circuits, but they also point to limitations of DEMs caused by their inability to capture anticorrelations (unless negative event probabilities are allowed). Errors outside of our method's framework, such as leakage errors, can also cause anticorrelations, and are often highly detrimental to QEC \cite{Miao2023}. It is an interesting open question how these interact with other errors in FTQC and how to efficiently model the resulting dynamics.

\section*{Methods}
\subsection*{Modeling and Propagating Non-Pauli Errors in QEC Circuits}

We describe a circuit as a list of instructions to apply a series of operations, $C = L_kL_{k-1}\dots L_1$, where each $L_i$ is a \emph{layer} of instructions executed in parallel. In this work, we consider only circuits consisting of Clifford gates, stabilizer state preparation, and computational basis measurements, including mid-circuit measurements (MCMs). For simplicity, we assume that all MCMs include post-measurement reset of the state in the state $\ket{0}$. Our method also applies to circuits with non-Clifford operations, at the expense of increased computational cost.

We model the error in a circuit layer with no MCMs with a post-gate error channel, i.e., we represent an imperfect implementation of layer $L_i$ as $\mathcal{E}\mathcal{U}(L_i)$, where $\mathcal{U}(L_i)$ is the superoperator representation of the error-free implementation of $L_i$. For layers with MCMs, we model the error with both a post-layer and a pre-layer error channel. More general mid-circuit measurement noise can be introduced with the use of a gadget consisting of additional virtual qubits and a CNOT gate \cite{wysocki2026detailed}. We express each error channel as $\mathcal{E} = \exp(G)$, and we call $G$ an error \emph{generator}. We express error generators in the \emph{elementary error generator} (EEG) basis \cite{blumekohout2022taxonomy}.  Each EEG is associated with a particular error \textit{sector}--Hamiltonian (H), Pauli-stochastic (S), Pauli-correlation (C), or Active (A)--and is indexed by one or two $n$-qubit Pauli operators (see Appendix~\ref{app:theory}) that help encode its action. 
If we express an error channel $\mathcal{E}$ as $\mathcal{E}=\exp(G)$, then $G$ may be decomposed as
\begin{equation}
    G = \sum_{P} \epsilon_{h_P} H_P + \sum_{P} \epsilon_{S_P} S_P + \sum_{P, Q} \epsilon_{A_{P,Q}} A_{P,Q} + \sum_{P,Q} \epsilon_{C_{P,Q}} C_{P,Q} 
\end{equation}
We expect that in a system with H error rates of $\epsilon_{h_P} = O(\epsilon)$ the rates of S, C, and A error generators are of $O(\epsilon^2)$, so the contributions from C and S terms that arise at second order from H error generators should be considered along with first order S, C, and A terms \cite{blumekohout2022taxonomy}. Furthermore, S, C, and A EEGs can contribute to changes in detector statistics at first order in their error rates, whereas the leading-order contribution from H EEGs appears at $O(\epsilon_{h_P}^2)$ (see Appendix~\ref{app:theory}). We therefore consider the contributions of $H$ terms to $O(h_P^2)$ when doing leading-order approximations in this work.

To propagate errors through a circuit with mid-circuit measurements, we construct the \emph{expanded circuit}, a circuit that reproduces the measurement outcomes of the original circuit but without mid-circuit measurements. We construct this circuit by deferring measurements with the help of of auxiliary qubits (see Figure~\ref{fig:fig1}b). 

Absorbing pre-MCM error channels into the previous circuit layer's error (for notational simplicity), the imperfect implementation of a Clifford circuit $C$ is
\begin{equation}
\Lambda(C)=\mathcal{E}_{L_d}\mathcal{U}(L_d)\cdots \mathcal{E}_{L_0}\mathcal{U}(L_0).
\end{equation}
Propagating all errors to the end of the circuit, 
\begin{equation}
    \Lambda(C) = \mathcal{E}'_k\mathcal{E}'_{k-1}\dots \mathcal{E}'_1\mathcal{U}(C) = \mathcal{E}_c\mathcal{U}(C),
\end{equation}
where $\mathcal{U}(C)$ is the unitary superoperator for the perfect implementation of $C$, and we define
\begin{equation}\label{eqn:prod_of_propagated}
    \mathcal{E}_c = \mathcal{E}'_k\mathcal{E}'_{k-1}\dots \mathcal{E}'_1 = \exp(G_{c})
\end{equation}
where
\begin{equation}
    \mathcal{E}'_j = \mathcal{U}_{k}\dots\mathcal{U}_{j+1}\mathcal{E}_j \mathcal{U}_{j+1}^{-1} \dots \mathcal{U}_{k}^{-1}.
\end{equation}
We call $\mathcal{E}_c$ the \emph{circuit error channel} and $G_{c}$ the \emph{circuit error generator}, and they describe the dynamics of \emph{all} errors that occur in the circuit. $G_{c}$ (as well as $\mathcal{E}_c$) is computationally hard to compute exactly, but we can efficiently approximate it using the Baker–Campbell–Hausdorff (BCH) formula \cite{BLANES2004135}, 
\begin{align}
    G_{c} &\approx \sum_{i=1}^k G'_i + \sum_{i=2}^{j} B_i,
\end{align}
where $G'_i$ is the generator of $\mathcal{E}_i'$ and $B_j$ is the $j$th order BCH term. When $\mathcal{E}_1, \mathcal{E}_2, \dots, \mathcal{E}_d$ are sparse in the EEG basis, the $j$th order BCH approximation to $G_{EOC}$ is also sparse.

\begin{figure}[ht!]
    \centering
    \includegraphics[width=\linewidth]{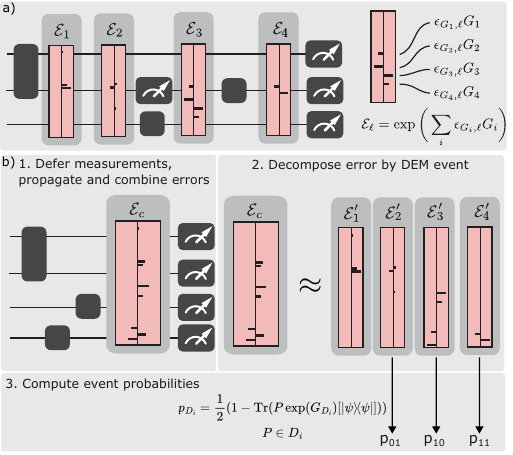}
    \caption{\textbf{Efficient prediction of quantum error correction circuits with non-Pauli errors.} (a) We model errors in a QEC circuit with sparse error generator models, where each error channel is represented as an exponential of a generator that is sparse in the elementary error generator basis. (b) Our method estimates the impact of all physical-level errors on a circuit by first propagating all errors to the end of the circuit, then approximately combining them with a BCH expansion. Then, our method decomposes that circuit error channel approximately into individual error channels that each cause one DEM event, from which the probability of each DEM event can be efficiently estimated.}
    \label{fig:fig1}
\end{figure}

\subsection*{Detector Error Models}
Our method constructs a \emph{detector error model (DEM)}, a coarse-grained model that tracks only the impact of errors on measurement results, not the errors' origin or form.  
A DEM models the outcomes of \emph{detectors}, Pauli observables $P$ that have definite outcome in the absence of errors---i.e., $\bra{\psi}P\ket{\psi} = \pm 1$, where $\ket{\psi}$ is the state produced by the circuit in the absence of errors. For notational simplicity, we use signed Pauli observables, and demand that $\bra{\psi}P\ket{\psi} = 1$ for all detectors $P$. 

For a circuit $C$, we specify a set of detectors $\mathbb{D}$, forming a basis for the set of commuting measurements to be modeled. 
A \emph{detection history} is a record of the results of each detector measurement.
A DEM models the probability distribution of detection histories. It consists of a set of \emph{DEM events} $\{ D_1, D_2, \dots D_k\}$ ,where $D_i \subseteq \mathbb{D}$, where the events have associated probabilities $\{p_1, p_2, \dots p_k\}$. These events are sampled from as binomial random variables to produce detection histories.

Any Pauli fault in a Clifford circuit can be efficiently propagated through the circuit to determine which measurement results it flips, and therefore what DEM event it causes. This is the idea underlying the construction of DEMs from stochastic Pauli error models \footnote{In this discussion, we implicitly assume all Pauli faults in the circuit are independent. In the event that faults are modeled by a higher-rank Pauli channel, any fault can be mapped to a detector bit string, but the possibility of anticorrelated Pauli errors implies that the Pauli channel may not be exactly modeled by a DEM.}. For a Pauli $Q$ and a set of detectors $\mathbb{D}$, we define a function $\toevent$ that maps the Pauli error $Q$ to the DEM event caused by a Pauli $Q$ error immediately before measurement in the expanded circuit---formally, $\toevent(Q) = D$ for $D \subseteq \mathbb{D}$, if for all $P \in \mathbb{D}$, $P \in D$ if and only if $[P,Q] \neq 0$.

\subsection*{Relating Error Generators to DEM Events}
Our approach to creating a DEM from an error generator model [Fig~\ref{fig:fig1}] is inspired by how Pauli stochastic error models are mapped to DEMs.
Like Pauli errors, EEGs, and hence sparse error generators, can be efficiently propagated through Clifford circuits [Fig.~\ref{fig:fig1}b]. However, the errors in our models are \emph{not} independent stochastic faults,
meaning that there is no exact mapping between the circuit error and DEM events. We instead construct an \emph{approximate} mapping of the circuit error to independent events by decomposing it into simpler error channels that each cause a single DEM event. We decompose the circuit error generator as
\begin{equation}\label{eq:decompose_generator}
    G_{c} = G_1 + G_2 + \cdots + G_k, 
\end{equation} 
where each component $G_i$ generates an error channel that causes a \emph{single} DEM event. Identifying these single-event components is efficient for an error generator expressed in the EEG basis, as each EEG impacts only a single DEM event (or no DEM events), as determined by its Pauli indices. We call the set of all EEGs whose Pauli indices all satisfy $\Delta(P)=D$ for a DEM event $D \in \mathcal{P}(\mathbb{D})$ a \emph{DEM event class} of EEGs for event $D$. In Appendix~\ref{app:theory}, we prove Theorem~\ref{thm:eeg_to_event}, which shows that any generator consisting of EEGs in a single DEM event class generates detection histories modeled by a single-event DEM.

\begin{theorem}\label{thm:eeg_to_event}
Let $G$ be an error generator, and suppose its EEG decomposition consists of entirely EEGs of the same DEM event class $D$. Then the circuit error $\mathcal{E}_D = \exp(G)$ is perfectly modeled by the DEM with event $D$ with probability $\frac{1}{2}(1-\Tr(P\mathcal{E}_d[\psi]))$, where $P \in D$.
\end{theorem}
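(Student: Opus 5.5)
The plan is to work entirely in the stabilizer group generated by the detectors and track how the channel $\mathcal{E}_D=\exp(G)$ acts on detector expectation values. Write $\psi=\ket{\psi}\bra{\psi}$ for the error-free output state of the expanded circuit, so every $P\in\mathbb{D}$ satisfies $\Tr(P\psi)=1$. The joint distribution of detection histories is determined by the expectation values $\Tr(Q\,\mathcal{E}_D[\psi])$ for all products $Q=\prod_{P\in S}P$ with $S\subseteq\mathbb{D}$, via Fourier (Walsh) inversion on $\mathbb{Z}_2^{|\mathbb{D}|}$. A single-event DEM with event $D$ and probability $p$ produces exactly these expectations: $\langle Q\rangle=1$ if $|S\cap D|$ is even, and $\langle Q\rangle=1-2p$ if it is odd. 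The proof therefore reduces to showing that $\mathcal{E}_D$ produces the same pattern.

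\textbf{Step 1: Heisenberg-picture action of single EEGs.} For each EEG type ($H_A$, $S_A$, $C_{A,B}$, $A_{A,B}$), I will write the adjoint action $E^\dagger(Q)$ on a Pauli $Q$ using the explicit forms in Appendix~\ref{app:theory}. For each Pauli index $A$ in the class, commutation with $Q$ is fixed by parity, since $\Delta(A)=D$ means $A$ anticommutes with precisely the detectors in $D$. Hence $[A,Q]=0$ if $|S\cap D|$ is even, and $\{A,Q\}=0$ if it is odd. Call these the even and odd cases.
\begin{itemize}[leftmargin=*]
\item \emph{Even case:} every index commutes with $Q$, so each EEG annihilates $Q$ in the Heisenberg picture. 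Consequently $\exp(G)^\dagger(Q)=Q$ and $\langle Q\rangle=1$.
\item \emph{Odd case:} all indices anticommute with $Q$. Since all indices share the same $\Delta$, any two indices $A,B$ differ by a product $AB$ that commutes with every detector.
\end{itemize}

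\textbf{Step 2: Reduce the odd case to a single number.} I expect this step to be the main obstacle, because $G^\dagger(Q)$ is generally not proportional to $Q$. For example, $H_A^\dagger(Q)\propto AQ$ and $C_{A,B}^\dagger(Q)\propto (AQB+BQA)$-type terms. My plan is to show that the linear span $V$ of operators $Q\cdot R$, with $R$ ranging over the group generated by products of index Paulis (each commuting with every detector), is invariant under $G^\dagger$. I will then argue that the expectation of any such operator in $\psi$ equals $\Tr(R'\psi)\cdot\langle Q\rangle$-type quantities, so that expectation values factor.

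A cleaner route is to define the single quantity $f_Q=\Tr(Q\,\mathcal{E}_D[\psi])$ and show that, for odd $Q$, it is independent of which odd $Q$ is chosen. The argument goes as follows:
\begin{itemize}[leftmargin=*]
\item Any two odd products $Q,Q'$ satisfy $Q'=QT$, where $T$ is an even product of detectors. In particular $T$ stabilizes $\psi$.
\item $T$ commutes with every index Pauli, so $T$ commutes with $\mathcal{E}_D$ as a superoperator: conjugation by $T$ leaves every EEG with these indices invariant.
\item Therefore $\mathcal{E}_D[\psi]=\mathcal{E}_D[T\psi]=T\,\mathcal{E}_D[\psi]$, and hence $\Tr(QT\,\mathcal{E}_D[\psi])=\Tr(Q\,\mathcal{E}_D[\psi])$.
\end{itemize}
The same argument applied to the even case, with $Q$ itself in the role of $T$, gives the even case directly without Step 1's Heisenberg computation. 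This strengthens my plan: the whole proof follows from the fact that every even-parity detector product is a stabilizer commuting with $\mathcal{E}_D$.

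\textbf{Step 3: Match to the DEM.} Set $p=\frac{1}{2}(1-\Tr(P\,\mathcal{E}_D[\psi]))$ for any $P\in D$; such a $P$ is odd, since $|\{P\}\cap D|=1$. By Step 2 every odd $Q$ has $\langle Q\rangle=1-2p$ and every even $Q$ has $\langle Q\rangle=1$, which matches the single-event DEM. Walsh inversion then yields equality of the full detection-history distributions. It remains to check that $p\in[0,1]$: this holds because $\mathcal{E}_D$ is CPTP, so $|\langle P\rangle|\le1$. (If $G$ is not guaranteed CPTP, I would note that the statement only uses the trace, so $p$ may simply be read formally.)
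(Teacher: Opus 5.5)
Your proof is correct, and it takes a genuinely different route from the paper.

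The paper's route is computational:
\begin{enumerate}
\item It Taylor-expands $\exp(G)$.
\item It uses the EEG composition rules of Ref.~\cite{miller2025simulation} and Lemma~\ref{lem:p_or_i} to show that every term is a combination of EEGs whose indices have $\Delta\in\{D,\emptyset\}$.
\item It sums, by linearity, the case-by-case first-order computations of Lemma~\ref{lem:eeg_to_event} (done separately for $S$, $C$, $A$, and $H_PH_{P'}$), extended to detector products by the parity argument of Lemma~\ref{lem:beta_detector_products}.
\end{enumerate}

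You replace all of this with one symmetry. Any detector product $T$ with even overlap with $D$ stabilizes $\psi$ and commutes with every index Pauli, so $\langle QT\rangle=\langle Q\rangle$. This gives the even/odd pattern exactly. It needs no expansion of the exponential, no composition tables or case analysis, and no smallness assumption. It also automatically covers indices with $\Delta=\emptyset$ and mixed EEGs.

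What the paper's route buys in exchange is the explicit per-EEG coefficients. The algorithm needs these to evaluate event rates at leading order, and they are reused in Theorem~\ref{thm:cptp_dem} and in the proof of Theorem~\ref{thm:higher_order_events}. Your argument only identifies the rate as $\frac12(1-\Tr(P\,\mathcal{E}_D[\psi]))$, which is all the statement itself requires.

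One justification in Step 2 needs fixing. The identity you use, $\mathcal{E}_D[T\psi]=T\,\mathcal{E}_D[\psi]$, requires $G$ to commute with the left-multiplication map $\rho\mapsto T\rho$. That is not the same as invariance of the EEGs under conjugation $\rho\mapsto T\rho T$, which is what you cite. Conjugation invariance alone gives only $[T,\mathcal{E}_D[\psi]]=0$, and that does not imply $\Tr(QT\,\mathcal{E}_D[\psi])=\Tr(Q\,\mathcal{E}_D[\psi])$.

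The left-multiplication property does hold. Every EEG has the form $\rho\mapsto\sum_j c_jX_j\rho Y_j$ with left factors $X_j\in\{I,A,B,\{A,B\},[A,B]\}$, and all of these commute with $T$. With this justification, plus trace preservation of $\exp(G)$ for the even case (every EEG is trace-annihilating), your argument is complete.
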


Using the decomposition of $G_{\textrm{EOC}}$ in Eq.~\ref{eq:decompose_generator}, we decompose $\mathcal{E}_c$ using the $j$th order Zassenhaus formula \cite{Wang_2019},
\begin{align}
    \mathcal{E}_{c,j} = \exp(\sum_{G \in \mathbb{G}_1} \epsilon_G G)\exp(\sum_{G \in \mathbb{G}_2} \epsilon_G G)\cdots \exp(\sum_{G \in \mathbb{G}_k} \epsilon_G G)\prod_{i=2}^j W_i, \label{eq:main_event_zassenhaus}
\end{align}
where $\mathbb{G}_1, \mathbb{G}_2, \dots, \mathbb{G}_k \subseteq \mathbb{G}$ are the DEM event classes of EEGs, $W_i$ denotes the $i$th order Zassenhaus term, and we typically take $j=1$. Zassenhaus terms for $i>2$ are \emph{not} single-DEM-event channels---for example, the second-order term is 
\begin{align}
    W_2 & = \exp\left(\frac{1}{2}\left( [G_1, G_2] + [G_1, G_3] + \dots + [G_{k-1}, G_k]\right)\right). 
\end{align} 
However, these higher-order terms can be iteratively decomposed into single-event channels to have their effects incorporated into the DEM.

Our technique uses the single-event channels in Eq.~\eqref{eq:main_event_zassenhaus} in isolation to estimate the probability of each corresponding DEM event. This step imposes a subtle simplifying approximation---it treats each single-event channel as causing an independent stochastic error. Unless the single-event channels are Pauli stochastic channels, errors may combine differently across channels than predicted by a DEM beyond leading order. 
Therefore, higher-order approximations to detection statistics require computing \emph{compositions} of single-event channels, in order to identify events that must be added to the DEM to capture these effects. The set of possible events at order $k$ in EEG rates is efficient to compute (see Appendix~\ref{app:higher_order_events}):

\begin{theorem}\label{thm:higher_order_events}
Let $E_1$ the the set of all DEM events $\Delta(P)$, where P is a Pauli appearing in the indices of the EEGs in the propagated layer error channels $\mathcal{E}'_1,\mathcal{E}'_{2},\dots ,\mathcal{E}'_k$ (Eq.~\ref{eqn:prod_of_propagated}), and let $E_k$ be the set of DEM events that can be expressed as the symmetric difference of up to $k$ DEM events in $E_1$ (i.e., the net effect of $k$ events happening). Then at $k$th order, the DEM describing the circuit's detector histories to order $k$ contains only events in $E_k$. 
\end{theorem}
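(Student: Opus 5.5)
The plan is to expand the circuit error channel $\mathcal{E}_c=\mathcal{E}'_k\cdots\mathcal{E}'_1$ as a power series in the EEG rates. We then track, term by term, which detector flip patterns each monomial of order $\le k$ can produce.

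\textbf{Step 1: detection probabilities as Pauli expectations.} The probability of a detection history is a linear combination of expectation values $\Tr(P_S\,\mathcal{E}_c[\psi])$. Here $P_S=\prod_{P\in S}P$ ranges over products of detectors, and the combination is the usual inclusion--exclusion over $\pm1$ outcomes. The DEM at order $k$ is determined by these correlators truncated at order $k$ in the rates, through the standard relation between independent-event DEMs and their Fourier (parity) coefficients. It therefore suffices to show that every order-$\le k$ contribution to the Fourier transform of the detection distribution is supported on $E_k$. Equivalently, the detection distribution to order $k$ should be a signed mixture of deterministic flip patterns lying in $E_k$ (together with the empty event).

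\textbf{Step 2: the single-EEG building block.} Each propagated generator $G'_i$ is a sum of EEGs indexed by Paulis whose events lie in $E_1$. Expanding each exponential, $\exp(G'_i)=\sum_m (G'_i)^m/m!$, shows that the order-$k$ term of $\mathcal{E}_c$ is a sum of products of at most $k$ EEGs. Every EEG acts on an operator $\rho$ as a linear combination of maps $\rho\mapsto A\rho B$, with $A,B\in\{I,P,Q\}$ built from its Pauli indices $P,Q$. The Hamiltonian case $H_P[\rho]=-i[P,\rho]$ is an instance, as are $S_P$, $C_{P,Q}$ and $A_{P,Q}$.

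A product of $m\le k$ EEGs is therefore a combination of maps $\rho\mapsto A\rho B$, where $A$ and $B$ are each products of at most $m$ index Paulis. Evaluate $\Tr(P_S A\psi B)=\bra{\psi}BP_SA\ket{\psi}$ by commuting $P_S$ through $A$, picking up signs. Each index Pauli $R$ contributes the sign $(-1)^{|S\cap\Delta(R)|}$, and $P_S\ket{\psi}=\ket{\psi}$. The result is $(-1)^{|S\cap\Delta(A)|}\bra{\psi}BA\ket{\psi}$, where $\Delta(A)$ is the symmetric difference of the events of the Paulis in $A$.

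The $S$-dependence of each term is thus a character $(-1)^{|S\cap D|}$ with $D$ a symmetric difference of at most $m$ events from $E_1$. Here I would use the fact that when an EEG contributes $A\rho B$ with both $A$ and $B$ nontrivial, the two factors carry the same DEM event. This holds when the Paulis share an event, which follows from the event-class structure of Theorem~\ref{thm:eeg_to_event}. If $P$ and $Q$ in a $C$ or $A$ term had different events, the term would be expanded into characters for both $\Delta(P)$ and $\Delta(Q)$, and this still lies in $E_k$ by counting each index separately. With this, each term is $D\in E_k$.

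\textbf{Step 3: reassembling into a DEM.} The order-$k$ detection distribution then has a Fourier spectrum supported on the span of characters indexed by $E_k\cup\{\emptyset\}$. I would then argue that fitting an independent-event DEM order by order only ever requires events in this set. The argument is inductive: the log of the characteristic function of a DEM is a sum over events of $\log(1-2p_D\chi_D)$, and its order-$k$ part involves only characters indexed by symmetric differences of at most $k$ events. Matching coefficients at order $k$ therefore introduces new events only in $E_k$.

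\textbf{Main obstacle.} I expect Step 3 to be the hardest part, namely making the notion of "the DEM at order $k$" precise and proving closure under the order-by-order matching. In particular, events in $E_k\setminus E_{k-1}$ appear with probabilities of exactly order $k$, and the lower-order events must not contaminate other characters. I would handle this by working with the cumulant (log-moment) expansion and inducting on $k$. The Step 2 bookkeeping of sign characters is, by contrast, routine.
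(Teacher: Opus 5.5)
Your route is viable and genuinely different from the paper's, but Step 2 has a real gap, and it sits in the part you call routine. The claim that a product of $m$ EEGs is a combination of maps $\rho\mapsto A\rho B$ with $A$ and $B$ \emph{each} a product of at most $m$ index Paulis is false. The terms $-\tfrac12\{\{P,Q\},\rho\}$ in $C_{P,Q}$ and $\tfrac{i}{2}\{[P,Q],\rho\}$ in $A_{P,Q}$ put both indices on one side ($A=PQ$, $B=I$). So $A$ alone can carry $2m$ index Paulis, and reading the character off $A$ only places it in $E_{2m}$.

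Your fallback of ``counting each index separately'' gives exactly this weaker bound. Your assertion that the two indices of a $C$ or $A$ term share a DEM event is not something Theorem~\ref{thm:eeg_to_event} provides. The propagated layer generators can contain $C_{P,Q}$ or $A_{P,Q}$ with $\Delta(P)\neq\Delta(Q)$, and inside a product of EEGs such terms need not vanish.

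The missing idea is a global constraint. You can equally move $P_S$ leftward through $B$ and use $\langle\psi|P_S=\langle\psi|$, which gives $(-1)^{|S\cap\Delta(B)|}\langle\psi|BA|\psi\rangle$. Comparing with your evaluation shows that every nonvanishing term has $\Delta(A)=\Delta(B)$. Each EEG contributes at most two index Paulis to $A$ and $B$ \emph{combined}: one for $H_P$, two for $S_P$, $C_{P,Q}$ and $A_{P,Q}$. Hence the shorter of $A$ and $B$ contains at most $m$ of them, and the character $\chi_D(S)=(-1)^{|S\cap D|}$ has $D\in E_m$. This is the graded statement you actually need: the degree-$m$ part of $\lambda_S=\Tr(P_S\mathcal{E}_c[\psi])$ is supported on characters indexed by $E_m\cup\{\emptyset\}$.

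With that in hand, Step 3 needs no induction. For an independent-event DEM, $\ln\lambda_S=\sum_D\ln(1-2p_D)\,\tfrac12\bigl(1-\chi_D(S)\bigr)$ is exactly linear in the events. Your $\ln(1-2p_D\chi_D)$ is correct only if $\chi_D$ is the odd-parity indicator rather than the $\pm1$ character. So the DEM's events are precisely the nonempty Fourier support of $S\mapsto\ln\lambda_S$, and there is no contamination on the DEM side. All the nonlinearity is on the circuit side, in $\ln\lambda_S=\sum_{j\ge1}(-1)^{j+1}(\lambda_S-1)^j/j$. Since $\chi_D\chi_{D'}=\chi_{D\oplus D'}$ and $E_a\oplus E_b\subseteq E_{a+b}$, the part of degree at most $k$ is supported on $E_k\cup\{\emptyset\}$.

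For comparison, the paper proceeds in three steps:
\begin{enumerate}
\item It forms a BCH approximation of the circuit generator and does casework over EEG commutators and compositions to control the DEM event classes of the resulting EEGs. A separate argument shows that emergent $H$ terms indexed by $P_1P_2Q_1Q_2$ matter only at fourth order.
\item It proves that $\beta(\psi,G,\cdot)$, evaluated on a product of detectors, depends only on that product's intersection with the EEG's event class.
\item It Hadamard-transforms linearized depolarizations $-\ln\lambda_D$ into attenuations, and uses a binomial-counting argument to show these vanish off the event set.
\end{enumerate}
Your approach expands $\prod_j\exp(G'_j)$ directly, so it needs no BCH step and describes the exact circuit channel. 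It replaces the EEG composition rules with the elementary sign computation on $A\rho B$, and the attenuation counting with the Fourier support of a logarithm. It also handles the nonlinearity of $\ln\lambda$ explicitly, which the paper's lemma treats only to linear order. The paper's route, in exchange, stays inside the DEM-event-class machinery on which its construction algorithm is built.
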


Theorem~\ref{thm:higher_order_events} does \emph{not} guarantee that the DEM events all have positive rates. Negative DEM event rates can occur when there are \emph{anticorrelations} in detection histories. However, every error model consisting of only Markovian error described by CPTP maps has a leading-order DEM with no negative-rate events (see Appendix~\ref{app:cptp}). Theorem~\ref{thm:higher_order_events} accounts for all sources of approximation in our construction, including treating the single-event channels as independent. It is possible to create a DEM that accounts for non-stochastic addition across single-event channels, but this comes at the cost of computing high-order approximations of a large number of polarizations. 
Fortunately, the cost of this computation can be somewhat reduced by leveraging the the DEM event classes of the EEGs in $G_{\textrm{EOC}}$ to selectively compute parts of the expansion of $\mathcal{E}_c$ that impact each polarization.

\subsection*{Detector Error Model Construction}
We now provide a full specification of our method for creating DEMs from error generator models. Given a circuit $C$, a set of detectors $\mathbb{D}$, and an error generator model for the layers of circuit $C$, the following procedure creates a DEM modeling the detection histories of $\Lambda(C)$.
\begin{enumerate}
    \item By deferring measurements, construct the expanded circuit \begin{align}
        \mathcal{C}' = \mathcal{E}_k\mathcal{U}(L'_k)\dots\mathcal{E}_1\mathcal{U}(L'_1),
    \end{align}
    where each $\mathcal{U}(L'_i)$ and $\mathcal{E}_i$ is an $(n+m)$-qubit superoperator that acts nontrivially on at most $n$ qubits.
    \item Propagate all errors to the end of the expanded circuit \begin{align}
        \mathcal{C}' = \mathcal{E}_k'\dots\mathcal{E}_1'\mathcal{C}_k\dots\mathcal{C}_1,
    \end{align} 
    then combine $\mathcal{E}_k'\dots\mathcal{E}_1'$ using a BCH expansion to get the approximate circuit error generator $\tilde{\mathcal{E}}_{C'} = \exp(G_{\textrm{EOC}})$.
    \item Decompose $G_{\textrm{EOC}}$ by DEM event class
    \begin{align}\label{eq:g_eoc_by_class}
        G_{\textrm{EOC}} = \sum_{G \in \mathbb{G}_{D_1}} \epsilon_G G + \sum_{G \in \mathbb{G}_{D_2}} \epsilon_G G + \cdots + \sum_{G \in \mathbb{G}_{D_k}} \epsilon_G G,
    \end{align}
    where $\mathbb{G}_{D_i}$ denotes the set of all EEGs in $G_{\textrm{EOC}}$'s composition in DEM event class $D_i$.
    \item Approximate the DEM using a Zassenhaus expansion:
    \begin{equation}
        \exp(G_{\textrm{EOC}}) \approx \prod\limits_{i=1}^{k} W_i,
    \end{equation}
    where $W_i$ denotes the $i$th Zassenhaus expansion term of $\exp{G_{\textrm{EOC}}}$ using the generator decomposition in Eq.~\eqref{eq:g_eoc_by_class}. Iteratively apply (3)-(4) to $W_i$ for $i>1$ to create single-DEM-event channels.
    \item Estimate the rate of each DEM event $D_i$. 
by estimating $p_{D_i}= \frac{1}{2}(1-\Tr(P\exp(G_{D_i})[\ketbra{\psi}]))$ for any representative $P \in D_i$. In limited cases (e.g. if $G_{D_i}$ is a purely S-type error generator), $p_{D_i}$ can be efficiently computed exactly. In other cases, use a low-order Taylor expansion of $\exp(G_{D_i})$ or similar approximation.
\end{enumerate}

Our method uses three approximations that can easily be tuned to higher accuracy (at the expense of higher computation time):
\begin{enumerate}
    \item A BCH approximation to compute $\tilde{\mathcal{E}}_c$.
    \item A Zassenhaus approximation to split $\tilde{\mathcal{E}}_c$ into single-event channels.
    \item An estimate of $\Tr(P\mathcal{E}[\ketbra{\psi}])$ for each single-event channel $\mathcal{E}$ to estimate the corresponding event rate. 
\end{enumerate}

In an error model with only S-type EEGs, this procedure constructs a DEM that is exact when approximations (1)-(2) are done to first order (because all S-type error generators commute), and the computation of (3) is done exactly (this can be done efficiently for a sparse model, see Appendix~\ref{app:theory}).

\subsection*{Sensitivity Analysis for Coherent Errors}
To quantify the relative impacts of different types of coherent error on cultivation, we computed polynomials relating the parameters of our error generator models to observables in cultivation circuits. Let $\boldsymbol{\theta}$ be a vector of all of parameters in the physical-level error generator model---for the spatially-homogeneous error models we study here, $\boldsymbol{\theta}$ consists of a rate for each single-qubit Hamiltonian EEG for each single-qubit gate, and a rate for each two-qubit Hamiltonian EEG for each two-qubit gate. We track how these physical-gate error parameters map onto observable properties of the circuit. The circuit error can be written as 
\begin{equation}
    \mathcal{E}_{c} = \exp(\sum_{P \in \mathbb{G}_H} \phi_P H_P),
\end{equation}
where $\mathbb{G}_H$ is a small set of Paulis (For a first-order BCH expansion, this set has size $O(gn_g)$, where $g$ is the number of gates in the circuit, and $n_g$ is the number of Hamiltonian EEGs needed to describe a single gate's error). For a first-order BCH approximation of $\mathcal{E}_{c}$, which is sufficient to reproduce the leading-order effects of coherent errors, the EEG error rate $\phi_Q$ is given by the vector $\phi_Q = \vb{s}_Q^T \boldsymbol\theta$ for some vector $\vb{s}_P$. Detector Pauli observables are then approximately
\begin{align}
    \langle P \rangle & \approx 1 + \boldsymbol\theta^T\left(\sum\limits_{Q,Q' \in \mathbb{G}_H} b_{P,Q,Q'}\vb{s}_{Q'}\vb{s}_Q^T \right)\boldsymbol\theta \\
    & = 1 + \boldsymbol\theta^T M_P \boldsymbol\theta
\end{align}
where $b_{P,Q,Q'}$ is a constant that encodes the impact of $H_QH_{Q'}$ on $\langle P \rangle$ for each pair $Q,Q' \in \mathbb{G}_H$. $M_P$ is a matrix that completely describes the sensitivity of $\langle P \rangle$ to physical-level coherent error parameters.

To approximate the relationship between the error model parameters and the discard probability for cultivation, we observe that at leading order, the discard probability is the the probability that a DEM events occurs---i.e., $\sum_{E \in E_1} p_{E}$, where $E_1$ is the set of DEM events in the leading-order DEM, and $p_E$ is the probability of event $E$. The probability of $E$ in the leading-order DEM can be expressed in terms of EEGs in the DEM event class corresponding to E as $p_E = -\frac{1}{2}\boldsymbol\theta^T M_{E} \boldsymbol\theta$, where 
%$M_{P,E} = \vb{s}_D B_P \vb{s}_D^{T}$, 
\begin{align}
    M_{E} = \sum\limits_{Q,Q' \in \mathbb{G}_E} b_{P,Q,Q'}\vb{s}_{Q'}\vb{s}_{Q}^T,
\end{align}
where $\mathbb{G}_E$ is the set of EEGs in $\mathcal{E}_c$ from DEM event class $E$ and $P \in E$ is a representative detector. This result allows us to construct a sensitivity matrix for the probability of a DEM event occurring in terms of the physical-level Hamiltonian error parameters, by summing over the individual event sensitivity matrices.
%, and $M_P$ is the matrix encoding the impact of terms $H_PH_{P'}$ (for $\Delta(P)=\Delta(P')=E$) on $\langle Q \rangle$, for a representative detector $Q \in E$.

\section*{Code Availability}

Code for our DEM creation algorithms is available in pyGSTi \cite{Nielsen2020-rd}. Code for our simulations will be made available in PRAQTICE \cite{praqtice}.

 %TC:ignore
\section*{Acknowledgments}

The authors thank Evangelia Takou, Kenneth Brown, Daniel Hothem, Alicia Magann, and Alireza Seif for helpful discussions. This material was funded in part by the U.S. Department of Energy, Office of Science, Office of Advanced Scientific Computing Research and by the Laboratory Directed Research and Development program at Sandia National Laboratories. This research is also based upon work supported by the Office of the Director of National Intelligence (ODNI), Intelligence Advanced Research Projects Activity (IARPA), specifically the ELQ program.  Sandia National Laboratories is a multi-program laboratory managed and operated by National Technology and Engineering Solutions of Sandia, LLC., a wholly owned subsidiary of Honeywell International, Inc., for the U.S. Department of Energy's National Nuclear Security Administration under contract DE-NA-0003525. All statements of fact, opinion or conclusions contained herein are those of the authors and should not be construed as representing the official views or policies of the U.S. Department of Energy or the U.S. Government.

\bibliography{bibliography}
\appendix
\onecolumngrid

\section{Theory for Creating DEMs for Arbitrary Small CPTP Errors}
\label{app:theory}
\subsection{Modeling and Propagating CPTP Errors in QEC Circuits}

We describe circuits as list of instructions to apply a series of unitary operations, $C = L_kL_{k-1}\dots L_1$, where operations are performed from right to left. Each constituent instruction $L_i$ is called a \emph{layer} of the circuit, typically corresponding to a set of logic operations performed simultaneously on a quantum processor. We will consider only circuits consisting of Clifford gates, stabilizer state preparation, and computational basis measurements, including mid-circuit measurements. For simplicity, in this work we assume that all mid-circuit measurements include post-measurement reset of the state in the state $\ket{0}$. This set of operations is sufficient for syndrome extraction circuits and circuits for logical Clifford gates in stabilizer codes. Our methods also apply to circuits with non-Clifford operations, at the expense of increased computational cost.

We model the error in a circuit layer with no measurements by a post-gate error channel. For layers with mid-circuit measurements, we model the error with post-layer and pre-layer error channels. We use this model only for simplicity of description; more general mid-circuit measurement noise can be introduced with the use of a gadget consisting of additional virtual qubits and a CNOT gate \cite{wysocki2026detailed}. We express each error channel as $\mathcal{E} = \exp(G)$, where $G$ is the error \emph{generator}. We express error generators in the \emph{elementary error generator} (EEG) basis \cite{blumekohout2022taxonomy}:
\begin{align}
    H_{P}\left[\rho\right]&=-i\left[P,\rho\right] ,\\
    S_{P}\left[\rho\right]&=P\rho P- \rho, \\
    C_{P,Q}\left[\rho\right]&=P\rho Q+Q\rho P-\frac{1}{2}\left\{\left\{P,Q\right\},\rho\right\},\\
    A_{P,Q}\left[\rho\right]&=i\left(P\rho Q-Q\rho P+\frac{1}{2}\left\{\left[P,Q\right],\rho\right\}\right).
\end{align}
where $P$ and $Q$ are $n$-qubit Paulis. The EEG basis has two critical features to the efficacy of our method. First, EEGs can be efficiently propagated through Clifford circuits \cite{miller2025simulation}. This results from the fact that for any unitary $U$,
\begin{align}
    \mathcal{U}^{\dagger}  S_{P}  \mathcal{U}\left[\rho\right]&= s_{U,P}S_{U^{\dag} PU}[\rho] \\
    \mathcal{U}^{\dagger}  H_{P}  \mathcal{U}\left[\rho\right]&= s_{U,P}H_{U^{\dag} PU}[\rho] \\
  \mathcal{U}^{\dagger}  C_{P,Q}  \mathcal{U}\left[\rho\right]&= s_{U,P}s_{U,Q}C_{U^{\dag} PU,U^{\dag} PU}[\rho] \\
  \mathcal{U}^{\dagger}  A_{P,Q}  \mathcal{U}\left[\rho\right]&= s_{U,P}s_{U,Q}A_{U^{\dag} PU,U^{\dag} PU}[\rho],
   \end{align}
where $s_{U,P}$ and $s_{U,Q}$ are signs and $\mathcal{U}$ is the superoperator corresponding to $U$. Propagating EEGs through Clifford circuits therefore reduces to conjugating the Pauli indices by Cliffords, which can be done efficiently. Second, errors in quantum processors are typically well-described by generators that are \emph{sparse} in the EEG basis---i.e., the number of basis elements appearing the in the error generator decomposition is usually much less than $16^n$ for an system of $n$ physical qubits. While the methods we introduce in this paper apply to \emph{any} CPTP error generator model, their cost scales polynomially with the sparsity of the model in the EEG basis, and therefore the whole circuit must be described by a relatively small number of EEGs for our method to be efficient.

We expect that in a system with H error rates of $O(\epsilon)$ are the rates of S, C, and A error generators are $O(\epsilon^2)$, so the contributions from $C$ terms that arise at second order from $H$ error generators should be considered along with first order $S$, $C$, and $A$ terms. Throughout this work, we will assume this hierarchy of error rates---which is equivalent to considering the leading-order contribution of each type of EEG to detection statistics.

\subsection{Error Propagation with Mid-Circuit Measurements}
Mid-circuit measurements are critical to QEC, but error generators cannot be propagated through mid-circuit measurements using the propagation rules for unitaries (because they are nonunitary operations). To propagate errors through a circuit with MCMs, we construct an \emph{expanded circuit}, a circuit that \emph{exactly} produces the detection histories of the original circuit but without mid-circuit measurements. We do this by deferring measurements to the end of the circuit, introducing auxiliary ``virtual'' qubits to the circuit (see Figure~\ref{fig:fig1}b) and modifying error channels to only act on the ``active'' qubits in the circuit at a given time. Every detector in the original circuit becomes a Pauli observable measured at the end of the expanded circuit. This basic approach is sufficient for MCMs with reset whose error is modeled by independent pre- and post-MCM error channels. Our method is also compatible with more sophisticated measurement error and non-resetting measurements, through the use of gadgets consisting of Clifford gates and virtual qubits \cite{wysocki2026detailed}.

Absorbing pre-measurement error channels into the previous circuit layer's error for notational simplicity, the imperfect implementation of $C$ is
\begin{equation}
\Lambda(C)=\mathcal{E}_{L_d}\mathcal{U}(L_d)\cdots \mathcal{E}_{L_1}\mathcal{U}(L_1).
\end{equation}
Propagating all errors to the end of the circuit, 
\begin{equation}
    \Lambda(C) = \mathcal{E}'_k\mathcal{E}'_{k-1}\dots \mathcal{E}'_1\mathcal{U}(C) = \mathcal{E}_c\mathcal{U}(C),
\end{equation}
where $\mathcal{U}(C)$ is the unitary superoperator for the perfect implementation of $C$, and we define
\begin{equation}
    \mathcal{E}_c = \mathcal{E}'_k\mathcal{E}'_{k-1}\dots \mathcal{E}'_1 = \exp(G_{c})
\end{equation}
where
\begin{equation}
    \mathcal{E}'_j = \mathcal{U}_{k}\dots\mathcal{U}_{j+1}\mathcal{E}_j \mathcal{U}_{j+1}^{-1} \dots \mathcal{U}_{k}^{-1}.
\end{equation}
We call $\mathcal{E}_c$ the \emph{circuit error channel} and $G_{c}$ the \emph{circuit error generator}, and they describe the dynamics of \emph{all} errors that occur in the circuit according to our error model. $G_{c}$ (as well as $\mathcal{E}_c$) is computationally hard to compute exactly, but we can efficiently approximate it using the Baker–Campbell–Hausdorff (BCH) formula \cite{BLANES2004135}, 
\begin{align}
    G_{c, j} &\approx \sum_{i=1}^k G'_i + \sum_{i=2}^{j} B_i,
\end{align}
where $B_j$ is the $j$th order BCH term. When $\mathcal{E}_1, \mathcal{E}_2, \dots, \mathcal{E}_d$ are in the EEG basis, the $G_{EOC, j}$ is also sparse---the number of EEGs needed for $G_{EOC, j}$ is a $j$th order polynomial in the number of EEGs needed to construct the physical-level, post-gate model of the circuit's error.

\subsection{Detector Error Models}

Our method constructs a \emph{detector error model (DEM)}, a coarse-grained model that tracks only the impact of errors on measurement results, not the errors' origin or form.  
A DEM models the outcomes of \emph{detectors}, Pauli observables $P$ that have definite outcome in the absence of errors ($\bra{\psi}P\ket{\psi}= \pm 1$), were $\ket{\psi}$ is the state ideally produced by the circuit. For notational simplicity, we use signed Pauli observables, and demand that $\bra{\psi}P\ket{\psi}= 1$ for all detectors $P$. 

For a circuit $C$, we specify a set of detectors $\mathbb{D}$, forming a basis for the set of commuting measurements that we want to model. There is freedom in choosing the detectors for a model, as long as they satisfy the criteria above. We typically define an ordering of the detectors $[P_1, P_2, \dots, P_{n_d}]$, and specify a DEM event as an $n_d$-bit string $b_1b_2\cdots b_{n_d}$, where $b_i=1$ if $P_i \in D$, and $b_i=0$ otherwise.
A \emph{detection history} is a record of the results of each detector measurement for a single run of a circuit---i.e., it is an $n_d$-bit string $b_1b_2\cdots b_{n_d}$, where $b_i=1$ if the result of measuring $P_i$, is $1$, and $b_i=0$ otherwise. 
A \emph{detector error model (DEM)} models the probability distribution of detection histories. It consists of a set of \emph{DEM events} $\{ D_1, D_2, \dots D_k\}$, with associated probabilities $\{p_1, p_2, \dots p_k\}$. A \emph{DEM event} is a subset $D \subseteq \mathbb{D}$ with an associated probability $p_D$.

DEMs are particularly convenient for modeling QEC circuits experiencing Pauli stochastic error, because Pauli faults can be directly mapped to DEM events \cite{gidney2021stim}. Any Pauli fault in a Clifford circuit can be efficiently propagated through the circuit to determine which measurement results it flips, and therefore what DEM event it causes. For a Pauli $Q$ and a set of detectors $\mathbb{D}$, we define a function $\toevent$ that maps the Pauli error $Q$ to the bit string $d$ encoding the DEM event caused by a Pauli $Q$ error immediately before measurement in the expanded circuit. Formally, $\toevent(Q) = D$ for $D \subset \mathbb{D}$, where for all $P \in \mathbb{D}$, $P \in D$ if and only if $[P,Q] \neq 0$.

\subsection{DEM Event Classes of Error Generators}
In this section, we review our method for constructing a DEM and show that our decomposition of error generators into single-event components [Eq.~\eqref{eq:decompose_generator}] leads to single-DEM-event error channels. 

The (approximate) end-of-circuit error channel for our circuit is given by 
\begin{equation}
    \mathcal{E}_c = \mathcal{E}'_k\mathcal{E}'_{k-1}\dots \mathcal{E}_1'
\end{equation}
where
\begin{equation}
    \mathcal{E}'_j = \mathcal{U}_{k}\dots\mathcal{U}_{j+1}\mathcal{E}_j \mathcal{U}_{j+1}^{-1} \dots \mathcal{U}_{k}^{-1}.
\end{equation}
This error channel is generally hard to compute exactly. Our approach is to approximate is using the $j$th order BCH expansion,
\begin{equation}
    \mathcal{E}_c = \exp(G_k')\exp(G_{k-1}')\cdots \exp(G_1') \approx \exp(\sum_{i=1}^k G_i)\exp(\frac{1}{2}\sum_{i=1}^k\sum_{j=i}^k [G_i, G_j])\cdots
\end{equation}
This error channel can be expressed in terms of elementary error generators as
\begin{equation}
    \mathcal{E}_c = \exp(\sum_{S_P \in \mathbb{G}_S} s_{p}S_P + \sum_{A_{P,Q} \in \mathbb{G}_A} a_{P,Q}A_{P,Q} + \sum_{C_{P,Q} \in \mathbb{G}_{C}} c_{P,Q}C_{P,Q} + \sum_{H_P \in \mathbb{G}_H} h_P H_P).
\end{equation}
Our goal is to relate this end-of-circuit error to DEM events it causes. Doing so is an approximation, because this channel models non-stochastic errors as well as stochastic errors, and a DEM is a purely stochastic model. However, in this section we show that the impacts of $\mathcal{E}_c$ on the circuit can be mapped onto DEM events to good approximation. Our approach is to decompose $G$ as a sum of error generators $G_1, G_2, \dots G_k$, where each $G_i$ contains all EEG terms belonging to a single DEM event class.

We then break $\mathcal{E}_c$ using the Zassenhaus formula. To leading order,
\begin{align}
    \exp(\sum_{G \in \mathbb{G}} \epsilon_G G) = \exp(\sum_{G \in \mathbb{G}_1} \epsilon_G G)\exp(\sum_{G \in \mathbb{G}_2} \epsilon_G G)\cdots \exp(\sum_{G \in \mathbb{G}_k} \epsilon_G G) +O(\epsilon^{\nicefrac{3}{2}}) \label{eq:event_zassenhaus}
\end{align}
Where $\mathbb{G}_1, \mathbb{G}_2, \dots, \mathbb{G}_k \subseteq \mathbb{G}$ are the DEM event classes of EEGs.We can also consider higher-order Zassenhaus expansions to account for the error in the DEM event channel approximation. These corrections involve commutators of the single-DEM-event generators, and can lead to additional events in the DEM.
The second order Zassenhaus expansion term is 
\begin{align}
    \exp\left(\frac{1}{2}\left(\sum_{g_1 \in G_1, g_2 \in G_2} \epsilon_{g_1}\epsilon_{g_2}[g_2, g_1] + \sum_{g_1 \in G_1, g_3 \in G_3} \epsilon_{g_1}\epsilon_{g_3}[g_3, g_1] + \cdots + \sum_{g_{k-1} \in G_{k-1}, g_k \in G_k} \epsilon_{g_{k-1}}\epsilon_{g_k}[g_{k-1}, g_k]\right)\right).
\end{align}
This error channel will typically contain terms from multiple DEM event classes. To account for this correction term in our DEM, we decompose it into DEM event channels using the first-order Zassenhaus expansion again. In principle, this process of breaking up channels into DEM event channels can be continued to arbitrary order.

\begin{definition}
The \emph{DEM event class} of DEM event $D$ is the set of all EEGs $G$ such that $D$ is the largest subset of $\mathbb{D}$ such that for all $Q \in D$ and all Pauli indices $P$ of $G$, $[P,Q] \neq 0$. We use $\eegev(G)$ to denote the DEM event $D$ corresponding to $G$'s DEM event class.
\end{definition}

We now state and prove Theorem~\ref{thm:eeg_to_event}, which motivates decomposing the circuit error generator by DEM event class.  
\vspace{8pt}

\noindent
\textbf{Theorem 1.} \textit{Let $G$ be an error generator, and suppose its EEG decomposition consists of entirely EEGs of the same DEM event class $D$. Then the circuit error $\mathcal{E}_D = \exp(G)$ is perfectly modeled by the DEM with event $D$ with probability $\frac{1}{2}(1-\Tr(P\mathcal{E}_D[\ketbra{\psi}]))$, where $P \in D$.}

\vspace{8pt}
The first key idea of Theorem~\ref{thm:eeg_to_event} is that each EEG can be mapped in a well-defined way to a \emph{single} DEM event. The relation is formalized in the following lemma. 

\begin{lemma}\label{lem:eeg_to_event}
Let $G$ be an elementary error generator. The effect of a small error $1 + \epsilon_G G$ on the detection statistics for a set of detectors $\mathbb{D}$ is modeled exactly by one DEM event, as follows:
\begin{itemize}
    \item If $G=S_P$, the DEM for $1 + \epsilon G$ is $\{\toevent(P): \epsilon_G\}$
    \item If $G=C_{P,P'}$ the DEM is only nontrivial if $\toevent(P)=\toevent(P')$ and $[P,P'] = 0$. If it is nontrivial, it is given by $\{\toevent(P): \pm 2\epsilon_G\}$ 
    \item If $G=A_{P,P'}$ the DEM is only nontrivial if $\toevent(P)=\toevent(P')$ and $[P,P'] \neq 0$. If it is nontrivial, it is given by $\{\toevent(P): \pm 2\epsilon_G\}$ 
    \item If $G = H_{P}$, the DEM has no events. A small error $1+\frac{1}{2}\epsilon_{H_P}\epsilon_{H_P'}H_PH_{P'}$ has a single-event DEM given by the $C_{P,P'}$ result above.
\end{itemize}
\end{lemma}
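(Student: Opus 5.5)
The plan is to compute, for each EEG type, the first-order change in the joint detector statistics directly. The joint distribution of detection histories is determined by the expectation values $\langle R\rangle = \Tr(R\,(1+\epsilon_G G)[\ketbra{\psi}])$ for all products $R$ of detectors in $\mathbb{D}$ (the stabilizer group generated by $\mathbb{D}$). A single-event DEM $\{D: p\}$ predicts $\langle R\rangle = 1-2p$ if $R$ anticommutes with an odd number of the events' flips, i.e. if $|R\cap D|$ (counting detector factors of $R$ in $D$) is odd, and $\langle R\rangle=1$ otherwise. Equivalently, with $\Delta$ extended linearly, $\langle R\rangle = 1-2p\,[\,R \text{ anticommutes with } \toevent^{-1}(D)\,]$. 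So for each EEG it suffices to show that $\Tr(R\,G[\ketbra{\psi}])$ equals $-2c$ when $R$ anticommutes with the relevant Pauli index, and $0$ otherwise, for a single constant $c$. Then the DEM is $\{D:c\epsilon_G\}$.

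The key tool is the identity $\Tr(R\,P\rho Q)=\Tr(QRP\rho)$ together with $R\ket{\psi}=\ket{\psi}$ for every $R$ in the detector group. For $S_P$ we get $\Tr(R(P\rho P-\rho)) = \bra{\psi}PRP-R\ket{\psi} = (\pm1 -1)$, which is $-2$ exactly when $[R,P]\neq0$. This gives $\{\toevent(P):\epsilon_G\}$. For $H_P$ we have $\Tr(R\cdot(-i[P,\rho]))=-i\bra{\psi}[R,P]\ket{\psi}$. This is either $0$ (commuting case) or $-2i\bra{\psi}RP\ket{\psi}$ with $RP$ anticommuting with the stabilizer $R$. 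Since $R\ket\psi=\ket\psi$, we get $\bra\psi RP\ket\psi=\bra\psi P\ket\psi$, and anticommutation with $R$ forces $\bra\psi P\ket\psi=-\bra\psi P\ket\psi=0$. So $H_P$ has no first-order event.

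For $C_{P,P'}$ and $A_{P,P'}$ I expand $\Tr(R\,C_{P,P'}[\rho]) = \bra\psi P'RP + PRP' - \tfrac12\{R,\{P,P'\}\}\ket\psi$. Write $PRP' = \pm RPP'$ with sign $(-1)^{[R,P]}$, and use $\bra\psi RX\ket\psi=\bra\psi X\ket\psi$. Everything then reduces to the scalar $\bra\psi PP'\ket\psi$ multiplied by sign factors. This scalar vanishes unless $PP'$ (up to phase) lies in the stabilizer group of $\ket\psi$. If it does, then $PP'$ commutes with all detectors, which gives $\toevent(P)=\toevent(P')$. The sign bookkeeping then yields nonzero contributions only when $R$ anticommutes with $P$ (hence with $P'$), with magnitude $2|\bra\psi PP'\ket\psi|=2$. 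Hermiticity of $PP'$ corresponds to $[P,P']=0$, which is the $C$ case. Anti-Hermiticity, i.e. $iPP'$ Hermitian, corresponds to $[P,P']\neq0$, which is the $A$ case. The factor $i$ in $A$ restores a real result. In the wrong commutation case the terms cancel, giving the trivial DEM, and the sign $\pm$ is the eigenvalue of $\ketbra{\psi}$ for $PP'$ (or $iPP'$). The last bullet follows from $H_PH_{P'}[\rho]=-[P,[P',\rho]]$, which expands to $P\rho P'+P'\rho P - \{PP',\rho\}$-type terms. These are a linear combination of $C_{P,P'}$ and a term with zero detector effect, so the $C$ result applies.

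I expect the main obstacle to be the careful sign and phase tracking in the $C/A$ cases. This includes showing that the anticommutator terms exactly cancel when $R$ commutes with $P$, and confirming that when $PP'$ is not in the stabilizer group (up to phase), all contributions vanish even if $\toevent(P)=\toevent(P')$. I would handle this by working in a stabilizer basis for $\ket\psi$ and noting that $\bra\psi X\ket\psi\in\{0,\pm1\}$ for Pauli $X$. I also need to be careful that detector-based statistics are fully determined by these $\langle R\rangle$, which holds because $\mathbb{D}$ is a commuting set with the Fourier relation between the $\langle R\rangle$ and the detection-history distribution.
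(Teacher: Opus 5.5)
Your proof is correct, and its core is the same computation the paper performs. Take a Pauli $R$ with $R\ket{\psi}=\ket{\psi}$ that anticommutes with the index (or both indices) of $G$. Then $\mathrm{Tr}(R\,G[\ket{\psi}\bra{\psi}])$ equals:
\begin{itemize}
\item $-2$ for $S_P$;
\item $-2\bra{\psi}\{P,P'\}\ket{\psi}=-4\bra{\psi}PP'\ket{\psi}$ for $C_{P,P'}$;
\item $2i\bra{\psi}[P,P']\ket{\psi}=4\bra{\psi}iPP'\ket{\psi}$ for $A_{P,P'}$;
\item $0$ for $H_P$.
\end{itemize}
All contributions vanish otherwise. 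So in the $C$/$A$ cases your constant is $c=\pm 2$ while the trace itself has magnitude $4$, which matches the stated $\pm 2\epsilon_G$.

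Where you differ is in how the argument is organized.

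\textbf{Products of detectors.} The paper computes $\beta(\psi,G,P)$ only for individual detectors $P\in\mathbb{D}$. It then needs a separate result, Lemma~\ref{lem:beta_detector_products}, to show that products containing an even number of affected detectors are unchanged and odd products shift by the same constant, before matching the polarizations to a single-event DEM. You instead observe that the computation uses nothing about $R$ beyond its being a Pauli that stabilizes $\ket{\psi}$. That property is closed under products, so the computation applies verbatim to the whole group generated by $\mathbb{D}$, and the parity structure of a single-event DEM is read off directly.

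\textbf{The condition $\Delta(P)=\Delta(P')$.} The paper assumes this in the $C$ and $A$ cases and only discusses the alternative when one of the two sets is empty. You derive it. A nonzero contribution requires $\bra{\psi}PP'\ket{\psi}\neq 0$, which forces $PP'$ to commute with every detector. This also disposes of the case $\Delta(P)\neq\Delta(P')$ with both sets nonempty, which the paper leaves implicit.

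\textbf{The $H_PH_{P'}$ bullet.} Your double-commutator expansion gives exactly $H_PH_{P'}=C_{P,P'}-\tfrac12[[P,P'],\,\cdot\,]$. So a residual Hamiltonian term arises only when $P$ and $P'$ anticommute. The paper's displayed case split for $H_{Q_1}H_{Q_2}$ has the commuting and anticommuting cases interchanged. This is harmless for the lemma, since that term has no first-order effect on detectors either way.

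\textbf{What each route buys.} The paper's detector-by-detector presentation keeps the single-detector quantities $\beta(\psi,G,P)$ that it reuses later (e.g.\ in Theorem~\ref{thm:cptp_dem}). Your version is shorter, needs no auxiliary product lemma, and is slightly more complete.
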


\vspace{8pt}
\noindent
\textbf{Proof of Lemma~\ref{lem:eeg_to_event}:} Our proof proceeds by considering the action of EEGs on detector Paulis $\langle{P}\rangle$, which is determined by computing $\Tr(PG[\psi])$ for each $P \in \mathbb{D}$. We show that for each $P \in \mathbb{D}$, either $\Tr(PG[\ketbra{\psi}]) = 0$ or $\Tr(PG[\ketbra{\psi}]) = k_{G,\psi}\bra{\psi} P \ket{\psi}$, where $k_{G,\psi}$ is a constant that is independent of $P$. For a detector $P$, the error $1+\epsilon G$ gives the expectation value
\begin{equation}
    \langle P \rangle =  \langle P \rangle_{0}  +  \beta(\psi, G, P) \epsilon_{G}
\end{equation}
where $ \langle P \rangle_{0} = \textrm{Tr}(P\ket{\psi}\bra{\psi})$ and $\beta(\psi, G, P) = \textrm{Tr}(PG[\ket{\psi}\bra{\psi}])$. The values of $\beta(\psi, G, P)$ determine how each error generator impacts the detector Pauli expectation values $\langle P \rangle$, and we explicitly compute them here.

First, we consider Pauli stochastic error generators $S_Q$. We have that 
\begin{equation}
   \beta(\psi, S_Q, P) =  \begin{cases} \mp 2\langle P \rangle_0 & \textrm{if } [P,Q] \neq 0\\
    0  & \textrm{else}\\
    \end{cases}
\end{equation}
If $\Delta(Q)=\emptyset$, then $[P,Q]=0$ for any detector $P$, so $\beta(\psi, S_Q, P)=0$. Otherwise, for $P \in \Delta(Q)$, $\beta(\psi, S_Q, P)=-2$  (because we have assumed $\langle P \rangle_0=1)$.

Now, we consider the case of an Pauli correlation error generator $C_{Q_1, Q_2}$ with $\Delta(Q_1)=\Delta(Q_2)$, and compute its effect on a detector $P$, where $P \in \Delta(Q_1)$. 

\begin{align}
    \Tr(PC_{Q_1, Q_2}[\ketbra{\psi}]) & = \bra{\psi} Q_1PQ_2 \ket{\psi} + \bra{\psi} Q_2PQ_1 \ket{\psi} - \frac{1}{2} \bra{\psi}\{\{Q_1, Q_2\},P\} \ket{\psi})
\end{align}
Because $P$ is a detector, 
\begin{align}
    \Tr(PC_{Q_1, Q_2}[\ketbra{\psi}]) & = \bra{\psi} Q_1PQ_2 \ket{\psi} + \bra{\psi} Q_2PQ_1 \ket{\psi} - \langle P \rangle_0  \bra{\psi}\{Q_1, Q_2\} \ket{\psi})
\end{align}
If either $[P,Q_1]=0$ or $[P,Q_2]=0$,
\begin{align}
    \Tr(PC_{Q_1, Q_2}[\ketbra{\psi}]) & = \langle P \rangle_0 \bra{\psi} Q_1Q_2 \ket{\psi} + \langle P \rangle_0 \bra{\psi} Q_2Q_1 \ket{\psi} - \langle P \rangle_0  \bra{\psi}\{Q_1, Q_2\} \ket{\psi})=0
\end{align}
Otherwise, 
\begin{align}
    \Tr(PC_{Q_1, Q_2}[\ketbra{\psi}]) & = -\langle P \rangle_0 \bra{\psi} Q_1Q_2 \ket{\psi} - \langle P \rangle_0 \bra{\psi} Q_2Q_1 \ket{\psi} - \langle P \rangle_0 \frac{1}{2} \bra{\psi}\{Q_1, Q_2\} \ket{\psi}) = -2\langle P \rangle_0  \bra{\psi}\{Q_1, Q_2\} \ket{\psi}
\end{align}
From which we conclude that
\begin{equation}
   \beta(\psi, C_{Q_1, Q_2}, P) =  \begin{cases} \pm 4\langle P \rangle_0 & \textrm{if } [Q_1,Q_2]=0, [Q_1,P]\neq 0, [Q_2,P]\neq 0,\textrm{ and } Q_1Q_2\ket{\psi}=\mp \ket{\psi}\\
    0  & \textrm{else}\\
    \end{cases} 
\end{equation}

If $\toevent(Q_1)=\emptyset$ and $\toevent(Q_2)\neq \emptyset$ then for any detector Pauli $P$, $[P,Q_1]=0$. If $P \in \toevent(Q_2)$, then $[P,Q_2] \neq 0$, 
It suffices to consider the case where $\toevent(Q_1)\neq \emptyset$ and $\toevent(Q_2)\neq \emptyset$ (because otherwise, $[P,Q_1]=0$ or $[P,Q_2]=0$, which implies that $\beta(\psi, C_{Q_1, Q_2}, P) = 0$). In this case,
\begin{equation}
   \beta(\psi, C_{Q_1, Q_2}, P) =  \begin{cases} \pm 4\langle P \rangle_0 & \textrm{if } [Q_1,Q_2]=0 \textrm{ and } Q_1Q_2\ket{\psi}=\mp \ket{\psi}\\
    0  & \textrm{else}\\
    \end{cases} 
\end{equation}
We see that $\delta(\psi, Q_1, Q_2, P)$ is independent of the detector $P$ within the set $\toevent(Q_1)=\toevent(Q_2)$. This means that $1+\epsilon_{C_{Q_1,Q_2}}C_{Q_1,Q_2}$ affects all such detectors $P$ the same amount and in the same direction.

We now consider the case of a active error generator $A_{Q_1, Q_2}$ with $\Delta(Q_1)=\Delta(Q_2)$, and compute its effect on a detector $P \in \Delta(Q_1)$.
\begin{align}
    \Tr(PA_{Q_1, Q_2}[\ketbra{\psi}]) & = i\bra{\psi} Q_2PQ_1 \ket{\psi} - i\bra{\psi} Q_1PQ_2 \ket{\psi} + \frac{i}{2} \bra{\psi}\{P,[Q_1, Q_2]\} \ket{\psi}
\end{align}
Because $P$ is a detector, 
\begin{align}
    \Tr(PA_{Q_1, Q_2}[\ketbra{\psi}]) & = i\bra{\psi} Q_2PQ_1 \ket{\psi} - i\bra{\psi} Q_1PQ_2 \ket{\psi} + i \langle P \rangle_0  \bra{\psi}[Q_1, Q_2] \ket{\psi}
\end{align}
If either $[Q_1, P]=0$ or $[Q_2, P]=0$, then
\begin{align}
    \Tr(PA_{Q_1, Q_2}[\ketbra{\psi}]) & = i\langle P \rangle_0\bra{\psi} Q_2Q_1 \ket{\psi} - i\langle P \rangle_0\bra{\psi} Q_1Q_2 \ket{\psi} + i \langle P \rangle_0  \bra{\psi}[Q_1, Q_2] \ket{\psi}=0
\end{align}
Otherwise, 
\begin{align}
    \Tr(PA_{Q_1, Q_2}[\ketbra{\psi}]) & = -i\langle P \rangle_0\bra{\psi} Q_2Q_1 \ket{\psi} + i\langle P \rangle_0\bra{\psi} Q_1Q_2 \ket{\psi} + i \langle P \rangle_0  \bra{\psi}[Q_1, Q_2] \ket{\psi}=2 i \langle P \rangle_0  \bra{\psi}[Q_1, Q_2] \ket{\psi},
\end{align}
from which we conclude that 
\begin{equation}
   \beta(\psi, A_{Q_1, Q_2}, P) =  \begin{cases} \pm 4\langle P \rangle_0 & \textrm{if } [Q_1, P]\neq 0, [Q_2, P]\neq 0, [Q_1,Q_2]\neq 0 \textrm{ and } iQ_1Q_2\ket{\psi}=\pm \ket{\psi}\\
    0  & \textrm{else}\\
    \end{cases} 
\end{equation}
In this case, if $\Delta(Q_1)=\emptyset$ or $\Delta(Q_2)=\emptyset$, then $\delta(\psi, Q_1, Q_2, P)=0$, because either $[Q_1, P] = 0$ or $[Q_2, P] = 0$. In the case where $\Delta(Q_1)=\emptyset$ or $\Delta(Q_2)\neq\emptyset$, the magnitude of $\delta(\psi, Q_1, Q_2, P)$ and its sign relative to $\langle P \rangle_0$ is independent of $P$ (analogously to the case of $C$-type EEGs above), so for all $P \in \Delta(Q_1)$, $\langle P \rangle$ is perturbed in the same way by $A_{Q_1, Q_2}$.

Finally, we consider $H$-type EEGs. These cannot contribute at first order to a change in $\langle P \rangle$ for $P \in \mathbb{D}$. This is because
\begin{equation}
   \beta(\psi, H_{Q}, P) =  \begin{cases} \pm 2\langle P \rangle_0 & \textrm{if } [Q, P]\neq 0, iPQ = \mp \ket{\psi}\\
    0  & \textrm{else}\\
    \end{cases}, 
\end{equation}
however, $P\ket{\psi} = \ket{\psi}$, so $QP\ket{\psi} = \ket{\psi}$ implies $Q\ket{\psi} = \pm \ket{\psi}$, which in turn requires $[Q,P]=0$.
To determine the leading-order contribution of $H$-type EEGs to a DEM, we consider an error $1+\frac{1}{2}\epsilon_{H_P}\epsilon_{H_P'}H_PH_{P'}$. We have 
\begin{equation}
   H_{Q_1}H_{Q_2} =  \begin{cases} \-iH_{Q_1Q_2} + C_{Q_1, Q_2} & \textrm{if }  [Q_1,Q_2]= 0,\\
    C_{Q_1, Q_2} & \textrm{if }  [Q_1,Q_2] \neq 0\\
    \end{cases} 
\end{equation}
$H_{Q_1Q_2}$ has no impact on definite-outcome Pauli measurements, so we need only analyze the $C_{Q_1, Q_2}$ terms, which we have already done above. \qed

These computations show that for a fixed EEG $G$, any detector Pauli $P$ changes satisfies $\beta(\psi, G, P)=0$ or $\beta(\psi, G, P)=k$ for some $k$. We now show that the resulting structure of Pauli expectation values implies that the error $1+\epsilon G$ corresponds to a single-event DEM. To do this, we first use the estimated values of $\langle P \rangle$ for each detector $P$ to reconstruct the expectation values for all products of detector Paulis. 

\begin{lemma}\label{lem:beta_detector_products}
    Let $D$ be a set of detectors. For any $P, P'$ such that $P$ and $P'$ are either in $D$ or are products of elements of $D$, the following holds:
    \begin{enumerate}
        \item If $\beta(\psi, G, P) = k $ and $\beta(\psi, G, P') = k $, then $\beta(\psi, G, PP') = 0$.
        \item If $\beta(\psi, G, P) = k $ and $\beta(\psi, G, P') = 0$, then $\beta(\psi, G, PP') = k$.
    \end{enumerate}
\end{lemma}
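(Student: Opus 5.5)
Our plan is to compute $\beta(\psi,G,PP')=\Tr(PP'\,G[\ketbra{\psi}])$ directly from the explicit EEG formulas. We reuse the case analysis in the proof of Lemma~\ref{lem:eeg_to_event}, now with the detector Pauli replaced by a product of detectors. Products of commuting detectors that stabilize $\ket{\psi}$ again stabilize $\ket{\psi}$; we take signs so that $PP'\ket{\psi}=\ket{\psi}$. Hence $PP'$ is itself a legitimate "detector" to which the formulas of Lemma~\ref{lem:eeg_to_event} apply verbatim.

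The key structural observation is that in every case computed there, $\beta(\psi,G,R)$ is nonzero only when the Pauli index (or both indices) of $G$ anticommutes with $R$. When it is nonzero, its value is a constant $k_{G,\psi}$ depending only on $G$ and $\psi$, not on $R$. For $S_Q$ we have $k=-2$. For $C_{Q_1,Q_2}$ and $A_{Q_1,Q_2}$ the constant is a multiple of $\bra{\psi}\{Q_1,Q_2\}\ket{\psi}$ or $i\bra{\psi}[Q_1,Q_2]\ket{\psi}$. For the $H_{Q_1}H_{Q_2}$ case we reduce to $C$. For $C$ and $A$ there is one subtlety: $\beta$ vanishes if either index commutes with $R$. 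Since we restrict to EEGs with $\Delta(Q_1)=\Delta(Q_2)$, as in the lemma, for $R$ a product of detectors the two indices have identical commutation behavior with $R$. The relevant predicate is therefore "$R$ anticommutes with $Q_1$", equivalently with $Q_2$.

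We then use multiplicativity of commutation signs. Write $\chi_Q(R)\in\{0,1\}$ for $1$ iff $R$ anticommutes with $Q$. Then $\chi_Q(PP')=\chi_Q(P)\oplus\chi_Q(P')$. In item 1, $\beta(P)=\beta(P')=k\neq0$ means both anticommute with the index, so $PP'$ commutes with it and $\beta(PP')=0$. In item 2, $\beta(P')=0$ together with the index structure gives $\chi(P')=0$ and $\chi(P)=1$, so $\chi(PP')=1$ and $\beta(PP')=k$.

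The main obstacle is item 2 when $k\neq 0$ but $\beta(P')=0$ could occur "accidentally" rather than from commutation. For example, for $C$ or $A$ EEGs the constant $\bra{\psi}\{Q_1,Q_2\}\ket{\psi}$ might vanish, in which case $k=0$ and both items hold trivially. We will handle this by splitting into two cases. If $k=0$, every $\beta$ is zero and the claims are immediate, with item 1 reading $0=0$. If $k\neq0$, then $\beta(R)\neq0\iff\chi(R)=1$ for all stabilizing products $R$, and the parity argument above applies. We also check that the sign of $k$ does not depend on $R$. This follows because $\langle R\rangle_0=1$ for every stabilizing product, which removes the $\pm\langle P\rangle_0$ ambiguity appearing in the formulas of Lemma~\ref{lem:eeg_to_event}.
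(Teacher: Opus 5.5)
Your proposal is correct and is essentially the paper's argument. Both rest on the parity fact that a Pauli index anticommutes with $PP'$ iff it anticommutes with exactly one of $P,P'$ (the paper writes this as $[Q,PP']=P[Q,P']+[Q,P]P'$ and applies it separately to $S$-type and $C$/$A$-type generators), combined with the observation from Lemma~\ref{lem:eeg_to_event} that $\beta$ takes a single detector-independent nonzero value exactly when the index, or both indices, anticommute.

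Your uniform form $\beta(\psi,G,R)=k\,\chi_{Q_1}(R)$ is slightly tighter than the paper's $C$/$A$ item-2 step. That step only records that one of $[Q_1,P']$, $[Q_2,P']$ vanishes but then uses both, whereas you use that $Q_1$ and $Q_2$ commute identically with every product of detectors. This is in fact automatic when $k\neq 0$, since then $Q_1Q_2$ is, up to a phase, a stabilizer of $\ket{\psi}$.
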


\vspace{8pt}
\noindent
\textbf{Proof of Lemma~\ref{lem:beta_detector_products}:} 
We start with the case of stochastic Pauli error generators. Consider $G=S_Q$ with $\beta(\psi, G, P) = k $ and $\beta(\psi, G, P') = k $. We have that $[P,Q] \neq 0$ and $[P',Q] \neq 0$, and therefore $[Q, PP'] = P[Q, P']+[Q,P]P' = 2PQP' + 2QPP' = 0$, so $\beta(\psi, G, PP') = 0$. Similarly, if $\beta(\psi, G, P) = k $ and $\beta(\psi, G, P') = 0$, then $[P,Q] \neq 0$ and $[P',Q] = 0$, which implies that $[Q, PP'] = P[Q, P']+[Q,P]P' = [Q,P]P' \neq 0$. Furthermore, $Q\ket{\psi}= \ket{\psi}$ (because $\beta(\psi, G, P) = k $). This implies that $\beta(\psi, G, PP') = k$.

We now consider the cases of Pauli-correlation and active error generators $G$. Let $G$ be indexed by Paulis $Q_1, Q_2$. $\beta(\psi, G, P) = k $ implies that $[Q_1, P] \neq 0$ and $[Q_2, P] \neq 0$, and $\beta(\psi, G, P') = k $ implies that $[Q_1, P'] \neq 0$ and $[Q_2, P'] \neq 0$. Because $P$ and $P'$ are detectors, $[P,P']=0$, and therefore $[Q_1, PP'] = P[Q_1, P']+[Q_1,P]P' = 2PQ_1P' + 2Q_1PP' = 0$. Similarly, $[Q_2, PP']=0$ Therefore, $\beta(\psi, G, P) = 0$. Similarly, if $\beta(\psi, G, P) = k $ and $\beta(\psi, G, P') = 0$, then if $G$ is a $C$ or $A$ term, we know that $Q_1Q_2\ket{\psi} = \pm \ket{\psi}$ and (for $C$-type EEGs) $[Q_1, Q_2]=0$ or (for $A$-type EEGs) $[Q_1, Q_2] \neq 0$. Furthermore, it must be that either $[Q_1,P']=0$ or $[Q_2,P']=0$. This implies that $[Q_1, PP'] = P[Q_1, P']+[Q_1, P]P' = [Q_1, P]P' \neq 0$, and similarly $[Q_2, PP'] \neq 0$. Therefore, $\beta(\psi, G, PP') = k$. \qed

\vspace{8pt}

Applying Lemma~\ref{lem:beta_detector_products} allows us to reconstruct the full vector of Pauli expectation values for $\mathcal{P}(D)$. We have that any Pauli $P$ that is a product of an even number of detectors has $\beta(\psi, G, P) = 0$, and any Pauli $P$ that is a product of an odd number of detectors has $\beta(\psi, G, P) = k$. We use these values to compute the rate of each DEM event. This completely determines the polarizations $\langle \prod_{P \in D'} P\rangle$, and it's straightforward to see that a DEM with the single event $D$ with probability $-\nicefrac{k}{2}$ exactly reproduces these polarizations (which fully describe the detector outcome distribution). To see this, note that the polarization of $\langle \prod_{P \in D'} P\rangle$ for $D' \subseteq D$ with $|D'|$ odd is $1-2p_{D'} = 1+k$, and the polarization of $\langle \prod_{P \in D'} P\rangle$ for $D' \subseteq D$ with $|D'|$ even is $1$.

We conclude that the error $1+\epsilon_G G$ causes a change in exactly one DEM event probability, proving Lemma~\ref{lem:eeg_to_event}. \qed

\vspace{8pt}
\noindent
\textbf{Proof of Theorem~\ref{thm:eeg_to_event}:} We  use Lemma~\ref{lem:eeg_to_event} to prove Theorem~\ref{thm:eeg_to_event}. For an error generator whose EEG representation is $\sum_{G \in \mathbb{G}} \epsilon_G G$ where all terms $G \in \mathbb{G}$ correspond to the same DEM event, its exponentiation can be expressed via Taylor expansion in terms of compositions of error generators,
\begin{equation}
    \exp(\sum_{G \in \mathbb{G}} \epsilon_G G) = 1 + \sum_{G \in \mathbb{G}} \epsilon_G G + \sum_{G,G' \in \mathbb{G}} \epsilon_G\epsilon_{G'} GG' + \dots \label{eqn:taylor_expand_class}
\end{equation}
All Pauli indices $P$ in the Taylor expanded error channel is a product of EEGs whose indices $P$ all satisfy $\Delta(P) = D$ for fixed $D$ (following the composition rules in Ref~\cite{miller2025simulation}). To complete our proof, we require one final technical lemma. 
\begin{lemma}\label{lem:p_or_i} For any Pauli operators $Q$ and $Q'$, if $\toevent(Q)=\toevent(Q')$, then $\toevent([Q,Q'])$, $\toevent(\{ Q, Q'\})$, and $\toevent(QQ')$ are all either $\toevent(Q)$ or $\toevent(I)$. 
\end{lemma}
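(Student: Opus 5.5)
The plan is to reduce everything to symplectic (commutation) bookkeeping on Pauli operators. For Paulis $A,B$, let $\omega(A,B)\in\{0,1\}$ be $0$ if $A$ and $B$ commute and $1$ if they anticommute. Two standard facts drive the argument. First, $\omega$ depends only on the Paulis up to phase. Second, $\omega$ is bilinear under multiplication: $\omega(P,QQ')=\omega(P,Q)+\omega(P,Q') \bmod 2$. By definition, $\toevent(Q)$ is the set of detectors $P\in\mathbb{D}$ with $\omega(P,Q)=1$.

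\textbf{Step 1: the product $QQ'$.} For any detector $P$, bilinearity gives $\omega(P,QQ')=\omega(P,Q)+\omega(P,Q')$. The hypothesis $\toevent(Q)=\toevent(Q')$ says that $\omega(P,Q)=\omega(P,Q')$ for every $P\in\mathbb{D}$. Hence $\omega(P,QQ')=0$ for all detectors, so $\toevent(QQ')=\emptyset=\toevent(I)$. In fact this gives the stronger conclusion that the product always lands in the trivial class; the weaker ``either/or'' statement in the lemma follows immediately.

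\textbf{Step 2: the commutator and anticommutator.} Since $Q$ and $Q'$ are Paulis, $QQ'=\pm Q'Q$. The case split is as follows.

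\begin{itemize}
\item If $Q$ and $Q'$ commute, then $[Q,Q']=0$ and $\{Q,Q'\}=2QQ'$.
\item If they anticommute, then $\{Q,Q'\}=0$ and $[Q,Q']=2QQ'$.
\end{itemize}

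In either case, each nonzero expression is a scalar multiple of the Pauli $QQ'$. Because $\omega$ ignores scalar phases, $\toevent$ of that expression equals $\toevent(QQ')=\toevent(I)$ by Step 1.

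The zero operator needs a convention. It is not a Pauli, but it commutes with every detector, so under the defining condition ``$P\in D$ iff $[P,\cdot]\neq 0$'' we get $\toevent(0)=\emptyset=\toevent(I)$. I would state this convention explicitly, since the paper defines $\toevent$ only on Paulis. I would also note that extending $\toevent$ to scalar multiples of Paulis is well defined.

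\textbf{Main obstacle.} The mathematics here is trivial. The real difficulty is reconciling the statement with its intended use in the proof of Theorem~\ref{thm:eeg_to_event}, where EEG compositions such as $S_QS_{Q'}$ act as $\rho\mapsto QQ'\rho Q'Q$ and their terms are re-expanded in the EEG basis. In those expansions, Pauli indices like $Q$ itself can reappear alongside products like $QQ'$. I expect this is why the lemma allows the outcome $\toevent(Q)$. The lemma as literally stated only concerns the product, commutator, and anticommutator, which all fall into the trivial class by the argument above.

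To make the lemma useful for the theorem, I would add a closure remark. The set of Paulis $R$ with $\toevent(R)\in\{\toevent(Q),\emptyset\}$ is a group under multiplication up to phase, because the syndrome map $R\mapsto\toevent(R)$ is a homomorphism to $\mathbb{F}_2^{|\mathbb{D}|}$. Consequently, every Pauli index appearing in the Taylor expansion~\eqref{eqn:taylor_expand_class} lies in class $D$ or $\emptyset$. By Lemma~\ref{lem:eeg_to_event}, indices in the class-$\emptyset$ part do not flip detectors.
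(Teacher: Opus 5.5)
Your argument is correct and is essentially the paper's. The paper shows $[QQ',P]=[Q,P]Q'+Q[Q',P]=0$ for every detector $P$, since $Q$ and $Q'$ have the same commutation relation with $P$; this is the operator form of your bilinearity of $\omega$, and it gives $\Delta(QQ')=\emptyset$. It then handles the commutator and anticommutator via $[[Q,Q'],P]=[QQ',P]-[Q'Q,P]$ and $[\{Q,Q'\},P]=[QQ',P]+[Q'Q,P]$ instead of your explicit case split, so like you it effectively proves the stronger claim that all three land in the trivial class (your homomorphism/closure remark is extra and not in the paper, but it is a cleaner justification for how the lemma is used in the proof of Theorem~\ref{thm:eeg_to_event}).
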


\vspace{8pt}
\noindent
\textbf{Proof of Lemma~\ref{lem:p_or_i}:} Let $P \in \mathbb{D}$ be a detector Pauli. For a Pauli $Q$, $P \in \Delta(Q)$ if $[Q,P] \neq 0$. Define $\eta_{P,P'}$ for $P, P' \in \pauli_n$ as $\eta_{P,P'}=0$ if $[P,P'] = 0$ and  $\eta_{P,P'}=2$ if $[P,P'] \neq 0$. We have that
\begin{equation*}
[QQ',P] = [Q,P]Q'+Q[Q',P]=\eta_{Q,P}QPQ'-\eta_{Q',P_s}QPQ'=0,
\end{equation*}
where the last line follows because $P \in \Delta(Q), P \in \Delta(Q')$ (which implies $\eta_{Q,P}=\eta_{Q',P}$). Therefore, $\Delta(QQ')=0$.  We also have that $[[Q,Q'],P] = [QQ',P]-[Q'Q,P]=0$ and $[\{Q,Q'\},P] = [QQ',P]+[Q'Q,P]=0$ using the result above.  \qed

\vspace{8pt}

It follows from Lemma~\ref{lem:p_or_i} that every EEG in the decomposition of the Taylor expanded error has indices satisfying $\Delta(P) = D$ or $\Delta(P) = \emptyset$. Therefore, $\exp(\sum_{G \in \mathbb{G}} \epsilon_G G)$ is a sum of error generator terms from DEM event classes $D$ or $\emptyset$, which implies, via application of Lemma~\ref{lem:eeg_to_event} and linearity, that it can be completely described by a DEM with only event $D$. \qed

\vspace{8pt} 

We have shown that if $\mathcal{E} = \exp(\sum_{G \in \mathbb{G}} \epsilon_G G)$, where $\mathbb{G}_1$ consists of only EEGs from a single DEM event class, then $\mathcal{E}$ is an error channel that causes one DEM event. The probability of this event is $p_D = \Tr(P\mathcal{E}[\ketbra{\psi}])$, where $P$ is a representative detector from the DEM event. The value of $p_D$ can be estimated by approximating this matrix exponential, e.g., using a Taylor expansion.

\subsection{S-Only error models}

Here, we show how to exactly compute the DEM event rate for a single-event channel with only $S$-type EEGs. Letting $Q_i$ be a representative Pauli for DEM event $i$, the rate of DEM event $i$ is 
\begin{align}
    \frac{1}{2}\left(1-\Tr(Q_i \exp(\sum_{S_j \in G_i}\epsilon_i S_j)[\ketbra{\psi}])\right) & = \frac{1}{2}\left( 1 - \exp(\sum_{S_j \in G_i} 2\epsilon_j)\right),
\end{align}
where $\zeta$ is a normalization constant. We then have
\begin{align}
    \frac{1}{2}\left(1-\Tr(Q_i \exp(\sum_{S_j \in G_i}\epsilon_i S_j)[\ketbra{\psi}])\right) & = \frac{1}{2}\left( 1 - \Tr(Q_i\mathcal{E}_{Q_i,Q_i}[Q_i])\right) \\
    & = \sum_{Q':[Q',Q]\neq0} p_{Q'},
\end{align}
where $p_{Q'}$ denotes the probability of a Pauli $Q'$ error in the error channel $\mathcal{E}_c$. 

\section{Detection Statistics in Error Generator Models}
\label{app:cptp}
In this appendix, we consider the ability of DEMs to approximate detector statistics generated by CPTP errors. We first prove that there is always a DEM with nonnegative probabilities that faithfully reproduces detection statistics at leading order. We then give an example where a DEM would need a negative-probability event to reproduce statistics beyond leading order. 

\subsection{Reproducing CPTP Dynamics with DEMs}
Error generator models are capable of modeling a much richer space of dynamics than DEMs, which model error events as stochastic and independent. However, as we show here, a DEM is capable of accurately modeling the probabilities of detection histories at first order in $S$, $C$ and $A$ error rates, and second order in $H$ error rates. 

\begin{theorem}\label{thm:cptp_dem}
    A CPTP model cannot generate a negative DEM event rate at leading order in EEG rates.
\end{theorem}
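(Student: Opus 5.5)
The approach is to show that, at leading order, the rate of every DEM event equals a probability computed from a genuine CPTP map, and is therefore nonnegative.

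By Lemma~\ref{lem:eeg_to_event} and Theorem~\ref{thm:eeg_to_event}, the leading-order rate of event $D$ is $p_D = \frac{1}{2}(1-\langle P\rangle)$ for a representative $P\in D$. Here $\langle P\rangle$ is evaluated under the single-event channel $\exp(G_D)$ and truncated to first order in the $S$, $C$, $A$ rates and second order in the $H$ rates. We must show $p_D\ge 0$.

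Next, group the leading-order contributions to $p_D$ by sector. A convenient way to do this is to work with the Choi/process-matrix ($\chi$-matrix) description of the circuit error. Any CPTP map $\mathcal{E}_c$ close to the identity can be written as
\begin{equation}
\mathcal{E}_c[\rho]=\sum_{Q,Q'}\chi_{Q,Q'}\,Q\rho Q',
\end{equation}
where $\chi\succeq 0$. At leading order, each off-diagonal entry $\chi_{Q,Q'}$ with $Q,Q'\neq I$ is a linear combination of $c_{Q,Q'}$, $a_{Q,Q'}$ and $\tfrac12 h_Qh_{Q'}$. Likewise, each diagonal entry $\chi_{Q,Q}$ is $s_Q+h_Q^2$ plus higher-order terms. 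This is precisely the combination that appears in the $H_{Q}H_{Q'}\to C_{Q,Q'}$ rule of Lemma~\ref{lem:eeg_to_event}.

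Then restrict $\chi$ to the block $\chi_D$ indexed by Paulis $Q$ with $\Delta(Q)=D$. The computations in the proof of Lemma~\ref{lem:eeg_to_event} give
\begin{equation}
\langle P\rangle = 1 - 2\sum_{Q,Q'\in\Delta^{-1}(D)} \chi_{Q,Q'}\bra{\psi}Q'Q\ket{\psi}.
\end{equation}
Since $Q'Q$ stabilizes $\ket{\psi}$ up to a phase whenever it contributes, this expression should equal $1-2\bra{v}\chi_D\ket{v}$ for the vector $v_Q$ defined by $Q\ket{\psi}=v_Q\ket{\phi_D}$. Here $\ket{\phi_D}$ is a fixed reference state in the syndrome sector $D$. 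All Paulis with the same $\Delta(Q)$ map $\ket{\psi}$ into the same joint eigenspace of the detectors, and within that eigenspace into states that differ only by stabilizer-equivalent phases.

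Consequently $p_D=\bra{v}\chi_D\ket{v}=\Tr\big(\Pi_D\,\mathcal{E}_c[\ketbra{\psi}]\big)$ at leading order, where $\Pi_D$ projects onto the detector syndrome pattern $D$. This quantity is nonnegative because $\chi_D$ is a principal submatrix of a positive semidefinite matrix; equivalently, it is an outcome probability of a CPTP map. A single-flip analysis makes the same point: at leading order, the probability of observing syndrome $D$ is just the weight of $\mathcal{E}_c[\psi]$ on the $D$-sector, and the DEM rate for $D$ equals that probability.

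The main obstacle is the middle step. It requires showing that, at leading order, $\chi_{Q,Q'}$ for non-identity $Q,Q'$ is exactly the combination produced by the $C$, $A$ and $HH$ terms with the right signs, and that the phases $\bra{\psi}Q'Q\ket{\psi}$ combine into a Hermitian form $\bra{v}\chi_D\ket{v}$ rather than into something indefinite. I would handle this by writing $\mathcal{E}_c=\exp(G_c)$ to second order in $H$ and first order in the other sectors, and reading off $\chi$ through the standard linear map from EEG rates to process-matrix entries given in Ref.~\cite{blumekohout2022taxonomy}. Cross-event terms with $\Delta(Q)\ne\Delta(Q')$ must also be shown not to contribute to $\langle P\rangle$ at this order. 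For that I would use Lemma~\ref{lem:p_or_i}, since such a $Q'Q$ flips some detector and so has zero expectation in $\ket{\psi}$.
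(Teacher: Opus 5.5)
Your overall strategy works, but the step that turns the class-$D$ contribution into a single quadratic form $\bra{v}\chi_D\ket{v}$ fails as written. You assume every Pauli $Q$ with $\Delta(Q)=D$ sends $\ket{\psi}$ to a multiple of one reference state $\ket{\phi_D}$. The detectors, however, do not pin down $\ket{\psi}$ inside a syndrome sector. For $Q,Q'$ in the same class, $Q'Q$ commutes with every detector, but it can anticommute with some other stabilizer of $\ket{\psi}$, and then $Q\ket{\psi}\perp Q'\ket{\psi}$. In the expanded circuit this happens, for example, for $Q'=QZ_v$ with $v$ a virtual qubit holding a random (non-deterministic) measurement outcome. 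A minimal instance is $\ket{\psi}=\ket{0}\ket{+}$, $\mathbb{D}=\{Z_1\}$, $Q=X_1$, $Q'=X_1Z_2$. Then $\bra{\psi}Q'Q\ket{\psi}=0$ and the class contributes $\chi_{QQ}+\chi_{Q'Q'}$. No single vector $v$ reproduces this, since you would need $|v_Q|=|v_{Q'}|=1$ and $\bar v_Q v_{Q'}=0$. So the claim that the expression ``should equal $1-2\bra{v}\chi_D\ket{v}$'' is false in general. Your positivity argument, as stated, covers only the case where all $Q\ket{\psi}$ are parallel.

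The repair is short and stays inside your framework. Your formula for $\langle P\rangle$ is correct, and it gives $p_D=\sum_{Q,Q'\in\Delta^{-1}(D)}\chi_{QQ'}K_{Q'Q}=\Tr(\chi_D K_D)$ with $K_{Q'Q}=\bra{\psi}Q'Q\ket{\psi}$. Here $K_D$ is the Gram matrix of the vectors $Q\ket{\psi}$, so it is positive semidefinite, and the trace of a product of two PSD matrices is nonnegative. Equivalently, the $Q\ket{\psi}$ are pairwise parallel or orthogonal, so you can partition $\Delta^{-1}(D)$ by the ray of $Q\ket{\psi}$. Then $K_D$ is block diagonal with rank-one blocks, and $p_D$ is a sum of your quadratic forms, one per block.

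You should also state which matrix you need to be PSD: the leading-order block, i.e.\ the S/C/A rate block plus $h_Qh_{Q'}$. Positivity of the exact $\chi$ only gives this up to higher-order corrections. To get it exactly you need PSD rate matrices (the property the paper invokes for its $M$) or a scaling argument along a ray of CPTP models.

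With these fixes, your route is a cleaner version of the paper's. The paper works detector by detector and bounds the $HH$ terms by a separate pairing argument. For the S/C/A terms it builds $-2\boldsymbol{x}^\dagger M\boldsymbol{x}$ from one reference Pauli, using Lemmas~\ref{lem:c_terms} and~\ref{lem:a_terms} and a four-case analysis; that single-reference $\boldsymbol{x}$ tacitly needs the same block decomposition. Your $\chi$-matrix version does three things more directly:
\begin{itemize}
\item it absorbs the $H$ terms as the rank-one part of $\chi$;
\item it replaces the transitivity lemmas by the Gram structure;
\item it identifies $p_D$ with the leading-order probability $\Tr(\Pi_D\mathcal{E}_c[\ketbra{\psi}])$ of observing syndrome $D$.
\end{itemize}
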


First we prove a few useful lemmas on the impact of $A$ and $C$ EEGs on detector Pauli observables. 

\begin{lemma}\label{lem:c_terms}
    Suppose that $\beta(Q, C_{P_1,P_2}, \psi) = 4\gamma_{1,2}$ and $\beta(Q, C_{P_3,P_2}, \psi) = 4\gamma_{1,3}$ for a detector $Q$ and $\gamma_{1,2}, \gamma_{1,3} \in \{1, -1\}$. Then  $\beta(Q, C_{P_1,P_3}, \psi) = 4\gamma_{1,2}\gamma_{1,3}$.
\end{lemma}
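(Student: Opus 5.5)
The plan is to reduce the statement to the multiplicativity of stabilizer eigenvalues, using the explicit form of $\beta$ for $C$-type EEGs computed in the proof of Lemma~\ref{lem:eeg_to_event}. That computation shows $\beta(Q, C_{P_i,P_j}, \psi)$ is nonzero only when three conditions hold: $[P_i,P_j]=0$; both $P_i$ and $P_j$ anticommute with the detector $Q$; and $P_iP_j\ket{\psi}=\lambda_{ij}\ket{\psi}$ for some $\lambda_{ij}\in\{\pm1\}$. In that case $\beta$ is $4\langle Q\rangle_0$ times a sign fixed entirely by $\lambda_{ij}$, and $\langle Q\rangle_0=1$ for signed detectors. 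So I will write $\gamma_{ij}=\sigma(\lambda_{ij})$, where $\sigma$ is the fixed sign map read off from that lemma; for this statement it is the identity. The statement then reduces to showing that $\lambda_{13}$ is the product of the eigenvalues attached to the two given hypotheses.

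\textbf{Step 1: unpack the hypotheses.} Both given $\beta$'s are nonzero, so we obtain:
\begin{itemize}
\item $P_1$, $P_2$ and $P_3$ all anticommute with $Q$;
\item $[P_1,P_2]=0$ and $[P_3,P_2]=0$;
\item $\ket{\psi}$ is an eigenvector of $P_1P_2$ with eigenvalue $\lambda_{12}$, and of $P_3P_2$ with eigenvalue $\lambda_{32}$.
\end{itemize}

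\textbf{Step 2: show that $C_{P_1,P_3}$ falls in the nonzero case.} Write
\[
P_1P_3=(P_1P_2)(P_2P_3)=(P_1P_2)(P_3P_2),
\]
using $P_2^2=I$ and $[P_2,P_3]=0$. It follows that $P_1P_3\ket{\psi}=\lambda_{12}\lambda_{32}\ket{\psi}$. A Pauli product with a $\pm1$ eigenvector must be Hermitian rather than anti-Hermitian, so $[P_1,P_3]=0$. Alternatively, $P_1$ and $P_3$ each commute with $P_2$, and the Hermiticity argument settles the relative phase. Together with the anticommutation of $P_1$ and $P_3$ with $Q$ from Step 1, this places $C_{P_1,P_3}$ in the nonzero case of Lemma~\ref{lem:eeg_to_event}, with $\lambda_{13}=\lambda_{12}\lambda_{32}$.

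\textbf{Step 3: translate eigenvalues back to $\gamma$'s.} Under the identification of $\gamma_{ij}$ with the eigenvalue sign $\lambda_{ij}$ from the case formula for $C$-terms, we get $\gamma_{1,3}\leftrightarrow\lambda_{32}$ and $\gamma_{1,2}\leftrightarrow\lambda_{12}$. Hence $\beta(Q,C_{P_1,P_3},\psi)=4\gamma_{1,2}\gamma_{1,3}$.

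\textbf{Main obstacle.} The delicate step is the sign bookkeeping in Step 3, because the $\pm/\mp$ conventions in the proof of Lemma~\ref{lem:eeg_to_event} differ between displayed lines. If $\sigma$ were instead $\lambda\mapsto-\lambda$, the product rule would acquire an extra factor of $-1$ and the conclusion would flip to $-4\gamma_{1,2}\gamma_{1,3}$. I would therefore first fix the convention once by evaluating a single explicit example, for instance one qubit with $\psi=\ket{0}$, $Q=X$, and $P_1=P_2=Z$. I would then check that the overall prefactor $\langle Q\rangle_0$ and the sign convention for $C_{P,Q}$ are applied uniformly across all three $\beta$'s, so that the multiplicative relation carries over exactly as stated.
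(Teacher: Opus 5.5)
Your Steps 1 and 2 are correct and are essentially the paper's argument. The paper names the shared index $P_1$ rather than $P_2$ and writes $P_2P_3\ket{\psi}=P_2P_1P_1P_3\ket{\psi}$. Your Hermiticity argument for $[P_1,P_3]=0$ is cleaner than the paper's.

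The gap is Step 3. You assert that the sign map $\sigma$ is the identity, but it is $\lambda\mapsto-\lambda$. Suppose $A,B$ both anticommute with the detector $Q$ and $Q\ket{\psi}=\ket{\psi}$. Then
$\mathrm{Tr}(Q\,C_{A,B}[\ket{\psi}\bra{\psi}])=-\bra{\psi}BA\ket{\psi}-\bra{\psi}AB\ket{\psi}-\bra{\psi}\{A,B\}\ket{\psi}=-2\bra{\psi}\{A,B\}\ket{\psi}$.
This equals $-4\lambda_{AB}$ when $[A,B]=0$ and $AB\ket{\psi}=\lambda_{AB}\ket{\psi}$. That is exactly the paper's case formula ($\pm4$ when $Q_1Q_2\ket{\psi}=\mp\ket{\psi}$), so there is no convention ambiguity here.

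Hence $\gamma_{1,2}=-\lambda_{12}$ and $\gamma_{1,3}=-\lambda_{32}$. As your own dichotomy predicts, Steps 1--2 then give $\beta(Q,C_{P_1,P_3},\psi)=-4\lambda_{12}\lambda_{32}=-4\gamma_{1,2}\gamma_{1,3}$.

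No repair of Step 3 is possible, because the displayed conclusion is false when read literally with $\gamma$ defined by the hypotheses $\beta=4\gamma$. Take one qubit with $\ket{\psi}=\ket{+}$, $Q=X$ and $P_1=P_2=P_3=Z$. Then $C_{Z,Z}[\rho]=2Z\rho Z-2\rho$, so all three $\beta$'s equal $-4$. This gives $\gamma_{1,2}=\gamma_{1,3}=-1$, yet $\beta(Q,C_{P_1,P_3},\psi)=-4\neq 4$. Distinct indices $ZII,ZXI,ZIX$ with $Q=XII$ and $\ket{\psi}=\ket{{+}{+}{+}}$ fail in the same way.

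What is true, and what the paper's proof actually establishes, is multiplicativity of the eigenvalues, $\lambda_{13}=\lambda_{12}\lambda_{32}$. The paper's proof silently takes $\gamma_{1,j}$ to be the eigenvalue of $P_1P_j$ on $\ket{\psi}$, not $\beta/4$. The proof of Theorem~\ref{thm:cptp_dem} uses the lemma only through $\kappa_{P,P'}=-\mathrm{sgn}\,\beta(Q,C_{P,P'},\psi)$, which is precisely that eigenvalue.

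So your Step 2 already proves the correct content. You should state the conclusion as $\lambda_{13}=\lambda_{12}\lambda_{32}$, equivalently $\beta(Q,C_{P_1,P_3},\psi)=-4\gamma_{1,2}\gamma_{1,3}$, rather than asserting the identity as displayed.

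Your proposed calibration example would not have caught the sign either. With $\ket{\psi}=\ket{0}$, $Q=X$ is not a detector ($\bra{0}X\ket{0}=0$), and the corresponding $\beta$ is $0$. Replacing $\ket{0}$ by $\ket{+}$ immediately exhibits $\sigma(\lambda)=-\lambda$.
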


\vspace{8pt}
\noindent
\textbf{Proof of Lemma~\ref{lem:c_terms}:} For any $C_{P,P'}$, $\beta(Q, C_{P,P'}, \psi)= \pm 2$ implies that $[Q, P'] \neq 0$, $[Q,P] \neq 0$, and
$PP'\ket{\psi}=\mp\ket{\psi}$, 
which implies that $\beta(Q, S_P, \psi)=\beta(Q, S_{P'}, \psi)=-2$. We also have that
\begin{equation}
    \Tr(Q C_{P_1,P_j}[\ketbra{\psi}]) =  \begin{cases}  \pm 4 \textrm{  if  } [P_1, P_j] = 0 \textrm{  and } P_1P_j\ket{\psi} = \gamma_{1,j}\ket{\psi} \\ 0 \textrm{ else } \end{cases},
\end{equation}
Because $[P_1, P_2] = 0$, $[P_1, P_3] = 0$, $P_1P_2 = \gamma_{1,2}\ket{\psi}$, and $P_1P_3 = \gamma_{1,3}\ket{\psi}$, we have that $P_2P_3 = P_2P_1P_1P_3\ket{\psi} = P_1P_2P_1P_3\ket{\psi} =  \gamma_{1,2}\gamma_{1,3}\ket{\psi}$. To show that $C_{P_2,P_3}$ contributes, it remains to be shown that $[P_2,P_3]=0$. We have that $P_1P_3P_1P_2\ket{\psi} = \gamma_{1,3}\gamma_{2,3} \ket{\psi}$ and that, by our assumption $[P_3,P_1]=0$. So, we have that $P_3P_2\ket{\psi} = \gamma_{1,3}\gamma_{2,3} \ket{\psi} = P_2P_3\ket{\psi}$, which implies that $[P_2,P_3]=0$. \qed

\begin{lemma}\label{lem:a_terms}
    Suppose that $\beta(Q, A_{P_1, P_2}, \psi)\neq 0$ and $\beta(Q, A_{P_3, P_2}, \psi) \neq 0$ for a detector $Q$. Then $\beta(Q, A_{P_1,P_3}, \psi) = 0$.
\end{lemma}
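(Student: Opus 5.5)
The plan is to reduce the lemma to the explicit characterization of $\beta$ for $A$-type EEGs derived in the proof of Lemma~\ref{lem:eeg_to_event}, and then show that the hypotheses force $P_1$ and $P_3$ to commute. That computation showed $\beta(\psi, A_{Q_1,Q_2}, Q)\neq 0$ only if four conditions hold: $[Q_1,Q]\neq 0$, $[Q_2,Q]\neq 0$, $[Q_1,Q_2]\neq 0$, and $iQ_1Q_2\ket{\psi}=\pm\ket{\psi}$. Since the last condition in particular requires $[Q_1,Q_2]\neq 0$, it suffices to prove $[P_1,P_3]=0$; then $\beta(\psi, A_{P_1,P_3}, Q)=0$ follows directly from that case analysis.

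First I will unpack the two hypotheses. From $\beta(\psi, A_{P_1,P_2}, Q)\neq 0$ I obtain that $P_1$ and $P_2$ anticommute and that $iP_1P_2$ (a Hermitian Pauli, since they anticommute) has $\ket{\psi}$ as an eigenvector with eigenvalue $\pm1$. From $\beta(\psi, A_{P_3,P_2}, Q)\neq 0$ I obtain in the same way that $P_3$ and $P_2$ anticommute and that $iP_3P_2\ket{\psi}=\pm\ket{\psi}$.

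The key step uses the fact that two Pauli operators sharing a common eigenvector must commute. If $R_1\ket{\psi}=\lambda_1\ket{\psi}$ and $R_2\ket{\psi}=\lambda_2\ket{\psi}$ with $R_1R_2=-R_2R_1$, then $\lambda_1\lambda_2=-\lambda_1\lambda_2$. This is impossible because Pauli eigenvalues are nonzero. Hence $P_1P_2$ and $P_3P_2$ commute.

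Next I will translate this into a statement about $P_1$ and $P_3$ using the symplectic (mod-2) commutation form $\langle\cdot,\cdot\rangle$ on Paulis, which is bilinear, symmetric, and satisfies $\langle P,P\rangle=0$. We have
\begin{equation*}
\langle P_1P_2, P_3P_2\rangle = \langle P_1,P_3\rangle+\langle P_1,P_2\rangle+\langle P_2,P_3\rangle+\langle P_2,P_2\rangle = \langle P_1,P_3\rangle + 1 + 1 + 0 .
\end{equation*}
The left-hand side vanishes by the previous step, so $\langle P_1,P_3\rangle=0$, i.e.\ $[P_1,P_3]=0$. The degenerate case $P_1=P_3$ is consistent with this, since $A_{P,P}=0$ and the conclusion holds trivially.

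The main obstacle is bookkeeping rather than substance. I need to make sure the "only if" direction of the $A$-term characterization is really established by the earlier computation, and not just the "if" direction. In that computation, the commuting case $[Q_1,Q_2]=0$ gives $\bra{\psi}[Q_1,Q_2]\ket{\psi}=0$, so $\beta$ vanishes. I would restate this explicitly so that the conclusion $\beta(\psi, A_{P_1,P_3}, Q)=0$ rests only on $[P_1,P_3]=0$, without needing any information about how $P_1$ and $P_3$ commute with $Q$.
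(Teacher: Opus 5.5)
Your proof is correct and follows the paper's strategy. Both arguments reduce the lemma to showing $[P_1,P_3]=0$ from $iP_1P_2\ket{\psi}=\pm\ket{\psi}$ and $iP_3P_2\ket{\psi}=\pm\ket{\psi}$, and then invoke the $A$-term characterization from the proof of Lemma~\ref{lem:eeg_to_event}. The difference is only in bookkeeping: the paper inserts $P_2^2=I$ to show $P_1P_3\ket{\psi}=P_3P_1\ket{\psi}=\gamma_{1,2}\gamma_{3,2}\ket{\psi}$, while you use the fact that $iP_1P_2$ and $iP_3P_2$ share an eigenvector, hence commute, and transfer this to $P_1,P_3$ via the symplectic form.
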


\vspace{8pt}
\noindent
\textbf{Proof of Lemma~\ref{lem:a_terms}:} Suppose $A_{P_1, P_2}$ and $A_{P_3, P_2}$ contribute to $\langle Q \rangle$ nontrivially. Then $iP_1P_2 = \gamma_{1,2}\ket{\psi}$ and $iP_{3}P_{2} = \gamma_{3,2}\ket{\psi}$ for signs $\gamma_{1,2}$ and $\gamma_{3,2}$, and $\{P_1,P_2\}=\{P_1,P_3\}=0$. Then we have $P_1P_3\ket{\psi}= P_1P_2P_2P_3\ket{\psi} = -P_1P_2P_3P_2\ket{\psi}  =\gamma_{1,2}\gamma_{3,2}\ket{\psi}$. Additionally, $\gamma_{1,2}\gamma_{3,2}\ket{\psi} = -P_3P_2P_1P_2\ket{\psi}  = P_3P_2P_2P_1\ket{\psi} =P_3P_1\ket{\psi}$. If $[P_1,P_3] \neq 0$, then it follows that $P_1P_3\ket{\psi}=-\gamma_{1,2}\gamma_{3,2}\ket{\psi}$, which is a contradiction, so $[P_1,P_3]=0$, and therefore $A_{P_1,P_3}$ does not contribute. \qed

\vspace{8pt}

We now use Lemma~\ref{lem:c_terms} and Lemma~\ref{lem:a_terms} to prove the Theorem~\ref{thm:cptp_dem}. 

\vspace{8pt}
\noindent
\textbf{Proof of Theorem~\ref{thm:cptp_dem}.} To prove this result, we need only consider terms in the leading-order approximation to the circuit error that impact a particular detector Pauli $Q$ (we can apply the same reasoning to each detector). Without lose of generality, we will use error generators with signed Pauli indices, and assume that all error generator rates are positive. The leading order Taylor expansion of such an error channel is
\begin{align}
    \mathcal{E} & \approx 1 + \left(\sum_{S_P \in \mathbb{G}_S} s_{p}S_P + \sum_{C_{P,Q} \in \mathbb{G}_{C}} c_{P,Q}C_{P,Q} + \sum_{A_{P,Q} \in \mathbb{G}_{A}} a_{P,Q}A_{P,Q} + \sum_{H_P \in \mathbb{G}_H} h_P H_P + \frac{1}{2}\sum_{H_P,H_{P'} \in \mathbb{G}_H} h_Ph_{P'} (H_PH_{P'} + H_{P'}H_{P})\right) \\
    & \approx 1 + \left(\sum_{S_P \in \mathbb{G}_S} s_{p}S_P + \sum_{C_{P,Q} \in \mathbb{G}_{C}} c_{P,Q}C_{P,Q} + \sum_{A_{P,Q} \in \mathbb{G}_{A}} a_{P,Q}A_{P,Q} + \sum_{H_P \in \mathbb{G}_H} h_P H_P + \frac{1}{2}\sum_{H_P,H_{P'} \in \mathbb{G}_H}   h_Ph_{P'} C_{P,P'} - \frac{i}{2}\sum_{H_P,H_{P'} \in \mathbb{G}_H, [P,P]=0} h_Ph_{P'} H_{PP'} \right) , \label{eqn:leading_order_taylor}
\end{align}
where we have decomposed the products of $H$ terms into $C$ and $H$ terms.
The DEM event rate is $\langle Q \rangle =   \Tr(Q\mathcal{E}[\ketbra{\psi}])$, and to leading order this is
\begin{align}
    \Tr(Q\mathcal{E}[\psi]) & \approx \langle Q \rangle_0 + \left(\sum_{S_P \in \mathbb{G}_S} s_{p}\beta(Q, S_P, \psi) + \sum_{C_{P,Q} \in \mathbb{G}_{C}} c_{P,Q}\beta(Q, C_{P,Q}, \psi) +  \sum_{A_{P,Q} \in \mathbb{G}_{A}} a_{P,Q}\beta(Q, A_{P,Q}, \psi)  + \frac{1}{2}\sum_{H_P,H_{P'} \in \mathbb{G}_H}   h_Ph_{P'} \beta(Q, C_{P,P'}, \psi) \right) , \label{eqn:leading_order_taylor2}
\end{align}
where we have used the fact that $\beta(Q,H_P, \psi)=0$ for all $H_P$. As $\langle Q \rangle_0 = 1$ and we are in the small errors regime, we therefore only need to show that the remaining terms in Eq.~\eqref{eqn:leading_order_taylor2} sum to a value that is $\leq 0$.

First, we compute the contribution for the final terms in this sum, which arise from the quadratic $H$ terms, i.e., the $
\frac{1}{2}\sum_{H_P,H_{P'} \in \mathbb{G}_H} h_Ph_{P'} \beta(Q, C_{P,P'}, \psi)$ term.
Recall that $\beta(Q, C_{P,P'}, \psi) = \pm 2  \textrm{ if } [P,P']=0, [P,Q]\neq 0, [P',Q]\neq 0,\textrm{ and } PP'\ket{\psi}=\mp \ket{\psi}$. Suppose $(P_1, P_2)$ and $(P_2, P_3)$ satisfy the aforementioned commutation conditions, with $P_1P_2\ket{\psi} = P_2P_3\ket{\psi} = -\ket{\psi}$. Then either:
\begin{enumerate}
    \item $[P_1,P_3] \neq 0$, in which case $\beta(Q, C_{P,P'}, \psi) = 0$
    \item $[P_1,P_3] = 0$, in which case $P_1P_3\ket{\psi} = P_1P_2P_2P_3\ket{\psi} = \ket{\psi}$, which means that $\beta(Q, C_{P_1,P_3}, \psi) = - 2$.
\end{enumerate} 
There can be at most $\nicefrac{\abs{\mathbb{G}_H}}{2}$ pairs of Pauli indices $(P,P')$ where $H_P, H_{P'} \in G$ with the properties (1) $\beta(Q, C_{P,P'}, \psi) = 2$ and (2) no Pauli occurs in more than one pair.  However, there are $\abs{\mathbb{G}_H}$ pairs $(P,P)$ (where $H_P \in G$), and for all such pairs, $\beta(Q, C_{P,P}, \psi) = -2$. Therefore, 
\begin{equation}
    \frac{1}{2}\sum_{H_P,H_{P'} \in \mathbb{G}_H}   h_Ph_{P'} \beta(Q, C_{P,P'}, \psi) \leq -\frac{\abs{\mathbb{G}_H}}{2}.
\end{equation}
This result also implies that $\tilde{\langle Q\rangle } \leq 1$ at leading order for $H$-only models.

We now consider the contribution of the $S$, $C$ and $A$ terms, for which it convenient to set $h_P=0$.
The CPTP condition requires that $\mathbb{G}_C \subseteq \{C_{P,P'} \mid S_P, S_{P'} \in \mathbb{G}_S\}$ (and furthermore, note that $C_{P,P'} \in \mathbb{G}_C$ implies that $S_P, S_{P'} \in \mathbb{G}_S$, which is required for the circuit error to be CPTP). 
We have that $\Tr(Q S_P [\psi]) = -2$ for all $S_P \in \mathbb{G}_S$, as we have restricted to terms where $[Q,P] \neq 0$. We arrange the error rates for the set $\mathbb{S}_{Q}$ into a matrix $M$, where $M_{P,P'} = c_{P,P'}+ia_{P,P'}$ for the upper triangle,  $M_{P,P'} = c_{P',P}-ia_{P,P'}$ for the lower triangle, and  $M_{P,P} = s_{P}$ on the diagonal, with $P, P' \in \mathbb{S}_Q$. 
We have that 
\begin{equation}
\Tr(Q\mathcal{E}[\psi])  = 1 +\Tr(\sum_{P \in \mathbb{S}_Q} s_p S_P + \sum_{P,P' \in \mathbb{S}_Q} c_{P,P'}C_{P,P'} + \sum_{P,P' \in \mathbb{S}_Q} a_{P,P'}A_{P,P'}),
\end{equation}
and we now show that we can write this as 
\begin{equation}\label{eq:matrix_form}
\Tr(Q\mathcal{E}[\psi])  = 1  -2 \boldsymbol{x}^{\dagger} M\boldsymbol{x}
\end{equation}
where
\begin{equation}
x_P = -i\kappa_{P,P'}+\kappa'_{P,P'}
\end{equation}
and 
\begin{align}
    \kappa_{P,P'} &= \begin{cases} -\textrm{sgn}(\beta(Q, C_{P,P'}, \psi)) &  \textrm{if} \quad  \beta(Q, C_{P,P'}, \psi) \neq0\\
     0 & \textrm{else}   \end{cases} \\
     \kappa'_{P,P'} &= \begin{cases}\textrm{sgn}(\beta(Q, A_{P,P'}, \psi)) &  \textrm{if} \quad  \beta(Q, A_{P,P'}, \psi) \neq0\\
     0 & \textrm{else}   \end{cases}
\end{align}

We first show that this gives the correct contributions to $\Tr(Q\mathcal{E}[\psi])$ for the $S/C/A$ terms in the first row/column of $M$. The S term has contribution $s_P\kappa_{P,P}^2 = s_P$, as desired. The other terms ($C_{P,P'}$ and $A_{P,P'}$ for $P' \neq P$) have contributions 
\begin{align}
    (c_{P,P'}+ia_{P,P'})(-i\kappa_{P,P'}+\kappa'_{P,P'})(i\kappa_{P,P})+(c_{P,P'}-ia_{P,P'})(i\kappa_{P,P'}+\kappa'_{P,P'})(-i\kappa_{P,P'}) & = c_{P,P'}(2\kappa_{P,P'}\kappa_{P,P}) - a_{P,P'}(2\kappa_{P,P}\kappa'_{P,P'}),
\end{align}
which gives the correct contribution.
It now remains to show that we get the correct contribution for all terms not in the top row/first column---i.e., terms $C_{P',P''}$ and $A_{P',P''}$ for $P' \neq P$ and  $P'' \neq P$. The contributions we get from Eq.~\eqref{eq:matrix_form} are
\begin{multline}
    (c_{P',P''}+ia_{P',P''})(-i\kappa_{P,P'}+\kappa'_{P,P'})(i\kappa_{P,P''}+\kappa'_{P,P''})+(c_{P',P''}-ia_{P',P''})(i\kappa_{P,P'}+\kappa'_{P,P'})(-i\kappa_{P,P''}+\kappa'_{P,P''}) \\ = 2c_{P',P''}(\kappa_{P,P'}\kappa_{P,P''} + \kappa'_{P,P'}\kappa'_{P,P''}) + 2a_{P',P''}(-\kappa'_{P,P'}\kappa_{P,P''}-\kappa_{P,P'}\kappa'_{P,P''}).
\end{multline}
We first consider the $c_{P',P''}$ part. We know that if $C_{P,P'}$ and $C_{P,P''}$ contribute, then $C_{P',P''}$ contributes with sign $\kappa_{P,P'}\kappa_{P,P''}$, and furthermore this implies that $\kappa'_{P,P'}=\kappa'_{P,P''}=0$, thereby giving the correct coefficient of $c_{P',P''}$. If $C_{P,P'}$ and $C_{P',P''}$ contribute, that would imply $C_{P,P''}$ also contributes. Therefore, if $C_{P,P'}$ contributes and $C_{P,P''}$ does not, it must be that $C_{P',P''}$ does not contribute. Furthermore, because $C_{P,P'}$ contributing implies, $A_{P,P'}$ does not contribute,  $\kappa'_{P,P'}\kappa'_{P,P''}=0$, and therefore $\kappa_{P,P'}\kappa_{P,P''} + \kappa'_{P,P'}\kappa'_{P,P''} = 0$, as desired.

We now consider the $a_{P',P''}$ part. There are a few cases, as follows. 

\noindent \textbf{Case 1.} If both $\kappa_{P,P'}$ and $\kappa_{P,P''}$ are nonzero, then $\kappa'_{P,P'}=\kappa'_{P,P''}=0$, and therefore the coefficient of the $a_{P',P''}$ term gives the expected contribution, because $\kappa_{P,P'} \neq 0$ and $\kappa_{P,P''} \neq 0$ imply $C_{P',P''}$ contributes, and therefore $A_{P',P''}$ cannot contribute. 

\noindent \textbf{Case 2.} For the case $\kappa_{P,P'}=\kappa_{P,P''}=0$, we need to show that $A_{P',P''}$ cannot contribute to obtain the correct contribution to $\langle Q \rangle$ from $a_{P',P''}$. Assume that $[P,P']=[P,P'']=0$. We have that $iP'P''\ket{\psi} = iP'PPP''\ket{\psi} = \gamma_{P',P} i P'PPP''\ket{\psi}$, where $\gamma_{P',P}=1$ if $[P',P]=0$ and $\gamma_{P',P}=-1$ otherwise. However, because $\kappa_{P,P'}=\kappa_{P,P''}=0$, $PP'PP''\ket{\psi} \neq \pm \ket{\psi}$. If $PP''\ket{\psi} = \pm i \ket{\psi}$ and $PP'\ket{\psi} = \pm i \ket{\psi}$, then it must be that $[P,P'] \neq 0$ and $[P,P''] \neq 0$, which contradicts the original assumption. 

\noindent \textbf{Case 3.}  We now consider $\kappa_{P,P'}=0$ and $\kappa_{P,P''} \neq 0$. This implies that $[P,P''] \neq 0$ and $iPP'' = \kappa'_{P,P''}\ket{\psi} = \pm \ket{\psi}$. We have $iP'P''\ket{\psi} = iP'PPP''\ket{\psi} = i\kappa_{P,P''}P'P\ket{\psi}$. If $[P,P']=0$, this implies that $PP'\ket{\psi} \neq \pm \ket{\psi}$ (because we assumed that $\kappa_{P,P'}=0$), and therefore $iP'P''\ket{\psi} \neq \pm \ket{\psi}$. Otherwise, $iP'P''\ket{\psi} = -\kappa_{P,P''}\kappa_{P,P'}$, as desired. 

\noindent \textbf{Case 4.}  Finally, we consider $\kappa_{P,P'} \neq 0$ and $\kappa_{P,P''} = 0$, which proceeds analogously to the case above. This implies that $[P,P'] \neq 0$ and $iPP'\ket{\psi} = \kappa'_{P,P'}\ket{\psi} = \pm \ket{\psi}$. We have $iP'P''\ket{\psi} = iP'PPP''\ket{\psi} = -i\kappa'_{P,P'}PP''\ket{\psi}$. If $[P,P'']=0$, this implies that $PP''\ket{\psi} \neq \pm \ket{\psi}$, and therefore $iP'P''\ket{\psi} \neq \pm \ket{\psi}$. Otherwise, $iP'P''\ket{\psi} = -\kappa'_{P,P'}\kappa_{P,P''}$, as desired. 

This covers all cases, and proves that the matrices we constructed give the correct contributions for each term. The CPTP constraint on the circuit error implies that $M$ is a positive semi-definite matrix. We therefore have that  $\boldsymbol{v}^{\dagger} M \boldsymbol{v} \geq 0$ for \emph{any} vector $\boldsymbol{v}$. Therefore $-2 \boldsymbol{x}^{\dagger} M\boldsymbol{x} < 0$, which implies that the leading order estimate for the rate of any DEM event will be nonnegative for a CPTP error model.

At leading order, our DEM's event rates are  $p_D = \frac{1}{2}\sum_{G \in \mathbb{G}_D} \epsilon_G \beta(Q, G, \psi)$, where $Q$ is a representative $Q \in \mathbb{D}$. This strategy reproduces the event polarizations at leading order. To see this note that at leading order in event probabilities, $\langle \prod_{Q \in D} Q \rangle = 1-\sum_{D: Q \in D}\sum_{G \in \mathbb{G}_D} \epsilon_G \beta(\psi, G, P) = 1-\sum_{D: Q \in D} 2p_D$. This expression is exactly the polarization predicted by our DEM at leading order in the DEM event rates. \qed

\subsection{Anticorrelation in Detection Statistics from Amplitude Damping}

\begin{figure}
    \begin{minipage}{0.3\textwidth}
        \centering
        \includegraphics[width=\textwidth]{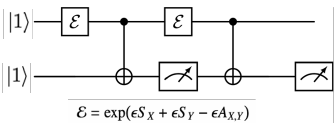}
    \end{minipage}%
    \begin{minipage}{0.7\textwidth}
        \centering
        \text{Detection probabilities}
        \begin{tabular}{|c|c|c|c|}\hline
            Bit String & First-Order Probability & Second-Order Probability & Exact Probability \\\hline
            01 & 0.04 & 0.0353 & 0.0377 \\\hline
            10 & 0.04 & 0.0378 & 0.0393 \\\hline
            11 & 0 & 0.0007 & 0 \\\hline
        \end{tabular}
        \label{tab:detection_probabilities}
        
        \vspace{0.5cm} 
        \text{Detector error models}
        \begin{tabular}{|c|c|c|c|}\hline
            DEM event & First-Order Probability & Second-Order Probability & Exact Probability \\\hline
            01 & 0.04 & 0.0368 & 0.0393 \\\hline
            10 & 0.04 & 0.0392 & 0.0408 \\\hline
            11 & 0 & -0.0016 & -0.0017 \\\hline
        \end{tabular}
        \label{tab:dems_amplitude_damping}
    \end{minipage}
    \caption{\textbf{Anticorrelation in detection statistics.} A single qubit experiencing amplitude damping error and being repeatedly measured indirectly exhibits anticorrelations in its measurements results. In this circuit, only one of the two detectors can flip, never both, because the qubit never returns to the state $\ket{1}$ after decaying to $\ket{0}$. These detection statistics result in a negative-rate DEM event. Using only leading-order expansions, our method accurately predicts the detection statistics at leading order. Beyond leading order, a negative-probability event is needed to compensate for the leading order DEM's prediction of a positive probability for the detection history $11$.}
    \label{fig:anticorrelation}
\end{figure}

Error generator models capture a richer space of dynamics than DEMs, which model error events as stochastic and independent. As a consequence, there are detection statistics that a DEM with nonnegative DEM event probabilities \emph{cannot} capture, which we demonstrate here with a toy example. Consider a single qubit that that experiences an amplitude damping channel
\begin{equation}
\mathcal{E} = \exp(s_x S_X + s_y S_Y - a_{x,y} A_{X,Y}) = \exp(\epsilon S_X + \epsilon S_Y - \epsilon A_{X,Y}),
\end{equation}
where $\epsilon=0.01$, and consider a circuit where this qubit gets measured indirectly twice with a perfect CNOTs to a perfect auxiliary qubit, as shown in Fig.~\ref{fig:anticorrelation}. The detection history probabilities are shown in Fig.~\ref{fig:anticorrelation}. Because the noisy qubit never transitions into the $\ket{1}$ state after transitioning to the $\ket{0}$ state, the first and second detectors  have \emph{anticorrelated} results. As a result, a DEM that predicts the exact probabilities of the detection histories has a negative probability for the event $11$ ($p_{DEM}(11) = -0.0017$). 

In contrast, a first-order approximation to the circuit error (i.e., first-order BCH expansion followed by a first-order Taylor expansion) results in detection probabilities that \emph{can} be modeled by a DEM with nonnegative event rates. Our method gives DEM event probabilities $p_{\{D_1\}}=p_{\{D_2\}}=4\epsilon$ and $p_{\{D_1, D_2\}}=0$ at leading order. This DEM predicts a probability of $16\epsilon^2$ for $11$, which differs from the true probability by $O(\epsilon^2)$. The error in the predicted probabilites of $01$ and $10$ ($p(10) = 0.0393, p(01) = 0.0377$) are also $O(\epsilon^2)$. 

If the approximations in our method are taken beyond first order, DEM events with negative probabilities arise (as shown in Fig.~\ref{fig:anticorrelation} at second order). 
Performing a second-order expansion gives the following polarizations as a function of the $S_X$, $S_Y$, and $A_{X,Y}$ error rates:
\begin{align}
    \lambda_{10} &= 1 -2s_{x} + 2s_{x}^2  -4s_{x}a_{x,y} +  4s_{x}s_{y} + 4a_{x,y} -4a_{x,y}s_{y} -2s_{y} + 2s_{y}^2\\
    &= 1 - 8\epsilon + 16\epsilon^2\\
    \lambda_{11} &= 1 -4s_{x} + 8s_{x}^2 -16s_{x}a_{x,y} + 16s_{x}s_{y} + 8a_{x,y} -16a_{x,y}s_{y} -4s_{y} + 8s_{y}^2\\
    &= 1 - 16\epsilon + 64\epsilon^2\\
    \lambda_{01} &= 1 -2s_{x} + 2s_{x}^2  -12s_{x}a_{x,y} + 4s_{x}s_{y} + 4a_{x,y} + 16a_{x,y}^2 -12a_{x,y}s_{y}  -2s_{y} + 2s_{y}^2 \\
    &= 1 - 8\epsilon + 48\epsilon^2
\end{align}
Estimating a DEM to $O(\epsilon^2)$ from these polarizations gives
\begin{align}
    p_{10}&= 4\epsilon - 8\epsilon^2\\
    p_{01} &= 4\epsilon - 32\epsilon^2\\
    p_{11} &=  -16\epsilon^2.
\end{align}

\section{DEM Events beyond Leading Order}
\label{app:higher_order_events}
Our DEM creation algorithm, when all approximations are performed to leading order, produces a DEM whose events are exactly those that would appear in a Pauli-twirled version of the error model (but with different probabilities in general). Creating a more accurate DEM requires additional DEM events. In this section, we identify a set of DEM events required to model detection statistics to $O(\epsilon^k)$. 

The effects of errors that arise beyond leading order are most straightforwardly seen by looking at a Taylor expansion of the circuit error $\mathcal{E}_c$ (approximated using a BCH expanssion). At order $k$, this expansion contains \emph{compositions} of up to $k$ EEGs from the (propagated) layer error channels, 
\begin{align}
    \exp(\sum_{G \in \mathbb{G}} \epsilon_G G) & \approx 1 + \sum_{G \in \mathbb{G}} \epsilon_G G + \sum_{(G,G') \in \mathbb{G}} \epsilon_G\epsilon_{G'} GG' + \dots \\
    & = 1 + \sum_{G \in T_k} \epsilon_G G, \label{eqn:taylor}
\end{align}
where $T_k$ denotes the set of EEGs required in the $k$th order Taylor expansion of $\mathcal{E}_c$. The result any composition of $k$ EEGs $\{G_i\}_{i=1}^k$ is a linear combination of EEGs whose indices are given by \emph{products} of their indices (see Ref.~\cite{miller2025simulation} for the complete set of composition rules). 

Eq.~\ref{eqn:taylor} encodes how EEGs in $G_{\textrm{EOC}}$ combine to impact the probabilities of detection histories, but it does not immediately reveal which DEM events are needed to generate these detection histories with the correct probabilities. 
To determine which events are needed to model the detection statistics, we use the Taylor expanded error to compute detector polarizations  $\langle \prod_{P \in D_i} P \rangle$, 
\begin{align}
    \langle \prod_{P \in D_i} P \rangle \approx 1 + \sum_{G} \epsilon_{G \in T_k} \beta(\psi, G, \langle \prod_{P \in D} P \rangle).
\end{align}
Each $\beta(\psi, G, \langle \prod_{P \in D} P \rangle)$ can be computed efficiently using the results in Appendix~\ref{app:theory}. The full set of detector polarizations is sufficient to estimate a DEM \cite{blumekohout2025detector}.
We now derive expressions for all $2^n$ detector polarizations in terms of the DEM event classes of the EEGs in $T_k$, and use these expressions to estimate a DEM, allowing us to deduce the set of DEM events needed to reproduce detection statistics to $O(\epsilon^k)$.

\vspace{8pt}
\noindent \textbf{Theorem 2.} Let $E_1$ the the set of all DEM events $\Delta(P)$, where P is a Pauli appearing in the indices of the EEGs in the propagated layer error channels $\mathcal{E}'_1,\mathcal{E}'_{2},\dots ,\mathcal{E}'_k$ (Eq.~\ref{eqn:prod_of_propagated}), and let $E_k$ be the set of DEM events that can be expressed as the symmetric difference of up to $k$ DEM events in $E_1$ (i.e., the net effect of $k$ events happening). Then at $k$th order, the DEM describing the circuit's detector histories to order $k$ contains only events in $E_k$. 
\vspace{8pt}

Before we prove the theorem, we make a few remarks on it. Theorem~\ref{thm:higher_order_events} tells us a polynomial-sized DEM to estimate in order to model detection statistics to $O(\epsilon^k)$. $E_k$ has size $O(n_{\textrm{EEGs}}^k)$,  where $n_{\textrm{EEGs}}$ is the number of EEGs required in the gate-level EEG model for the circuit. In an $S$-only error model, only the events in $E_1$ appear in the final DEM, regardless of $k$. Non-$S$ errors do \emph{not} change the set of detection histories that might appear at $O(\epsilon^k)$ relative to an $S$ only model with EEGs indexed by the same Paulis, but they \emph{can} cause them to occur with different probability than predicted by stochastic Pauli models---requiring the addition of new DEM events to model the detection statistics. However, Theorem~\ref{thm:higher_order_events} does \emph{not} guarantee that the DEM events all have positive rates. In practice, a negative-rate DEM event arising in a DEM suggests that a different model that can model anticorrelations is needed to accurately model the detection statistics. 
\vspace{8pt}

\textbf{Proof of Theorem.} First, we show that the $k$th order BCH approximation of $\mathcal{E}_c$ gives error generators whose indices are products of up to $2k$ indices from the original EEG decomposition. We proceed by casework, using the rules for commutators of EEGs \cite{miller2025simulation}:
\begin{align*}
\left[\eg{H}_P,\eg{S}_Q\right] &= i \eg{C}_{Q,[Q,P]} \\
 \left[\eg{H}_Q,\eg{C}_{A,B}\right] &= i (\eg{C}_{[A,Q],B}+  \eg{C}_{[Q,B],A}) \\
\left[\eg{H}_{P},\eg{A}_{A,B}\right] &= -i(\eg{A}_{[P,A],B}+\eg{A}_{A,[P,B]}) \\
\left[\eg{S}_P,\eg{S}_Q\right] &= 0 \\
\left[\eg{S}_P,\eg{C}_{A,B}\right] & = -i\eg{A}_{AP,BP}-i\eg{A}_{PB,AP}+\frac{-i}{2}\left(\eg{A}_{\{A,B\}P,P}+\eg{A}_{P,P\{A,B\}}\right) \\
[\eg{S}_P,\eg{A}_{A,B}] &= i\left(\eg{C}_{PA,BP}-\eg{C}_{PB,AP}\right)-\frac{i}{2}\eg{A}_{P,[P,[A,B]]} \\ 
[\eg{C}_{A,B},\eg{C}_{P,Q}] &=-i\left(\eg{A}_{AP,QB}+\eg{A}_{AQ,PB}+\eg{A}_{BP,QA}+\eg{A}_{BQ,PA}\right)-\frac{i}{2}\left(\eg{A}_{[P,\{A,B\}],Q}+\eg{A}_{[Q,\{A,B\}],P}+\eg{A}_{[\{P,Q\},A],B}+\eg{A}_{[\{P,Q\},B],A}\right)+\frac{i}{4}\eg{H}_{[\{A,B\}\{P,Q\}]} \\
\left[\eg{C}_{A,B},\eg{A}_{P,Q}\right]&=i\left(\eg{C}_{AP,QB}-\eg{C}_{AQ,PB}+\eg{C}_{BP,QA}-\eg{C}_{PA,BQ}\right)
+\frac{1}{2}\left(\eg{A}_{[A,[P,Q],B}+\eg{A}_{[B,[P,Q]],A}+i\eg{C}_{[P,\{A,B\}],Q}-\eg{C}_{[Q,\{A,B\}],P}\right) -\frac{1}{4}\eg{H}_{[[P,Q],\{A,B\}]} \\
\left[\eg{A}_{A,B},\eg{A}_{P,Q}\right] &= -i\left(\eg{A}_{QB,AP}+\eg{A}_{PA,BQ}+\eg{A}_{BP,QA}+\eg{A}_{AQ,PB}\right) +\frac{1}{2}\left(\eg{C}_{[B,[P,Q]],A}-\eg{C}_{[A,[P,Q]],B}+\eg{C}_{[P,[A,B]],Q}-\eg{C}_{[Q,[A,B]],P}\right)+\frac{i}{4}\eg{H}_{[[P,Q],[A,B]]} \\
\left[\eg{H}_P,\eg{H}_Q\right] &= -i\eg{H}_{[P,Q]} \\
\end{align*}
We first consider the first four commutators above together. These commutator each result in a linear combination of EEGs with (unsigned) indices from the set $\{P_1, P_2, Q_1, Q_2, P_1Q_1, P_1Q_2, P_2Q_1, P_2Q_2\}$. However, we also note that each two-index Pauli contains one of $\{P_1, P_2, Q_1, Q_2\}$ as an index. Therefore, each of the EEGs must have a DEM event class in $E_1$ or the class corresponding to $\emptyset$ (no event). 

We then consider the next two commutators. These commutations results in EEGs with products of three or four index Paulis in them. Again, we note that every one of these has one index from the set $\{P_1, P_2, Q_1, Q_2\}$, implying that these terms do not lead to an additional DEM event class. 

Finally, we consider the last three commutators. Most of the EEGs in the EEG decompositions for these commutators are covered by the cases above. But we also see the appearance of $H$ terms indexed by $P_1P_2Q_1Q_2$, up to sign. These \emph{can} lead to a new DEM event class. However, the existence of $A$ and $C$ terms with $P_1,P_2,$ and $Q_1,Q_2$ as indices implies the existence of the DEM event classes $\Delta(P_1), \Delta(P_2), \Delta(Q_1), \Delta(Q_2)$. This is perhaps concerning, but we note that the emergent term is $O(\epsilon^2)$, and because it is an $H$ term, it has no leading-order effect, so this term can only affect the DEM at $O(\epsilon^4)$.

Next, we show that given an end-of-circuit error generator $G$, a $k$th order expansion of $\exp(G)$ contains only EEGs whose DEM event classes are encoded by bit strings $e_1 \oplus \cdots \oplus e_k$, where $e_1, \dots, e_k$ are events that occur in the leading-order event, as bit strings. This result follows from the composition rules for EEGs (see \cite{miller2025simulation}). The composition of error generators $G_1$ and $G_2$ with indices $P_1, Q_1$ and $P_2, Q_2$ (for an $H$ or $S$ term, let the second index be equal to the first) is a linear combination of EEGs with (unsigned) indices from the set $\{P_1, P_2, Q_1, Q_2, P_1Q_1, P_1Q_2, P_2Q_1, P_2Q_2\}$. Any such error generator has DEM event class  $ D \in \{\Delta(R) \oplus \Delta(R') \mid R, R' \in \{P_1,P_2,Q_1,Q_2\}\}$. We have that $\Delta(P_1),\Delta(P_2),\Delta(Q_1),\Delta(Q_2) \in E_1$. We now need to show that this fact limits the set of possible DEM events. We start by proving a technical lemma about detector polarizations.

\begin{lemma}\label{lem:detector_products}
    Let $G$ be an EEG with DEM event class $D$. Let $P'$ be a detector satisfying $P \notin D$.
    Then $\beta(\psi, G, P'\prod_{P \in D}P) = \beta(\psi, G, \prod_{P \in D}P)$
\end{lemma}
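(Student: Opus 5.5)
Write $R = \prod_{P \in D} P$. We want $\beta(\psi, G, P'R) = \beta(\psi, G, R)$, where $P' \in \mathbb{D}\setminus D$. The idea is to treat $P'$ as a Pauli that commutes with every Pauli index of $G$ and stabilizes $\ket{\psi}$. Multiplying the observable by such an operator should then not change $\Tr(\,\cdot\, G[\ketbra{\psi}])$.

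First, I would record the two facts we need about $P'$.
\begin{enumerate}
\item $P'\ket{\psi}=\ket{\psi}$, since $P'$ is a detector with the sign convention $\langle P'\rangle_0=1$.
\item $P'$ commutes with every Pauli index of $G$. By the Definition of DEM event class, $D$ is the largest subset of $\mathbb{D}$ whose elements anticommute with the indices of $G$. I will read $P'\notin D$ (with the indices of $G$ all lying in class $D$, as in the definition of $\Delta$) as saying that $P'$ commutes with each index $Q_1,Q_2$ of $G$, so $\eta_{Q_i,P'}=0$.
\end{enumerate}
Fact 2 is the one step where the paper's wording of the Definition is slightly ambiguous. If the index pair could split, with one index commuting with $P'$ and the other anticommuting, then Lemma~\ref{lem:eeg_to_event} already gives $\beta=0$ for $C$/$A$ terms unless both indices anticommute with a given detector. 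I would handle that by a short side remark so the reduction still goes through.

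Next comes the core computation, done sector by sector with the explicit formulas from the proof of Lemma~\ref{lem:eeg_to_event}. For $S_Q$ we have $\Tr(P'R\,S_Q[\psi]) = \bra{\psi}QP'RQ\ket{\psi}-\bra{\psi}P'R\ket{\psi}$. Commuting $P'$ past $Q$ and letting it act on $\ket{\psi}$ (either to the left via $\bra{\psi}P' = \bra{\psi}$, since $P'$ is Hermitian) gives $\bra{\psi}QRQ\ket{\psi}-\bra{\psi}R\ket{\psi}$, which is $\beta(\psi,S_Q,R)$. For $C_{Q_1,Q_2}$ the terms are $\bra{\psi}Q_1P'RQ_2\ket{\psi}$, $\bra{\psi}Q_2P'RQ_1\ket{\psi}$ and $-\tfrac12\bra{\psi}\{\{Q_1,Q_2\},P'R\}\ket{\psi}$. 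In each one I move $P'$ to the outside, to the left through $Q_1$ or $Q_2$ (this uses fact 2), and absorb it into $\bra{\psi}$. The anticommutator term splits into products such as $Q_1Q_2P'R$ and $P'RQ_1Q_2$, and each is handled the same way. $A_{Q_1,Q_2}$ is identical with the signs changed, and $H_Q$ is the same with a single index. This yields $\beta(\psi,G,P'R)=\beta(\psi,G,R)$ term by term.

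I expect the only real obstacle to be the bookkeeping for fact 2 together with the anticommutator terms. In particular, $P'$ must commute with $R$ (it does, since detectors commute) so that $P'R$ is Hermitian and $P'$ can be placed adjacent to $\bra{\psi}$ or $\ket{\psi}$. Once that is checked, the result follows by linearity for general EEGs, and also for the composed terms of the Taylor expansion, whose indices by Lemma~\ref{lem:p_or_i} still commute with $P'$.
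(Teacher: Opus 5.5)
Your main argument is correct, but it reaches the conclusion by a different mechanism than the paper. The paper relies on the closed-form case formulas for $\beta$ from the proof of Lemma~\ref{lem:eeg_to_event}. Sector by sector it uses $[P'R,Q]=P'[R,Q]$, with $R=\prod_{P\in D}P$, to show that $P'R$ and $R$ have the same commutation pattern with the indices of $G$. It then reads off equality, because those formulas depend on the observable only through that pattern; the eigenvalue conditions on $Q_1Q_2\ket{\psi}$ do not involve the observable. You instead compute $\Tr(P'R\,G[\psi])$ directly, commute $P'$ past the indices, and absorb it into $\bra{\psi}$.

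Your route is more uniform. It only uses that $G$ is a sum of maps $\rho\mapsto A\rho B$ with $A,B$ commuting with $P'$. So all four sectors, and also the composed terms of the Taylor expansion, are handled at once, without the sign bookkeeping of Lemma~\ref{lem:eeg_to_event}. The paper's casework, in exchange, explicitly treats the situation in which $P'$ does not commute with every index of $G$.

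That situation is where your proposal is incomplete. Under the Definition of DEM event class in Appendix~\ref{app:theory}, $D$ consists of the detectors anticommuting with \emph{all} indices. So for $G=C_{Q_1,Q_2}$ or $A_{Q_1,Q_2}$ you can have $P'\notin D$ with $P'$ anticommuting with $Q_1$ and commuting with $Q_2$.

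Your side remark only disposes of one side of the identity. Since $Q_1$ and $Q_2$ have opposite commutation with $P'R$, you do get $\beta(\psi,G,P'R)=0$. But both anticommute with $R$ whenever $|D|$ is odd, so the commutation condition from Lemma~\ref{lem:eeg_to_event} does not give $\beta(\psi,G,R)=0$.

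The missing observation is that $P'$ stabilizes $\ket{\psi}$ and anticommutes with exactly one index, so
\[
\bra{\psi}Q_1Q_2\ket{\psi}=\bra{\psi}Q_1Q_2P'\ket{\psi}=-\bra{\psi}P'Q_1Q_2\ket{\psi}=-\bra{\psi}Q_1Q_2\ket{\psi}=0 .
\]
For any stabilizer Pauli $O$, every term of $\Tr(O\,G[\psi])$ reduces to a multiple of $\bra{\psi}Q_1Q_2\ket{\psi}$ or of its complex conjugate $\bra{\psi}Q_2Q_1\ket{\psi}$. Hence $G$ is invisible to all such observables, and both sides vanish. This is the content of the paper's case (a); with it added, your proof is complete.
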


\textbf{Proof of Lemma~\ref{lem:detector_products}:} We consider each type of EEG separately. If $G=H_Q$, then $\beta(\psi, G, P'\prod_{P \in D}P) = 0$ because $P'\prod_{P \in D}P$ is a detector. If $G = S_Q$, then $P' \notin D$ implies that $[P',Q] = 0$ (because $P'$ being a detector implies that $P'\ket{\psi}=\ket{\psi}$). Therefore,  \begin{equation}
    [P'\prod_{P \in D}P,Q] = [P',Q] \prod_{P \in D}P+ P’[\prod_{P \in D}P,Q] = P’[\prod_{P \in D}P,Q],
\end{equation}
which is 0 iff $[\prod_{P \in D}P,Q]=0$. Therefore, $\beta(\psi, G, P'\prod_{P \in D}P) = \beta(\psi, G, \prod_{P \in D}P)$.

We now consider $G=C_{Q_1, Q_2}$. The case of $G=A_{Q_1, Q_2}$ proceeds analogously. In this case, $P' \notin D$ implies that either (a) at least one of $[P’,Q_1] = 0$ or $[P’,Q_2] = 0$ or (b) $Q_1Q_2\ket{\psi} \neq \pm \ket{\psi}$ or $[Q_1,Q_2]=0$. We consider these cases in turn.

Case (a): It cannot be the case exactly one of $[P',Q_1]$ and $[P',Q_2]$ is nonzero unless $D=\emptyset$. But in this case, $\beta(\psi, G, P'\prod_{P \in D}P) = \beta(\psi, G, \prod_{P \in D}P)=0$. Therefore, we are left with considering the case where $[P',Q_1]=[P',Q_2]=0$. By the same reasoning as in the $S$-type EEG case above, $[P'\prod_{P \in D}P,Q_1] = P’[\prod_{P \in D}P,Q_1]$ and $[P'\prod_{P \in D}P,Q_2] = P’[\prod_{P \in D}P,Q_2]$. Therefore, $\beta(\psi, G, P'\prod_{P \in D}P) = \beta(\psi, G, \prod_{P \in D}P)$. 

Case (b): These conditions are independent of $P'$, and they imply $\beta(\psi, G, P'\prod_{P \in D}P)= \beta(\psi, G, \prod_{P \in D}P) = 0$.  \qed

To finish proving the theorem, we prove a final lemma, which will imply that the $O(\epsilon^k)$ approximation of the circuit error can be modeled by a DEM with only the events in $E_k$.

\begin{lemma}\label{lem:dem_event_set}
    Let $\mathbb{G}$ be a set of EEGs, and let $E = \{\eegev(D) \mid G \in \mathbb{G}\}$ be their corresponding DEM events. Then for any $D' \notin E$ the DEM for $1+\sum\epsilon_G G$ does not contain $D'$. Stated differently, the event set $E$ is sufficient for a DEM to model the error $1+\sum\epsilon_G G$ (allowing for the possibility of negative probabilities).
\end{lemma}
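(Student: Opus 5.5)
Our plan is to work entirely in the polarization (Walsh--Fourier) picture. A DEM with events $\{D_j\}$ and rates $\{p_j\}$ generates a detection-history distribution whose polarizations satisfy
\begin{equation}
\Big\langle \prod_{P\in S}P\Big\rangle = \prod_j \left(1-2p_j\right)^{\abs{S\cap D_j} \bmod 2}
\end{equation}
for every subset $S\subseteq\mathbb{D}$. Taking logarithms, $\log\langle\prod_{P\in S}P\rangle = \sum_j [\abs{S\cap D_j}\ \mathrm{odd}]\,\log(1-2p_j)$. Since the parity characters $S\mapsto(-1)^{\abs{S\cap D}}$ form a basis of functions on $\mathcal{P}(\mathbb{D})$, this linear map from event log-rates to log-polarizations is invertible. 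The DEM reproducing a given set of polarizations is therefore unique, allowing negative rates. To show that $D'$ is absent, it thus suffices to produce a DEM supported on $E$ that reproduces all polarizations of $1+\sum_G\epsilon_G G$ to the order considered.

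The first step is to compute the polarizations of the perturbed channel using linearity: $\langle\prod_{P\in S}P\rangle = 1+\sum_G \epsilon_G\,\beta(\psi,G,\prod_{P\in S}P)$. Lemma~\ref{lem:eeg_to_event} and Lemma~\ref{lem:beta_detector_products} describe each single EEG $G$ with class $\eegev(G)=D_G$. There is a constant $k_G$ with $\beta(\psi,G,\prod_{P\in S}P)=k_G$ when $\abs{S\cap D_G}$ is odd, and $0$ otherwise. Lemma~\ref{lem:detector_products} is the ingredient that allows $S$ to contain detectors outside $D_G$ without changing $\beta$. By induction on $\abs{S\setminus D_G}$, we strip those detectors one at a time, reducing to $S\cap D_G$, and then apply Lemma~\ref{lem:beta_detector_products} on $D_G$ itself. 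The outcome is that each polarization depends on $S$ only through the parities $\abs{S\cap D}$ for $D\in E$:
\begin{equation}
\Big\langle \prod_{P\in S}P\Big\rangle = 1+\sum_{D\in E}[\abs{S\cap D}\ \mathrm{odd}]\sum_{G:\,\eegev(G)=D}\epsilon_G k_G .
\end{equation}

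The second step is to exhibit the DEM. We set $K_D=\sum_{G:\eegev(G)=D}\epsilon_G k_G$ and choose rates $p_D$ so that $\sum_{D\in E}[\cdot]\log(1-2p_D)$ matches $\log(1+\sum_D[\cdot]K_D)$ to the working order. At first order we can simply take $p_D=-K_D/2$; the cross terms are higher order and are absorbed order by order. Because the target function is a combination of parity characters of sets in $E$, we claim it can be fit exactly by rates on $E$. Uniqueness then forces $p_{D'}=0$ for $D'\notin E$.

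The main obstacle is this last claim at orders beyond first. The logarithm of a sum of $E$-characters generally produces products of characters, which are characters of symmetric differences of sets in $E$, and these need not lie in $E$. We would first try to interpret the lemma as a statement about the linearized (first-order) DEM of the single-step error $1+\sum\epsilon_G G$. In that setting the fit is exact with support $E$, and higher-order symmetric differences are accounted for separately by the composition argument leading to $E_k$ in Theorem~\ref{thm:higher_order_events}. If the exact statement is needed instead, we would enlarge $E$ to its closure under symmetric differences, which is exactly how $E_k$ arises.
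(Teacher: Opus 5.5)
Your proof is correct when the lemma is read to first order (linearized logarithm), and that is also all the paper's own proof establishes. Your first step is the paper's. You reduce $\beta(\psi,G,\prod_{P\in S}P)$ to $\beta(\psi,G,\prod_{P\in S\cap D_G}P)$ with Lemma~\ref{lem:detector_products}, then use the parity structure from Lemma~\ref{lem:beta_detector_products}. (Both you and the paper need Lemma~\ref{lem:detector_products} with an arbitrary detector product in place of $\prod_{P\in D}P$; its proof goes through unchanged.)

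The second step differs. The paper linearizes the depolarizations, $\omega_D=-\ln\lambda_D\approx\sum_{D'\in E}\zeta(D',D'\cap D)$, and computes the attenuations $a_{D_y}$ by an explicit Walsh--Hadamard transform. It then splits the sum over $D_b$ into the parts inside and outside each $D'$, which kills every $D_y\not\subseteq D'$. A binomial counting identity finishes the job: odd-size subsets of $D'$ meet $D_y$ in an odd and in an even number of elements equally often, which kills $\emptyset\neq D_y\subsetneq D'$. You instead exhibit the solution $x_D=K_D$ for $D\in E$ and $x_D=0$ otherwise. You then invoke injectivity of $x\mapsto\sum_{D\neq\emptyset}[\,\abs{S\cap D}\text{ odd}\,]\,x_D$, which holds because the functions $\tfrac12(\chi_\emptyset-\chi_D)$, with $\chi_D(S)=(-1)^{\abs{S\cap D}}$, are linearly independent for nonempty $D$. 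Your route is shorter and isolates the one fact that matters: the linearized log-polarization lies in the span of the parity indicators of sets in $E$. The paper's route gives an explicit attenuation formula, Eq.~\eqref{eq:attens_result}, which feeds directly into dense DEM estimation and yields the rates, not just their support.

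The obstacle you flagged is real: the exact statement is false. Take $\mathbb{G}=\{S_{Q_1},S_{Q_2}\}$ with $\toevent(Q_i)=\{P_i\}$. For $1+\epsilon_1S_{Q_1}+\epsilon_2S_{Q_2}$ the polarizations are $\lambda_{P_1}=1-2\epsilon_1$, $\lambda_{P_2}=1-2\epsilon_2$ and $\lambda_{P_1P_2}=1-2\epsilon_1-2\epsilon_2$. The unique DEM on events $\{P_1\},\{P_2\},\{P_1,P_2\}$ satisfies $(1-2p_{\{P_1,P_2\}})^2=\lambda_{P_1}\lambda_{P_2}/\lambda_{P_1P_2}$, so $p_{\{P_1,P_2\}}\approx-\epsilon_1\epsilon_2\neq0$ even though $\{P_1,P_2\}\notin E$. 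You should therefore commit to the first-order reading from the outset; the $\approx$ in the paper's $\omega_D$ shows it does the same. Delete your sentence saying the cross terms are ``absorbed order by order'' into rates on $E$: they cannot be. They are exactly the symmetric differences that populate $E_k$ in Theorem~\ref{thm:higher_order_events}.
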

\emph{Proof} We need to compute $\langle P \rangle$ for every product of detector Paulis $P$---i.e., we compute all polarizations needed to describe a DEM with detector set $\mathbb{D}$: 
\begin{align}
    \langle \prod_{D} P \rangle & \approx 1 + \sum_{D' \in E}\sum_{G \in \mathbb{G}_{D'}} \beta(\psi, G, \prod_{D} P) \epsilon_{G} \label{eq:pol1} \\
    & = 1 + \sum_{D' \in E}\sum_{G \in \mathbb{G}_i} \beta(\psi, G, \prod_{D \cap D'} P) \epsilon_{G}, \label{eq:pol2}
\end{align}
where we have used Lemma~\ref{lem:detector_products} to go from Eq.~\eqref{eq:pol1} to Eq.~\eqref{eq:pol2}.
Eq.~\eqref{eq:pol2} implies that if $D' \subseteq D$, then $\langle P \rangle \neq \langle P' \rangle$ iff there exists some term that impacts a detector in $D \setminus D'$. 
Furthermore, note that if $|D \cap D'|$ is even, then $\beta(\psi, G, \prod_{D \cap D'} P)=0$ for all $G \in \mathbb{G}_{D'}$, and therefore $\zeta(D',D' \cap D)=0$. 

We now use these polarizations to construct a DEM using the method for dense DEM estimation from \cite{blumekohout2025detector}.
Using our approximate polarizations, the approximate depolarizations are 
\begin{align}
    \omega_D & = -\ln{\lambda_D} \approx \sum_{D' \in E}\sum_{G \in \mathbb{G}_{D'}} \beta(\psi, G, \prod_{D \cap D'} P) \epsilon_{G},
    = \sum_{D' \in E} \zeta(D', D' \cap D),
\end{align}
where we define
\begin{equation}
    \zeta(D', D' \cap D) = \sum_{G \in \mathbb{G}_{D'}} \beta(\psi, G, \prod_{D \cap D'} P) \epsilon_{G}. 
\end{equation}
The resulting event attenuations are given by $\boldsymbol{a} = (-\frac{2}{2^N})H\boldsymbol{\omega}$, where $\boldsymbol{\omega}$ is the vector of all depolarizations. For a $\abs{\mathbb{D}}$-bit string $b$, let $D_b = \{D_i \in \mathbb{D} \mid b_i=1\}$ (where we have picked an ordering of the elements of $\mathbb{D}$). The attenuations can then be expressed as 
\begin{align}
    a_{D_y} & = \left(-\frac{2}{2^N}\right) \sum_{b \in \mathbb{Z}_2^{\abs{\mathbb{D}}}} (-1)^{|D_b \cap D_y|} \omega_b \\
    & = \left(-\frac{2}{2^N}\right) \sum_{\substack{D_b : b \in \mathbb{Z}_2^{\abs{\mathbb{D}}}}} (-1)^{|D_b \cap D_y|} \sum_{D' \in E} \zeta(D', D' \cap D_b) \\
    & = \left(-\frac{2}{2^N}\right) \sum_{D' \in E} \sum_{D_b \subseteq D'} (-1)^{|D_b \cap D_y|} \zeta(D', D_b) \sum_{D_{b'} \subseteq \mathbb{D} \setminus D'} (-1)^{|D_{b'} \cap D_y|}.
\end{align}
Because 
\begin{equation}
    \sum_{D_{b'} \subseteq \mathbb{D} \setminus D'} (-1)^{|D_{b'} \cap D_y|} = 
\begin{cases}
    1 & (\mathbb{D} \setminus D') \cap D_y = \emptyset \\
    0 & \textrm{otherwise}
\end{cases}, 
\end{equation}
it follows that 
\begin{align}
    a_{D_y} & = \left(-\frac{2}{2^N}\right) \sum_{\substack{D' \in E \\ D_y \subseteq D'}} \sum_{D_b \subseteq D'} (-1)^{|D_b \cap D_y|} \zeta(D',D_b). \label{eq:attens_result} 
\end{align}
Eq.~\eqref{eq:attens_result} implies if that there is no $D' \in E$ such that $D_y \subseteq D'$, then $a_{D_y}=0$. It remains to  show that if there is some $D' \in E$ such that $D_y \subseteq D'$, but $D_y \notin E$, then $a_{D_y}=0$. Using the fact that $\zeta(D',D_b)=0$ if $|D_b|$ is even,
\begin{align}
    a_{D_y} & = \left(-\frac{2}{2^N}\right) \sum_{\substack{D' \in E \\ D_y \subseteq D'}} \sum_{\substack{D_b \subseteq D' \\ |D_b|\equiv 1 \textrm{ (mod 2)}}} (-1)^{|D_b \cap D_y|} \zeta(D',D_b). \label{eq:aDy_intermediate}
\end{align}
Suppose that $D' \setminus D_y$ is nonempty. A set of detectors $D_b \subseteq D'$ with odd cardinality can be formed in multiple ways, which we now count. First, $D_b$ can contain an odd number of the elements of $D_y$ and an even number of the elements of $D' \setminus D_y$. The total number of ways this can be done is
\begin{equation}
    \sum_{i=0}^{\lfloor\nicefrac{|D'\setminus D_y|}{2}\rfloor} {|D' \setminus D_y| \choose 2i}\sum_{j=0}^{\lfloor\nicefrac{|D_y|}{2}\rfloor}{|D_y| \choose 2j+1}
\end{equation}
Alternatively, an even number of elements of $D_y$ and an odd number of the elements of $D' \setminus D_y$, and the total number of ways this can be done is
\begin{equation}
    \sum_{i=0}^{\lfloor\nicefrac{|D'\setminus D_y|}{2}\rfloor} {|D' \setminus D_y| \choose 2i+1}\sum_{j=0}^{\lfloor\nicefrac{|D_y|}{2}\rfloor}{|D_y| \choose 2j}.
\end{equation}
Applying the fact that when $D' \setminus D_y \neq \emptyset$ and $D_y \neq \emptyset$,
\begin{equation}
    \sum_{i=0}^{\lfloor\nicefrac{|D'\setminus D_y|}{2}\rfloor} {|D' \setminus D_y| \choose 2i+1}\sum_{j=0}^{\lfloor\nicefrac{|D_y|}{2}\rfloor}{|D_y| \choose 2j} = \sum_{i=0}^{\lfloor\nicefrac{|D'\setminus D_y|}{2}\rfloor} {|D' \setminus D_y| \choose 2i}\sum_{j=0}^{\lfloor\nicefrac{|D_y|}{2}\rfloor}{|D_y| \choose 2j+1},
\end{equation}
to Eq.~\eqref{eq:aDy_intermediate}, we conclude that $a_{D_y}=0$ unless  $D' \setminus D_y = \emptyset$ or $D_y = \emptyset$. This implies that the only nonzero attenuations are $a_{D_y}$ for $D_y \in E$. Because the probability of $D_y$ is $p_{D_y} = \frac{1}{2}(1-\ln a_{D_y})$, the only nonzero probabilities in the DEM correspond to $D_y \in E$. 
\qed

Lemma~\ref{lem:dem_event_set}, applied to a $k$th order expansion of the circuit error,  implies the theorem. \qed

\section{Simulation Details}\label{app:sims}
In this appendix, we provide additional details on our simulations.

\subsection{Statevector Simulation with Coherent Error Models}

We simulated $2$ rounds of surface code syndrome extraction with random coherent error models, comparing our DEM creation method to exact statevector simulations run using  Qiskit ~\cite{qiskit2024} [Fig~\ref{fig:fig1_new}(b)]. We show additional plots from these simulations in Fig.~\ref{fig:h_error_sims}.  We generated 40 error models with random chosen local (crosstalk-free) Hamiltonian errors for each gate (CNOT and H), selecting the error rates so that the two-qubit gate infidelity was approximately $p$, and single-qubit gate infidelity was approximately $0.1p$ for each model, and we varied $p$ from $0.001$ to $0.005$. We generated these error models to be sparse in the EEG basis. For each model, we picked $5$ uniform random two-qubit $H$-type error generators for the CNOT gate's error, and we samples their error rates $h_i$ such that $\sum_{i} h_i^2 = p$---i.e., we sample rates such that all the CNOT has a fixed generator infidelity $p$. For the single-qubit gates, we sample random rates for $h_x, h_y, h_z$ such that $h_x^2+h_y^2+h_z^2=0.1p$. 

\begin{figure}
    \centering
    \includegraphics{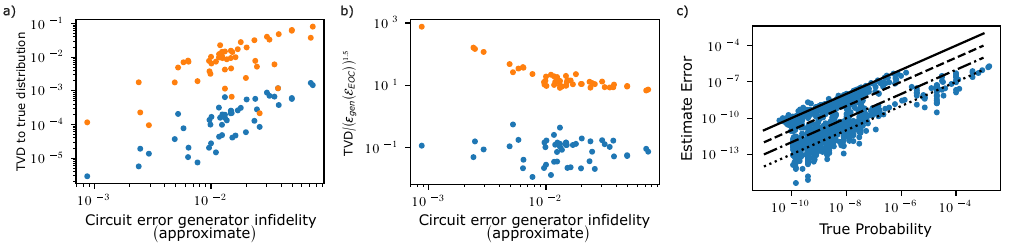}
    \caption{\textbf{Surface codes with random coherent error models.} We created DEMs for $2$ rounds of $d=3$ surface code syndrome extraction with $40$ random crosstalk-free coherent error models. We use these DEMs to compute the predicted probability distribution over detector bit strings, and compare this distribution to the exact distribution, computed via statevector simulation a) Total variation distance (TVD) between the DEM-predicted and true probability distributions for detector bit strings, and the TVD between the distribution estimated from the Pauli-twirled error model and the true distribution. The DEM derived directly from the coherent error model attains a TVD that is typically around $100$ times smaller than the Pauli-twirled model. b) The TVD scales with $\epsilon_{gen}^{1.5}$, where $\epsilon_{gen}$ is the generator infidelity of the circuit's error, indicating that the DEM is accurate to $O(\epsilon^2)$. c) An example of the error in the estimate of one of the $40$ probability distributions produced by a DEM generated with our method. Our method typically estimates the probability of high-probability detector bit strings to within about $1\%$ relative to the true probability, but has varied results on low-probability bit strings.}
    \label{fig:h_error_sims}
\end{figure}

\subsection{Steane Code with General CPTP Error}

We tested our method's ability to model detection statistics for $2-3$ rounds of bare ancilla syndrome extraction for the $[[7,1,3]]$ Steane code with random CPTP error models. We sample from the true probability distribution for detector statistics using the Logical Qubit Simulator (LoQS) \cite{loqs_presentation}. In our simulations of $2$ rounds of syndrome extraction, we generated $200$ random sparse CPTP error models, and we sample $1000$ detection histories for each model using LoQS. We include $12$ detectors for the syndrome measurements and one for the terminating logical $Z$ measurement. To generate each random sparse CPTP error model for our simulations, we do the following. To generate the error model for the two-qubit gates (CNOT), we picked $5$ uniform random two-qubit $H$-type error generators and 5 uniform random $S$-type EEGs, and we sample error rates $h_i$ such that $\sum_{i} h_i^2 = f_2$---i.e., we sample rates such that all the CNOT has a fixed generator infidelity $f_2$. For the single-qubit gates, we sample random rates for $h_x, h_y, h_z$ such that $h_x^2+h_y^2+h_z^2=f_1$

The number of samples we generate with LoQS is small do to the time cost of simulation. As a result, we see most detection histories $<1$ time in expectation in our samples, and therefore we cannot reliably estimate the probability distribution over detection histories from these samples alone. Instead, we use hypothesis testing with a loglikelihood ratio test statistic to check if our DEMs model the sampled data. Specifically, we do a one-sided 5\% significance test to test if there is statistical evidence that the LoQS samples were drawn from a different distribution than the probability distribution described by the DEM. We can perform this loglikelihood test for any coarse-graining of the probability distribution. Given a coarse-graining of the DEM probability distribution $q_{\textrm{DEM}}$, we compute 
\begin{align}
    \mathcal{L}_0 & = \sum n_x \ln(q_x) \\
    \mathcal{L}' & = \sum n_x \ln(n_x/N) \\
    LLR & = -2(\mathcal{L}_0-\mathcal{L}'),
\end{align}
where $n_x$ is the number of counts of observing detector bit string $x$, and $N=\sum_x n_x$ is the total number of counts.

In our Steane code simulations, the DEMs are small enough that we can compute the exact probability distribution over detection histories that the DEM produces. We perform our LLR tests for the full probability distribution as well as one coarse-grained distribution: the probability distribution over all detection bit strings corresponding to a single fault in the standard depolarizing circuit-level noise model for the circuit, with all other outcomes combined together (corresponding to a single bin for ``high-weight'' events). We use a bootstrap to approximate the distribution of the LLR under the assumption that the DEM is the true model for the data, and we use this bootstrapped distribution to determine the $95\%$ threshold of the LLR for our tests. 

Of 200 random models, 190 models (95\% of the models) give a score below the $95$th percentile range of the bootstrapped distribution for the LLR of the entire distribution, and 191 models (95.5\% of the models) lie below the $95$th percentile for the distribution based upon the DEM events caused by low-weight events, showing that our DEMs accurately predict the results of these small-scale QEC circuits. For 50 additional random models, we repeated this test with 5000 shots. We found that 48 of the models (96\%) were below the $95$th percentile for the LLR statistic with the full probability distribution, and 47 of the models (94\%) were were below the $95$th percentile for the LLR statistic with the coarse-grained distribution. 

\subsection{Pauli-twirled error models}
In multiple simulations, we generate and simulate Pauli-twirled versions of the more general error models we simulate. To generate a Pauli-twirled model, we compute the process matrix representation of the post-gate error for each gate, and remove all non-diagonal components to get a Pauli channel. We use a Walsh-Hadamard transform to get the Pauli error rates for the channel, then use those Pauli error rates to generate the detector error model for the circuit of interest with the Pauli-twirled error. Note that we do not attempt to express the Pauli error channel using an error generator. In fact, this procedure can result in Pauli channels that are not CP-divisible, and therefore not representable as an exponentiated error generator. We use Stim to generate the DEM for the Pauli model, to bypass the need for an EEG representation. This choice also has a side effect---each error in the Pauli error channel is treated as independent in Stim when generating the DEM.

\subsection{Gross Code Memory Simulations}
\begin{figure}
    \centering
    \includegraphics{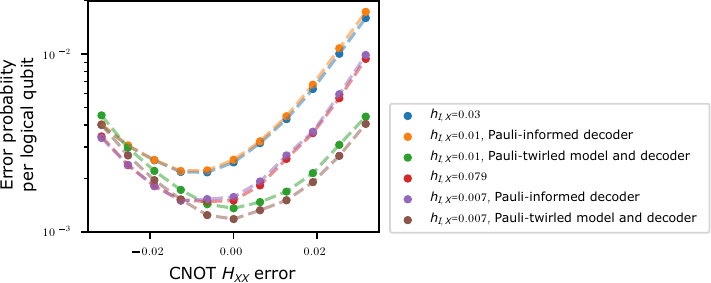}
    \caption{\textbf{Gross code logical error probability with coherent errors and different decoder priors.} Logical error probabilities for varying $H_{XX}$ error on CNOT gates for two fixed strengths of $H_X$ error on idling qubits. In the regime where errors cancel, the Pauli-twirled model overestimates the logical error rate. We also compared decoding with the beam decoder using two different error priors---one from the DEM generated by our method, and the other from the DEM generated from the Pauli twirled model. We find no significant difference between the performance of these two decoder priors.}
    \label{fig:app_bb}
\end{figure}

Here we provide additional details of our simulations of syndrome extraction for the gross code presented in Figure~\ref{fig:bb}. In Figure~\ref{fig:app_bb}, we show the logical error probability per logical qubit for two strengths of $H_X$ error on idle qubits during measurement. We see that the logical error probability is minimized at nonzero CNOT $H_{XX}$ error, as this leads to coherent error cancellation. The minimum depends on the strength of the coherent idle error. In contrast, the Pauli twirled error model always predicts that logical error probability is minimized when the CNOT infidelity is minimized (i.e., when there is no $H_{XX}$ error). 

We also compare logical error probabilities when weighting the beam decoder's priors with the DEM generated by our method to weighting the decoder's priors with the DEM generated from the Pauli twirled model. We see no significant difference in performance between these two decoders. 

\subsection{Simulation Error Model Details}
\subsubsection{Surface code error model}
To generate random error models for our surface code simulations [Fig.~\ref{fig:surface_thresholds}(a)], we selected $10$ random S-type EEGs, $3$ random H-type EEGs, 2 random A-type EEGs, and 2 random C-type EEGs from the 2-qubit EEGs to model the CZ error, and sampled random values for their rates. For $S$, $C$, and $A$ terms, we do this by sampling a random matrix whose elements have mean $6 \times 10^{-5}$ and standard deviation $1.3 \times 10^{-5}$, and use it to construct a positive semidefinite tensor of error rates (using pyGSTi \cite{Nielsen2020-rd}. We sample $H$ EEG rates from a normal distribution with mean $0$ and standard deviation $0.006$. We model $H$ gate error as being single-qubit depolarizing error with infidelity $6.6\times 10^{-6}$, and include pre-measurement/post-reset bit-flip error of $0.001$.

In Table~\ref{tab:surface}, we present the parameters of the error models used in Fig.~\ref{fig:surface_thresholds}(b).

\begin{table}[h]
    \centering
    \caption{\textbf{Error models for surface code simulations.} We simulated rotated surface code syndrome extraction with error model families defined by a parameter $p$, where each family had a different amount of coherent error determined by a parameter $h$. In our simulations, we varied $h$ between $0$ and $1$, and we varied $p$ to identify the threshold for the corresponding single-parameter error model family.}\label{tab:surface}
    \begin{tabular}{|c|c|c|}
        \hline
        Gate Type & Error Type & Error Rate \\ \hline
        H & (H, Z) & \(\sqrt{0.0001p \cdot h}\) \\ \hline
        H & (H, X) & \(\sqrt{0.0001p \cdot h}\) \\ \hline
        CNOT & (H, IX) & \(\sqrt{0.01p \cdot h}\) \\ \hline
        CNOT & (H, ZI) & \(\sqrt{0.01p \cdot h}\) \\ \hline
        CNOT & (H, ZX) & \(\sqrt{0.01p \cdot h}\) \\ \hline
        H & (S, Z) & \(0.0001p + 0.0001p \cdot (1 - h)\) \\ \hline
        H & (S, Y) & \(0.0001p\) \\ \hline
        H & (S, X) & \(0.0001p \cdot (1 - h)\) \\ \hline
        CNOT & (S, IX) & \(\frac{0.01p}{15} + 0.01p \cdot (1 - h)\) \\ \hline
        CNOT & (S, IY) & \(\frac{0.01p}{15}\) \\ \hline
        CNOT & (S, IZ) & \(\frac{0.01p}{15}\) \\ \hline
        CNOT & (S, XX) & \(\frac{0.01p}{15}\) \\ \hline
        CNOT & (S, XY) & \(\frac{0.01p}{15}\) \\ \hline
        CNOT & (S, XZ) & \(\frac{0.01p}{15}\) \\ \hline
        CNOT & (S, YX) & \(\frac{0.01p}{15}\) \\ \hline
        CNOT & (S, YY) & \(\frac{0.01p}{15}\) \\ \hline
        CNOT & (S, YZ) & \(\frac{0.01p}{15}\) \\ \hline
        CNOT & (S, ZX) & \(\frac{0.01p}{15} + 0.01p \cdot (1 - h)\) \\ \hline
        CNOT & (S, ZY) & \(\frac{0.01p}{15}\) \\ \hline
        CNOT & (S, ZZ) & \(\frac{0.01p}{15}\) \\ \hline
        CNOT & (S, XI) & \(\frac{0.01p}{15}\) \\ \hline
        CNOT & (S, YI) & \(\frac{0.01p}{15}\) \\ \hline
        CNOT & (S, ZI) & \(\frac{0.01p}{15} + 0.01p \cdot (1 - h)\) \\ \hline
        Pre-measurement error & (S, X) & 0 \\ \hline
        State preparation/post-measurement error & (S, X) & 0 \\ \hline
    \end{tabular}
\end{table}
 
\subsubsection{Gross code error model}
In Table~\ref{tab:bb}, we present the parameters of the error models used in Fig.~\ref{fig:bb}.

\begin{table}[h]
    \centering
    \caption{\textbf{Error models for BB code simulations.} We simulated cultivation with varied amounts of coherent error, as parameterized by two parameters, $h_1$ and $h_2$. We used $p=0.8$, varied $h_1$ between $-1$  and $1$, and we varied $h_2$ between $0$ and $1$.}\label{tab:bb}
    \begin{tabular}{|l|l|l|}
        \hline
        Gate Type & Error Type & Error Rate \\ \hline
        I & (S, X) & \(0.0008\) \\ \hline
        I & (S, Y) & \(0.0008\) \\ \hline
        I & (S, Z) & \(0.0008\) \\ \hline
        I & (H, X) & \(\sqrt{0.00024} \cdot h_{2}\) \\ \hline
        CNOT & (H, XX) & \(\sqrt{0.0008} \cdot h_{1}\) \\ \hline
        H & (S, Z) & \(0.0008\) \\ \hline
        H & (S, Y) & \(0.0008\) \\ \hline
        H & (S, X) & \(0.0008\) \\ \hline
        CNOT & (S, IX) & \(0.0008\) \\ \hline
        CNOT & (S, IY) & \(0.0008\) \\ \hline
        CNOT & (S, IZ) & \(0.0008\) \\ \hline
        CNOT & (S, XI) & \(0.0008\) \\ \hline
        CNOT & (S, YI) & \(0.0008\) \\ \hline
        CNOT & (S, ZI) & \(0.0008\) \\ \hline
        Pre-measurement error  & (S, X) & 0.001 \\ \hline
        State preparation/post-measurement error & (S, X) & \(0\) \\ \hline
    \end{tabular}
\end{table}

\subsubsection{Cultivation error model}
In Table~\ref{tab:cultiv}, we present the parameters of the error models used in our simulations of magic state cultivation (with an $S$ gate proxy). Fig.~\ref{fig:surface_thresholds}. We fix a $CZ$ gate generator infidelity of $0.00165$ and single-qubit gate generator infidelities of $0.00042$ and vary the fraction of Z-type error ($ZI$, $IZ$, and $ZZ$ error) on CZ gates and Z error on single-qubit gates gates) that is coherent ($H_{ZI}$,$H_{IZ}$,$H_{ZZ}$) versus Pauli stochastic ($S_{ZI}$,$S_{IZ}$,$S_{ZZ}$).

\begin{table}[h]
    \centering
    \caption{\textbf{Error models for cultivation simulations.} We simulated cultivation with varied amounts of coherent error, as parameterized by a single parameter $h$. We varied the value of $h$ from 0 to 0.35.}\label{tab:cultiv}
    \begin{tabular}{|c|c|c|}
        \hline
        Gate Type & Error Type & Parameter \\ \hline
        Non-idle single-qubit gates & (S, X) & \(0.1 \times 10^{-3} \cdot 0.6\) \\ \hline
        Non-idle single-qubit gates & (S, Z) & \(0.4 \times 10^{-3} \cdot 0.6 \cdot (1 - h) + 0.1 \times 10^{-3} \cdot 0.6\) \\ \hline
        Non-idle single-qubit gates & (S, Y) & \(0.1 \times 10^{-3} \cdot 0.6\) \\ \hline
        Non-idle single-qubit gates & (H, Z) & \(\sqrt{0.4 \times 10^{-3} \cdot h \cdot 0.6}\) \\ \hline
        I & (S, X) & \(0.1 \times 10^{-3} \cdot 0.6\) \\ \hline
        I & (S, Z) & \(0.4 \times 10^{-3} \cdot 0.6 \cdot (1 - h) + 0.1 \times 10^{-3} \cdot 0.6\) \\ \hline
        I & (S, Y) & \(0.1 \times 10^{-3} \cdot 0.6\) \\ \hline
        I & (H, X) & \(0\) \\ \hline
        I & (H, Z) & \(\sqrt{0.4 \times 10^{-3} \cdot h \cdot 0.6}\) \\ \hline
        I & (H, Y) & \(0\) \\ \hline
        CZ & (S, IX) & \(5 \times 10^{-5} \cdot 0.6\) \\ \hline
        CZ & (S, IY) & \(5 \times 10^{-5} \cdot 0.6\) \\ \hline
        CZ & (S, IZ) & \(5 \times 10^{-4} \cdot 0.6 \cdot (1 - h) + 5 \times 10^{-5} \cdot 0.6\) \\ \hline
        CZ & (S, XI) & \(5 \times 10^{-5} \cdot 0.6\) \\ \hline
        CZ & (S, YI) & \(5 \times 10^{-5} \cdot 0.6\) \\ \hline
        CZ & (S, ZI) & \(5 \times 10^{-4} \cdot 0.6 \cdot (1 - h) + 5 \times 10^{-5} \cdot 0.6\) \\ \hline
        CZ & (S, XX) & \(5 \times 10^{-5} \cdot 0.6\) \\ \hline
        CZ & (S, XY) & \(5 \times 10^{-5} \cdot 0.6\) \\ \hline
        CZ & (S, XZ) & \(5 \times 10^{-5} \cdot 0.6\) \\ \hline
        CZ & (S, YX) & \(5 \times 10^{-5} \cdot 0.6\) \\ \hline
        CZ & (S, YY) & \(5 \times 10^{-5} \cdot 0.6\) \\ \hline
        CZ & (S, YZ) & \(5 \times 10^{-5} \cdot 0.6\) \\ \hline
        CZ & (S, ZX) & \(5 \times 10^{-5} \cdot 0.6\) \\ \hline
        CZ & (S, ZY) & \(5 \times 10^{-5} \cdot 0.6\) \\ \hline
        CZ & (S, ZZ) & \(1 \times 10^{-3} \cdot 0.6 \cdot (1 - h) + 5 \times 10^{-5} \cdot 0.6\) \\ \hline
        CZ & (H, ZI) & \(\sqrt{5 \times 10^{-4} \cdot 0.6 \cdot h}\) \\ \hline
        CZ & (H, IZ) & \(\sqrt{5 \times 10^{-4} \cdot 0.6 \cdot h}\) \\ \hline
        CZ & (H, ZZ) & \(\sqrt{(1 \times 10^{-3}) \cdot 0.6 \cdot h}\) \\ \hline
        Pre-measurement error & (S, X) & 0.0006 \\ \hline
        State preparation/post-measurement error & (S, X) & 0.0006 \\ \hline
    \end{tabular}
\end{table}
 %TC:endignore
\end{document}